\newcommand{\R}{\mathbb{R}}
\newcommand{\bra}[1]{\langle #1|}
\newcommand{\ket}[1]{|#1\rangle}
\newcommand{\braket}[2]{\langle #1|#2\rangle}
\newcommand{\myIn}{{\mathcal{X}}}
\newcommand{\myin}{{x}}
\newcommand{\mysin}{{x'}}
\newcommand{\myOut}{{\mathcal{Y}}}
\newcommand{\myout}{y}
\newcommand{\myChv}[1]{W_{#1}}
\newcommand{\myCh}[2]{W_{#1}(#2)}
\newcommand{\myChn}[2]{\bm{W}_{#1}(#2)}
\newcommand{\myChnv}[1]{\bm{W}_{#1}}
\newcommand{\Esp}{E_{\text{sp}}}
\newcommand{\Eex}{E_{\text{ex}}}
\newcommand{\Ex}{E_{\text{x}}}
\newcommand{\Pe}{\mathsf{P}_{\text{e}}}
\newcommand{\Pem}{\mathsf{P}_{\text{e}|m}}
\newcommand{\Pemax}{\mathsf{P}_{\text{e,max}}}
\DeclareMathOperator{\Tr}{Tr}
\DeclareMathOperator*{\argmin}{arg\,min}
\newtheorem{theorem}{Theorem}
\newtheorem{remark}{Remark}
\newtheorem{corollary}{Corollary}
\newcounter{forex}[section]
\begin{document}

\title
{Lower Bounds on the Probability of Error for Classical and Classical-Quantum Channels}

\author{Marco~Dalai,~\IEEEmembership{Member,~IEEE}% <-this % stops a space
\thanks{M. Dalai is with the Department
of Information Engineering, University of Brescia, Italy, e-mail: marco.dalai@ing.unibs.it

This paper was presented in part at ISIT 2012 \cite{dalai-ISIT-2012}  and at ISIT 2013 \cite{dalai-ISIT-2013a}, \cite{dalai-ISIT-2013b}.

}
}

\maketitle

\begin{abstract}

In this paper, lower bounds on error probability in coding for discrete classical and classical-quantum channels are studied. The contribution of the paper goes in two main directions: i) extending classical bounds of Shannon, Gallager and Berlekamp to classical-quantum channels, and ii) proposing a new framework for lower bounding the probability of error of channels with a zero-error capacity in the low rate region. 
The relation between these two problems is revealed by showing that Lov\'asz' bound on zero-error capacity emerges as a natural consequence of the sphere packing bound once we move to the more general context of classical-quantum channels.
A variation of Lov\'asz' bound is then derived to lower bound the probability of error in the low rate region by means of auxiliary channels. As a result of this study, connections between the Lov\'asz theta function, the expurgated bound of Gallager, the cutoff rate of a classical channel and  the sphere packing bound for classical-quantum channels are established.
\end{abstract}

\begin{IEEEkeywords}
Reliability function, sphere packing bound, R\'enyi divergence, quantum Chernoff bound, classical-quantum channels, Lov\'asz theta function, cutoff rate.
\end{IEEEkeywords}

\section{Introduction}
\label{sec:Intro}
This paper touches some topics in sub-fields of information theory that are usually of interest to different communities. For this reason, we introduce this work with an overview of the different contexts. 

\subsection{Classical Context}
\label{sec:Intro-classical}
One of the central topics in coding theory is the problem of bounding the probability of error of optimal codes for communication over a given channel. In his 1948 landmark paper \cite{shannon-1948}, Shannon introduced the notion of channel capacity, which  represents the largest rate at which information can be sent through the channel with vanishing probability of error. This means that, at rates strictly smaller than the capacity $C$, communication is possible with a probability of error that vanishes with increasing block-length. In the following years, an important refinement of this fundamental result was obtained in \cite{feinstein-1955}, \cite{shannon-1957}, \cite{elias-1955}. In particular, it was proved that the probability of error $\Pe$ for the optimal encoding strategy at rates below the capacity vanishes exponentially fast in the block-length, a fact that we can express as
\begin{equation*}
\Pe\approx e^{-n E(R)},
\end{equation*}
where $\Pe$  is the probability of error, $n$  is the block-length and $E(R)$  is a function of the rate $R$, called \emph{channel reliability}, which is positive (possibly infinite) for all rates  smaller than the capacity. While Shannon's theorem made the evaluation of the capacity relatively simple, determining the function $E(R)$ soon turned out to be a very difficult problem. 

As Shannon himself first observed and studied \cite{shannon-1956}, for a whole class of channels, communication is possible at sufficiently low but positive rates $R$ with probability of error precisely equal to zero, a fact that is usually described by saying that function $E(R)$ is infinite at those rates. Shannon defined the zero-error capacity $C_0$ of a channel as the supremum of all rates at which communication is possible with probability of error exactly equal to zero. This problem soon appeared as one of a radically different nature from that of determining the traditional capacity. The zero-error capacity only depends on the confusability graph of the channel, and determining its value is usually considered to be a problem of a combinatorial nature rather than of a probabilistic one \cite{korner-orlitsky-1998}. As a consequence, since $C_0$  is precisely the smallest value of $R$ for which $E(R)$ is finite, it is clear that determining the precise value of $E(R)$ for general channels is expected to be a problem of exceptional difficulty. The first bounds to $C_0$ were obtained by Shannon himself \cite{shannon-1956}. In particular, he gave the first non-trivial upper bound in the form $C_0\leq C_{FB}$, where $C_{\text{FB}}$ is the zero-error capacity with feedback\footnote{We avoid the subscript `0' in the feedback case since it is known that the ordinary capacity is not improved by feedback \cite{shannon-1956}, \cite{cover-thomas-book}.}, which he was able to determine exactly by means of a clever combinatorial approach.

In the following years, works by Fano \cite{fano-book}, Shannon, Gallager and Berlekamp \cite{shannon-gallager-berlekamp-1967-1}, \cite{shannon-gallager-berlekamp-1967-2} were devoted to the problem of bounding the  function $E(R)$ for general discrete memoryless channels. The function could be determined exactly for all rates  larger than some critical rate $R_{\text{crit}}$, but no general solution could be found for lower rates, something that was however surely expected in light of the known hardness of even determining the value $C_0$ at which $E(R)$ must diverge. An important result, based on large deviation techniques in probability theory, was the so called sphere packing upper bound, that is, the determination of a function $\Esp(R)$ such that $E(R)\leq \Esp(R)$ for all rates $R$. The smallest rate $R_\infty$ for which $\Esp(R)$ is finite is clearly also an upper bound to $C_0$, and it turned out, quite nicely, that $R_\infty=C_{\text{FB}}$  (whenever $C_{\text{FB}}>0$). So, the same bound obtained by Shannon with a combinatorial approach based on the use of feedback could also be obtained indirectly from a bound based on a probabilistic argument. We may say that upper bounds to $E(R)$ and upper bounds to $C_0$ were somehow  ``coherent''. There was however only a partial coherence. In fact, channels with the same confusability graph can have different $R_\infty$. This means that the tightest bound to $C_0$ for a given channel, obtained by minimizing $R_\infty$ over all channels with the same confusability graph, could be smaller than the rate $R_\infty$ of that channel. In that case, in a range of values of the rate $R$, no upper bound to $E(R)$ was available even though this quantity was already known to be finite.

The situation remained as such until 1979, when Lov\'asz published his ground-breaking paper \cite{lovasz-1979}. Lov\'asz obtained a new upper bound to $C_0$ based on his theta function $\vartheta$ which, among other things, allowed him to precisely determine the capacity of the pentagon, the simplest graph for which Shannon was not able to determine the capacity. Lov\'asz' interest for this problem, however, came from a purely graph theoretic context, and his approach was combinatorial in nature, \emph{apparently} very different from the probabilistic techniques previously used in channel coding theory. 

Lov\'asz' contribution is usually considered a clear indication that bounding $C_0$ is a problem that must be attacked with techniques developed under the context of combinatorics rather than under the probabilistic one.
Links between Lov\'asz' contribution and classical coding theory results (see for example \cite{schrijver-1979}) have probably not been strong enough to avoid a progressive independent development of a new research branch without further advances in the study of error probability of optimal codes at low rates. Lov\'asz' theta function was recognized as a fundamental quantity in combinatorial optimization due to its relevant algebraic properties; to date, it is usually interpreted in the context of semidefinite relaxation/programming and it is probably more used in mathematics and computer science than in information theory (see for example \cite{lovasz-2003} for more details, or \cite{bachoc-et-al-2010} for more recent developments). As an effect of this trend, no advances in bounding $E(R)$ for general channels with a zero-error capacity were made after the appearance of Lov\'asz' work. Perhaps Lov\'asz' method was so combinatorially oriented that it appeared difficult to exploit it in the probabilistic context within which bounds to $E(R)$ were usually developed.   Since 1979, thus, contrarily to what happened in the '60s, a ``gap'' exists between bounds to $E(R)$ and bounds to $C_0$.

One of the main objectives of this paper is to show that Lov\'asz' work and the sphere packing bound of Shannon, Gallager and Berlekamp rely on a similar idea, which can be described in a unified way in probabilistic terms if one moves to the more general setting of quantum probability. The right context is that of classical-quantum channels; for these channels, equivalent definitions of capacity, zero-error capacity and reliability function can be given with exact analogy with the classical case.

In this paper, we prove the sphere packing bound for classical-quantum channels and  we show that  Lov\'asz' result emerges naturally as a consequence  of this bound. In particular, we show that when the rate $R_\infty$ is minimized not just over classical channels but over classical-quantum channels with a given confusability graph, then the achieved minimum is precisely the Lov\'asz theta function. Figure \ref{fig:quant_prob} gives a pictorial representation of the resulting scenario.
\begin{figure}
\centering
\includegraphics[width=0.7\linewidth]{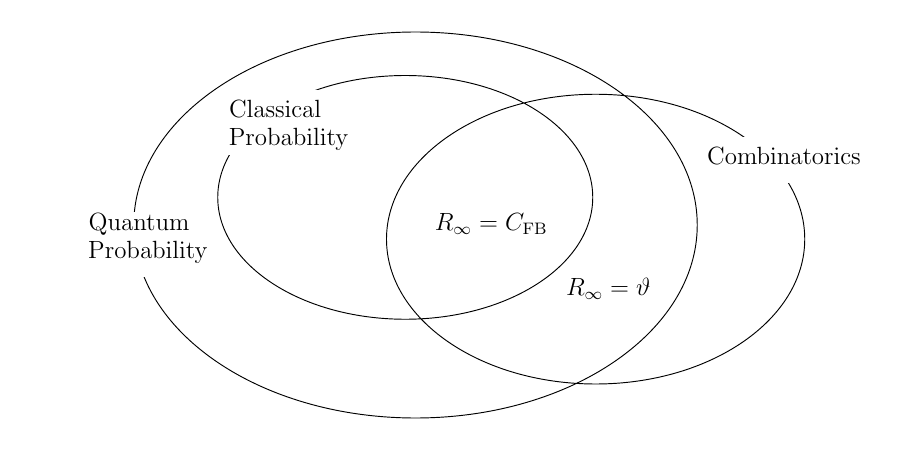}
\caption{The role of quantum probability in the study of the channel reliability in the low rate region. In the classical setting, the rate $R_\infty$ at which the sphere packing bound $E_{\text{sp}}(R)$ diverges equals the zero-error capacity with feedback (if $C_0>0$), a quantity originally studied with a combinatorial approach. In the quantum context, channels exist for which $R_\infty$ equals the Lovasz' theta function, a more powerful upper bound on $C_0$, also usually introduced in a combinatorial setting, for which no probabilistic interpretation has been given before.}
\label{fig:quant_prob}
\end{figure}
This shows that classical-quantum channels provide the right context for making bounds to $E(R)$ and bounds to $C_0$ coherent again at least to the same extent as they had been in the '60s. 

In this paper, however, we also attempt to make a first step toward a real unification of bounds to $E(R)$ and bounds to $C_0$, which means that for any channel one has a finite upper bound to $E(R)$ for each $R$ that is known to be larger than $C_0$. There are different ways of attempting such a unification. This paper focuses on an approach inspired by a common idea in Lov\'asz' construction and in the expurgated \emph{lower} bound to $E(R)$ of Gallager \cite{gallager-1965}. The resulting \emph{upper} bound to $E(R)$ is in many cases not tight at even moderately high rates. It has however the nice property of being finite over the same range of rate values for all channels with the same confusability graph, and of giving a powerful bound to $C_0$ as a consequence of bounds to $E(R)$. Furthermore, this approach reveals interesting connections between the Lov\'asz theta function, the cutoff rate of classical channels, the expurgated bound of Gallager and the rate $R_\infty$ of classical-quantum channels. The resulting situation in the general case is qualitatively depicted in Figure \ref{fig:umbrella}. The bounds obtained in this paper are simply sketched in order to clarify that we do not claim any tightness. We believe however that the presented ideas shed some light on an unexplored path that deserves further study. 

The final objective of an investigation on this topic should be a bound, that we also symbolically show in Figure \ref{fig:umbrella}, that smoothly departs from the sphere packing bound to diverge at $R=\vartheta$. We believe any result in this direction would be fundamental to coding theory and combinatorics.
\begin{figure}
\centering
\includegraphics[width=0.7\linewidth]{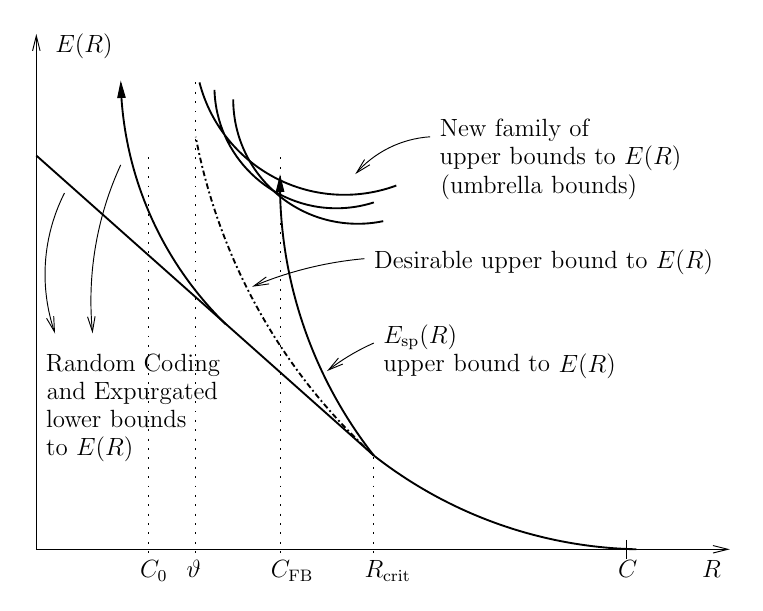}
\caption{The relation between the umbrella bounds to $E(R)$ for classical channels derived in this paper and other known quantities. A ``desirable'' upper bound to $E(R)$ is also shown in the figure that smoothly corrects the sphere packing bound to meet $\vartheta$; this should be target of future works in this direction.}
\label{fig:umbrella}
\end{figure}
When bounding the function $E(R)$ near the zero-error capacity, in fact, we are faced with the problem of finding a trade-off between probabilistic and combinatorial methods. Bounds to $E(R)$ for general non-symmetric channels usually require probabilistic methods, while effective bounds to $C_0$, like Lov\'asz' theta function, seem to require a combinatorial approach. In this paper, we suggest the use of quantum probability and quantum information theory, even for the classical problem, as a mean to expand the probabilistic setting toward the domain of combinatorics. 
 In this work, we show that by expanding the probabilistic approach of \cite{shannon-gallager-berlekamp-1967-1} for bounding $E(R)$ to the quantum case, we already recover such a powerful combinatorial result as Lov\'asz' bound to $C_0$. Thus, quantum probability represents a promising approach for finding a unified derivation of effective bounds to $E(R)$ and to $C_0$ even in the classical setting. 
 
This point of view may be considered also in light of a trend, which has emerged in recent years, which sees quantum probability methods used for the derivation of classical results (see for example the survey \cite{drucker-de-wolf-2011}). 
From the classical point of view, one may regard quantum probability as a purely mathematical tool and may want to investigate this tool without any reference to any \emph{quantum} theory. In this perspective, one may ask where these benefits of quantum probability really come from, in mathematical terms. To give a very concise partial answer to this question, one may say that while 
classical probability finds its roots in the use of distributions as vectors in the simplex, which are unit norm vectors under the $L_1$ norm, quantum probability hinges around the use of wave functions as vectors on the unit sphere, that is unit norm vectors under the $L_2$ norm (see \cite{gleason-1957} and, for example, \cite{hardy-2001} or \cite{aaronson-2004} for the reasons behind the need of using the $L_1$ or the $L_2$ norms). In coding theory, when working in the low rate region, what moves the problem from the probability domain into combinatorics is the increasing importance of very small error probability events. Optimal codewords are associated to conditional output distributions represented by almost orthogonal vectors which all lie very close to the boundary of the simplex (actually on the boundary in the case of zero-error codes).
The use of quantum probability may be seen as a form of relaxation of the classical problem where the smooth unit sphere is used in place of the probability simplex, so as to overcome the difficulties encountered when working too close to the contour. 

Lov\'asz' own result, on the other hand, can be interpreted in this way. As will be seen in Section \ref{sec:Umbrella}, the vectors used in Lov\'asz' representations essentially play the same role of the square roots of channel transition probabilities. The main difference is that Lov\'asz allows for negative components in these vectors, which have no intuitive meaning in the classical probabilistic description of the problem. The true benefit of these negative components is that, in the study of the $n$-fold channel extension, they allow to ``generate'' orthogonal codewords by exploiting what could be called, inspired by physics, ``interference'' between positive and negative components, something which cannot be done using classical probability theory.

The idea of extending the domain in order to solve a problem has in the end proved unavoidable in many different cases in mathematics. Integer valued sequences, like Fibonacci's, can usually only be represented in closed form using irrational numbers; the solutions to general cubic equations, even when all of them are real, can only  be expressed by means of radicals by resorting to the complex numbers; many integrals and series in the real domain are much easily solved in the complex plane, etc. We believe that this extension of the working domain will be highly beneficial in coding theory as well.

\subsection{Classical-Quantum Context}
\label{sec:Intro-quantum}
As already mentioned, classical-quantum channels play a central role in this paper. 
A number of results in the theory of classical communication through classical-quantum channels have been obtained in the past years that parallel many of the results obtained in the period 1948-1965 for classical channels (see \cite{holevo-1998} for a very comprehensive overview).
As in the classical case, we are here primarily concerned with the study of error exponents for optimal transmission at rates below the channel capacity. 
Upper bounds to the probability of error of optimal codes for pure-state channels were obtained by Burnashev and Holevo \cite{burnashev-holevo-1998} that are the equivalent of the so called random coding bound obtained by Fano \cite{fano-book} and  Gallager \cite{gallager-1965} and of the expurgated bound of Gallager \cite{gallager-1965} for classical channels. The expurgated bound was then extended to general quantum channels by Holevo \cite{holevo-2000}. The formal extension of the random coding bound expression to mixed states is conjectured to represent an upper bound for the general case but no proof has been obtained yet (see \cite{burnashev-holevo-1998, holevo-2000}).

A missing step in these quantum versions of the classical results is an equivalent of the sphere packing bound. This is probably due to the fact that a complete solution to the problem of determining the asymptotic error exponents in quantum hypothesis testing has been obtained only recently. In particular, the so called \emph{quantum Chernoff bound} was obtained in \cite{audenaert-2007}, for the direct part, and in \cite{nussbaum-szkola-2009}, for the converse part (both results were obtained in 2006, see \cite{audenaert-et-al-2008} for an extensive discussion). 
Those two works also essentially provided the basic tools that enabled the solution of the so called asymmetric problem in \cite{nagaoka-2006, audenaert-et-al-2008}, where the set of achievable pairs of  error exponents for the two hypotheses are determined. 
This result is often called \emph{Hoeffding  bound} in quantum statistics. 
The authors in \cite{audenaert-et-al-2008} attribute the result for the classical case also to Blahut \cite{blahut-1974} and Csisz\'ar and Longo \cite{csiszar-longo-1971}. It is the author's impression, however, that the result was already known much earlier, at least among information theorists at the MIT, since it is essentially used in Fano's 1961 book \cite{fano-book} (even if not explicitly stated in terms of hypothesis testing) and partially attributed to some 1957 unpublished seminar notes by Shannon (see also \cite{shannon-1957} for an example of Shannon's early familiarity with the Chernoff bound). A more explicit formulation in terms of binary hypothesis testing is contained in \cite{shannon-gallager-berlekamp-1967-1, shannon-gallager-berlekamp-1967-2} in a very general form, which already considers the case of distributions with different supports (compare with \cite{blahut-1974} and see for example \cite[Sec. 5.2]{audenaert-et-al-2008}) and also provides results for finite observation lengths and varying statistical distributions (check for example \cite[Th. 1, pag. 524]{shannon-gallager-berlekamp-1967-2}). This is in fact what is needed in studying error exponents for hypothesis testing between different codewords of a code for a general discrete memoryless channel.

With respect to \cite{shannon-gallager-berlekamp-1967-1, shannon-gallager-berlekamp-1967-2}, the main difference in the study of error exponents in binary hypothesis testing contained in \cite{blahut-1974} and \cite{csiszar-longo-1971} is that these papers focus more on the role of the Kullback-Leibler discrimination (or relative entropy), which can be used as a building block for the study of the whole problem in the classical case. In \cite{shannon-gallager-berlekamp-1967-1, shannon-gallager-berlekamp-1967-2}, instead, a quantity equivalent to what is now known as the \emph{R\'enyi divergence} was used as the building block. The two approaches are equivalent in the classical case, and it was historically the presentation in terms of the ubiquitous Kullback-Leibler divergence which emerged as the preferred one as opposed to the R\'enyi divergence. 
Along this same line, a simpler and elegant proof of the sphere packing bound, again in terms of the Kullback-Leibler divergence, was derived by Haroutunian \cite{haroutunian-1968} by comparing the channel under study with dummy channels with smaller capacity.
This proof, which is substantially simpler than the one presented in \cite{shannon-gallager-berlekamp-1967-1}, was then popularized in \cite{csiszar-korner-book}, and became the preferred proof for this result.

As a matter of fact, however, as pointed out in \cite[Sec. 4, Remark 1]{nagaoka-2006} and \cite[Sec. 4.8]{audenaert-et-al-2008}, it turns out that the solution to the study of error exponents in quantum hypothesis testing can be expressed in terms of the R\'enyi divergence and not in terms of the Kullback-Leibler divergence. Thus, since the sphere packing bound is essentially based on the theory of binary hypothesis testing, it is reasonable to expect that Haroutunian's approach to the sphere packing bound may fail in the quantum case. This could be in our opinion the reason why a quantum sphere packing bound has not been established in the literature.

In this paper, we propose a derivation of the sphere packing bound for classical-quantum channels by following closely the approach used in  \cite{fano-book, shannon-gallager-berlekamp-1967-1}. 
The quantum case is related to the classical one by means of the \emph{Nussbaum-Szko{\l}a mapping} \cite{nussbaum-szkola-2009}, that represented the key point in proving the converse part of the quantum Chernoff bound (see \cite{audenaert-et-al-2008} for more details). This allows us to formulate a quantum version of the Shannon-Gallager-Berlekamp generalization of the Chernoff bound \cite[Th. 5]{shannon-gallager-berlekamp-1967-1} on binary hypothesis testing (in the converse part).
The proof of the sphere packing bound used in \cite{shannon-gallager-berlekamp-1967-1} will then be adapted to obtain the equivalent bound for classical-quantum channels.  This proves the power of the methods employed in \cite{shannon-gallager-berlekamp-1967-1}. Furthermore, the mentioned generalization of the Chernoff bound allows us to adapt the technique used in \cite{shannon-gallager-berlekamp-1967-2} to find an upper bound to the reliability at $R=0$, which leads to an exact expression when combined with the expurgated bound proved by Holevo \cite{holevo-2000}.

\subsection{Paper overview}
\label{sec:Intro-overview}
This paper is structured as follows. In Section \ref{sec:Basic} we introduce the notation and the basic notions on classical and classical-quantum channels, and on the main statistical tools used in this paper. In Section \ref{sec:Umbrella} we introduce what we call the ``umbrella bound'' for classical channels in its simplest and self-contained form. This section is entirely classical, only the \emph{bra-ket} notation for scalar products is used for convenience. The scope of this section is to show how Lov\'asz' idea can be extended to bound the reliability function $E(R)$ at all rates larger than the Lov\'asz theta function. Interesting connections between the Lov\'asz theta function, the cutoff rate and the expurgated bound emerge by means of this analysis. This section represents a preview of the more general results that will be derived in Section \ref{sec:SP-umbrella} and prepares the reader for the interpretation of Lov\'asz' representations as auxiliary channels. These results were first presented in \cite{dalai-ISIT-2013b}.

We then start considering bounds for classical-quantum channels. In Section \ref{sec:Quantum-H-test} we develop a fundamental bound to the probability of error in a binary decision test between quantum states. This bound is a quantum version of the converse part of the Shannon-Gallager-Berlekamp generalization of the Chernoff bound \cite{shannon-gallager-berlekamp-1967-1}. In Section \ref{sec:SPB}, this tool is used to prove the sphere packing bound for classical-quantum channels. The proof of the bound follows the approach in \cite{shannon-gallager-berlekamp-1967-1}, which contains the key idea that is also found in Lov\'asz' bound. Part of the results presented in Sections \ref{sec:Quantum-H-test} and \ref{sec:SPB} were first presented in \cite{dalai-ISIT-2012}.
In Section \ref{sec:Inform-radii} we provide a detailed analysis of the analogy between the sphere packing and Lov\'asz' bound. In doing so, we generalize a result of Csisz\'ar \cite{csiszar-1995}, showing that the quantum sphere packing bound can be written in terms of an information radius \cite{sibson-1969}, which appears to be the leading theme that puts the Lov\'asz theta function, the cutoff rate, the rate $R_\infty$ and the ordinary capacity $C$ under the same light. In this section, it is also proved that the minimum $R_\infty$ rate of all channels with a given confusability graph is precisely the Lov\'asz theta function of that graph. Part of these results were first presented in \cite{dalai-ISIT-2013a}.
In Section \ref{sec:classical-pure}, we provide a more detailed analysis of classical and pure-state channels, establishing a connection between the cutoff rate of a classical channel and the rate $R_\infty$ of a pure-state channel that could underlie the classical one. This leads as a side result to seemingly new expressions for the cutoff rate.

In Section \ref{sec:SP-umbrella}, then, we reconsider the umbrella bound anticipated in Section \ref{sec:Umbrella}, giving it a more general and generally more powerful form that allows one to bound the reliability function $E(R)$ of a classical-quantum channel in terms of the sphere packing bound of an auxiliary classical-quantum channel. 
Finally, in Section \ref{sec:Zero-rate} we consider the special case of channels with no zero-error capacity, for which we present the quantum extension of some classical bounds. In particular, we show that the zero-rate bound of \cite{shannon-gallager-berlekamp-1967-2} can be extended to classical-quantum channels by means of the results of Section \ref{sec:Quantum-H-test}, thus obtaining the precise value of the reliability at $R=0$. 
The quantum extension of some other known classical bounds is then briefly discussed.

\section{Basic Notions and Notations}
\label{sec:Basic}
In this section, we present the choice of notation in detail while discussing the basic results on classical and classical-quantum channels, on the used divergences and on the zero-error capacity (see \cite{gallager-book}, \cite{viterbi-omura-book}, \cite{cover-thomas-book},  \cite{hayashi-book-2006}, \cite{wilde-2013}, \cite{holevo-1998}, \cite{holevo-book} for more details).

As a general rule, we use lower case letters for unit norm vectors ($u,\psi$ etc.) and capital letters for distributions ($P,Q$ etc. ) or density operators ($A,B,S$ etc.). Bold letters refer to quantities associated to $n$-fold tensor powers or products. For classical discrete memoryless channels, we use the notation $W_{x}(y)$ for the probability of output $y$ when the input is $x$. So, $W_x$ is the output distribution induced by input $x$. In a similar way, for a classical-quantum channel, $S_x$ denotes the density operator associated to input $x$. Finally, we use Dirac's \emph{bra-ket} notation for inner and outer products in Hilbert spaces but we avoid the \emph{ket} notation for vectors.

\subsection{Classical Channels}
\label{sec:Basic-classical}

Let $\myIn=\{1,2,\ldots,|\myIn |\}$ and $\myOut=\{1,2,\ldots,|\myOut|\}$ be the input and output alphabets of a discrete memoryless channel with transition probabilities 
 $\myCh{\myin}{\myout}$, $\myin\in\myIn,\myout\in\myOut$.
If $\bm{\myin}=(\myin_1,\myin_2,\ldots,\myin_n)$ is a sequence of $n$ input symbols and correspondingly  $\bm{\myout}=(\myout_1,\myout_2,\ldots,\myout_n)$ is a sequence of output symbols, then the probability of observing $\bm{\myout}$ at the output of the channel given input $\bm{\myin}$ is 
\begin{equation*}
\myChn{\bm{\myin}}{\bm{\myout}}=\prod_{i=1}^n \myCh{\myin_i}{\myout_i}.
\end{equation*}
A block code with parameters $M$ and $n$ is a mapping from a set $\{1,2,\ldots,M\}$ of $M$ messages onto a set $\{\bm{\myin}_1, \bm{\myin}_2, \ldots, \bm{\myin}_M\}$ of $M$ sequences each composed of $n$ symbols from the input alphabet. The rate $R$ of the code is defined as $R=(\log M)/n$.
A decoder is a mapping from the set of length-$n$ sequences of symbols from the output alphabet into the set of possible messages $\{1,2,\ldots,M\}$. If message $m$ is to be sent, the encoder transmits  the codeword $\bm{\myin}_m$ through the channel.
An output sequence $\bm{\myout}$ is received by the decoder, which maps it to a message $\hat{m}$. An error occurs if $\hat{m}\neq m$.

Let $\bm{\myOut}_m$ be the set of output sequences that are mapped to the message $m$. When message $m$ is sent, the probability of error is
\begin{equation*}
\Pem=1-\sum_{\bm{\myout}\in \bm{\myOut}_m} \myChn{\bm{\myin}_m}{\bm{\myout}}.
\end{equation*}
The maximum error probability of the code is defined as the largest $\Pem$, that is,
\begin{equation*}
\Pemax=\max_{m}\Pem.
\end{equation*}

Let $\Pemax^{(n)}(R)$ be the smallest maximum error probability among all codes of length $n$ and rate \emph{at least} $R$.
Shannon's theorem states that sequences of codes exists such that $\Pemax^{(n)}(R)\to 0$ as $n\to\infty$ for all rates smaller than a constant $C$, called \emph{channel capacity}, which is given by the expression
\begin{equation*}
C=\max_{P}\sum_{\myin,\myout} P(\myin) \myCh{\myin}{\myout}\log\frac{\myCh{\myin}{\myout}}{\sum_{\mysin} P(\mysin)\myCh{\mysin}{\myout}},
\end{equation*} 
where the maximum is over all probability distributions on the input alphabet (see \cite{cover-thomas-book}, \cite{csiszar-korner-book}, \cite{gallager-book} for more details on the capacity).

For $R<C$, Shannon's theorem only asserts that $P_{e,\max}^{(n)}(R)\to 0$ as $n\to\infty$.  
In the most general case\footnote{Some ``pathological'' channels, like the noiseless binary symmetric channel, can have $C=C_0$ and exhibit no exponential decrease of $\Pemax^{n}(R)$ for any $R$. These are not very interesting cases however.}, for a range of rates  $C_0<R < C$, the optimal probability of error $P_{e,\max}^{(n)}(R)$ is known to have an exponential decrease in $n$, and it is thus reasonable to define the \emph{reliability function} of the channel as
\begin{equation}
E(R)=\limsup_{n\to\infty} -\frac{1}{n}\log P_{e,\max}^{(n)}(R).
\label{eq:E(R)_def_class}
\end{equation} 
The value $C_0$ is the so called \emph{zero-error capacity}, also introduced by Shannon \cite{shannon-1956}, which is defined as the highest rate at which communication is possible with probability of error precisely equal to zero. More formally,
\begin{equation}
C_0=\sup\{R \, :\,   P_{e,\max}^{(n)}(R)=0 \mbox{ for some } n\}.
\label{eqdef:C0}
\end{equation}
For $R<C_0$, we may define the reliability function $E(R)$ as being infinite\footnote{According to the definition of $E(R)$ in equation \eqref{eq:E(R)_def_class}, which is equivalent to the definition of $E(R)$ given in \cite{shannon-gallager-berlekamp-1967-1}, the value at $R=C_0$ can be both finite or infinite depending on the channel. This is however only a technical detail that does not change the behavior of $E(R)$ at $R\neq C_0$.\label{fn:E(C0)}}.

It is known that the same function $E(R)$ results 
if in \eqref{eq:E(R)_def_class} one substitutes $\Pemax^{(n)}(R)$ with the smallest \emph{average} probability of error $\Pe^{(n)}(R)$ defined as the minimum of $\Pe=\frac{1}{M}\sum_m \Pem$ over all codes of block length $n$ and rate at least $R$
(see for example \cite{shannon-gallager-berlekamp-1967-1}, \cite{gallager-1965}).
For almost all channels, the function $E(R)$ is known only in the region of high rates. The random coding lower bound states that $E(R)\geq E_\text{r}(R)$, where
\begin{align}
E_{\text{r}}(R) & = \max_{0\leq \rho \leq 1 }\left[E_0(\rho)-\rho R \right]\label{eq:rand_Class_1}\\
E_0(\rho) & = \max_{P} E_0(\rho, P)\label{eq:rand_Class_2}\\
E_0(\rho,P) & = -\log \sum_\myout \left(\sum_\myin P(\myin) \myCh{\myin}{\myout}^{1/{(1+\rho)}}\right)^{1+\rho}.
\end{align}
This bound is tight in the high rate region. This is proved by means of the sphere packing upper bound, which states that $E(R)\leq \Esp(R)$, where
\begin{align*}
\Esp(R) & = \sup_{\rho\geq 0}\left[E_0(\rho)-\rho R \right].
\end{align*}
For those rates $R$ for which $\Esp(R)$ is achieved by a $\rho\leq 1$, we see that $\Esp(R)=E_{\text{r}}(R)$. It can be shown that there is a constant $R_{\text{crit}}$, called \emph{critical rate}, which is in most interesting cases smaller than $C$ (see \cite[Appendix]{shannon-gallager-berlekamp-1967-1}), such that $\Esp(R)=E_{\text{r}}(R)$ for all rates $R\geq R_{\text{crit}}$. So, the reliability function is known exactly in the range $R_{\text{crit}}\leq R\leq C$.

	In the low rate region, two important different cases are to be distinguished. If there is at least one pair of inputs $\myin$ and $\mysin$ such that $\myCh{\myin}{\myout}\myCh{\mysin}{\myout}=0$ $\forall \myout$, then communication is possible at sufficiently low rates with probability of error exactly equal to zero, which means that the zero-error capacity defined in \eqref{eqdef:C0} is positive. Otherwise, $C_0=0$ and the probability of error, though small, is always positive.
An improvement over the random coding bound is given by Gallager's expurgated bound \cite{gallager-1965}, which states that $E(R)\geq E_{\text{ex}}(R)$ where
\begin{align}
E_{\text{ex}}(R) & = \sup_{\rho\geq 1 }\left[E_{\text{x}}(\rho)-\rho R\right]\label{eq:class_expurgated1}\\
E_{\text{x}}(\rho)& = \max_{P} E_{\text{x}}(\rho,P)\label{eq:class_expurgated2}\\
E_{\text{x}}(\rho,P) & = 
\!-\rho\log \sum_{\myin, \mysin} \! P(\myin)P(\mysin)\!{\left(\sum_\myout \sqrt{\myCh{\myin}{\myout}\myCh{\mysin}{\myout}}\right)}^{1/\rho}.
\label{eq:class_expurgated3}
\end{align}
In the low rate region, the known upper bounds differ substantially depending on whether the channel has a zero-error capacity or not.
The function $\Esp(R)$ goes to infinity for rates $R$ smaller than the quantity
\begin{equation}
R_\infty = \max_{P} \left[-\log \max_\myout \sum_{\myin: \myCh{\myin}{\myout}>0} P(\myin)\right],
\label{eq:Rinfty_class}
\end{equation}
which is in the general case larger than $C_0$, even in cases where $C_0=0$.
No general improvement has been obtained in this low rate region over the sphere packing bound in the general case of channels with a zero-error capacity. 
For channels with no zero-error capacity, instead, a major improvement was obtained in \cite{shannon-gallager-berlekamp-1967-1, shannon-gallager-berlekamp-1967-2}, where it is proved that the expurgated bound is tight at $R=0$ and that it is possible to upper bound $E(R)$ by the so called \emph{straight line} bound, which is a segment connecting the plot of $\Eex(R)$ at $R=0$ to the function $\Esp(R)$ tangentially to the latter. Furthermore, for the specific case of the binary symmetric channel, much more powerful bounds are available \cite{mceliece-et-al-1977}, \cite{barg-mcgregor-2005}.

\subsection{Classical-Quantum Channels}
\label{sec:Basic-quantum}
Consider a classical-quantum channel with input alphabet $\myIn=\{1,\ldots,|\myIn|\}$ and associated density operators $S_\myin$, $\myin\in\myIn$, in a finite dimensional Hilbert space\footnote{The $S_\myin$ can thus be represented as positive semi-definite Hermitian matrices with unit trace.} $\mathcal{H}$. The $n$-fold product channel acts in the tensor product space $\bm{\mathcal{H}}=\mathcal{H}^{\otimes n}$ of $n$ copies of $\mathcal{H}$. To a sequence $\bm{\myin}=(\myin_1,\myin_2,\ldots,\myin_n)$ is associated the signal state $\bm{S}_{\bm{\myin}}=S_{\myin_1}\otimes S_{\myin_2}\cdots\otimes S_{\myin_n}$.
A block code with $M$ codewords is a mapping from a set of $M$ messages $\{1,\ldots,M\}$ into a set of $M$ codewords  $\bm{\myin}_1,\ldots, \bm{\myin}_M$, as in the classical case. 
The rate of the code is again $R=(\log M)/n$.

A quantum decision scheme for such a code is a so-called POVM (see for example \cite{wilde-2013}), that is, a  collection of $M$ positive operators\footnote{The operators $\Pi_m$ can thus be represented as positive semi-definite matrices. The notation $\sum \Pi_m \leq \mathds{1}$ simply means that $\mathds{1}-\sum \Pi_m $ is positive semidefinite. Note that, by construction, all the eigenvalues of each operator $\Pi_m$ must be in the interval $[0,1]$.} $\{\Pi_1,\Pi_2,\ldots,\Pi_M\}$ such that $\sum \Pi_m \leq \mathds{1}$, where $\mathds{1}$ is the identity operator.
The probability that message $m'$ is decoded when message $m$ is transmitted is $\mathsf{P}_{m'|m}=\Tr \Pi_{m'} \bm{S}_{\bm{\myin}_m}$. 
The probability of error after sending message $m$ is
\begin{equation*}
\Pem=1-\Tr\left(\Pi_m \bm{S}_{\bm{\myin}_m}\right).
\end{equation*}
We then define $\Pemax$, $\Pemax^{(n)}(R)$ and $E(R)$ precisely as in the classical case. 

A pure-state channel is one for which all density operators $S_x$ have rank one, in which case we write $S_x=\ket{\psi_x}\bra{\psi_x}$. If, on the other hand, all density operators $S_x$ commute, then they are all simultaneously diagonal in some basis. In this case it is easily proved that the optimal measurements are also diagonal in the same basis and, thus, the classical-quantum channel reduces to a classical one. Each classical channel can then be thought of as a classical-quantum one with diagonal states $S_x$, the distribution $W_x$ being the diagonal of $S_x$. 

Bounds to the reliability function of classical-quantum channels were first investigated by Burnashev and Holevo \cite{burnashev-holevo-1998} and by Holevo \cite{holevo-2000}. The \emph{formal} quantum analog of the random coding and of the expurgated exponent expressions can be written as in \eqref{eq:rand_Class_1}-\eqref{eq:rand_Class_2} and \eqref{eq:class_expurgated1}-\eqref{eq:class_expurgated2} respectively where, in this case,
\begin{align*}
E_0(\rho,P) & = -\log \Tr \left(\sum_\myin P(\myin) S_\myin^{1/(1+\rho)}\right)^{1+\rho}
\end{align*}
and
\begin{align*}
\Ex(\rho,P) & = -\rho\log \sum_{\myin,\mysin}P(\myin)P(\mysin)\left(\Tr\sqrt{S_\myin}\sqrt{S_\mysin}\right)^{1/\rho}.
\end{align*}
Operational meaning to these quantities has been given in \cite{burnashev-holevo-1998} and \cite{holevo-2000}. In particular, the random coding bound and the expurgated bound were proved for pure-state channels in \cite{burnashev-holevo-1998}, and the proof of the expurgated bound was then extended to mixed-state channels in \cite{holevo-2000}. However, no extension of the random coding bound to mixed-state channels has been obtained yet. To the best of this author's knowledge, the best currently available lower bound to $E(R)$ was obtained in \cite{hayashi-1997}.

\subsection{Zero-Error Capacity}
\label{sec:Basic-zero-error}
Both for the classical and for the classical-quantum cases, we can define the zero-error capacity $C_0$ of the channel according to equation \eqref{eqdef:C0}. 
In the classical case, if a code satisfies $\Pemax=0$, then for each pair of different codewords $\bm{\myin}_m,\bm{\myin}_{m'}$ the output distributions $\myChnv{\bm{\myin}_m}$ and $\myChnv{\bm{\myin}_{m'}}$ must have disjoint supports. This implies that, for at least one index $i$, the two codewords contain in the $i$-th position two symbols $\myin_{m,i}$ and $\myin_{m',i}$ that are not confusable, which means that $\myChv{\myin_{m,i}}$ and $\myChv{\myin_{m',i}}$ have disjoint supports. For a given channel, then, it is useful to define a \emph{confusability graph} $G$ whose vertices are the elements of $\mathcal{X}$ and whose edges are the elements $(x,x')\in\mathcal{X}^2$ such that $x$ and $x'$ are confusable. It is then easily seen that $C_0$ only depends on the confusability graph $G$. Furthermore, for any graph $G$, we can always find a channel with confusability graph $G$. Thus, we may equivalently speak of the zero-error capacity of a channel or of the \emph{capacity $C(G)$ of the graph} $G$, and we will use those two notions interchangeably through the paper.

An identical discussion holds for classical-quantum channels (see \cite{medeiros2006} for a more detailed discussion. For recent results on zero-error communication via general quantum channels, see \cite{duan-severini-winter-2013} and references therein). In fact, if a code satisfies $\Pemax=0$, then for each $m\neq m'$ we must have $\Tr(\Pi_m \bm{S}_{\bm{\myin}_m})=1$ and $\Tr(\Pi_m  \bm{S}_{\bm{\myin}_{m'}})=0$. This is possible if and only if the signals $ \bm{S}_{\bm{\myin}_m}$ and $ \bm{S}_{\bm{\myin}_{m'}}$ are orthogonal, that is $\Tr(\bm{S}_{\bm{x}_m}\bm{S}_{\bm{x}_{m'}})=0$. Using the property that $\Tr((A\otimes B)(C\otimes D))=\Tr(AC)\Tr(BD)$, we then have
\begin{align*}
\Tr(\bm{S}_{\bm{x}_m}\bm{S}_{\bm{x}_{m'}}) & =\prod_{i=1}^n\Tr(S_{x_{m,i}}S_{x_{m',i}}).
\end{align*}
This implies that $\Tr(S_{x_{m,i}}S_{x_{m',i}})=0$ for at least one value of $i$, which means that that $x_{m,i}$ and $x_{m',i}$ are not confusable. 
We see then that there is no difference with respect to the classical case: the zero error capacity only depends on the confusability graph. Given a graph $G$, we can interpret the capacity of the graph $C(G)$ as either the zero error capacity $C_0$ of a classical or of a classical-quantum channel with that confusability graph.

Finding the zero-error capacity remains an unsolved problem (see \cite{korner-orlitsky-1998} for a detailed discussion). As mentioned before, a first upper bound to $C_0$ was obtained by Shannon by means of an argument based on feedback. He could prove that the zero-error capacity, when feedback is available, is given by the expression
\begin{equation*}
C_{\text{FB}} = \max_{P}\left[ -\log \max_\myout \sum_{\myin: \myCh{\myin}{\myout}>0} P(\myin)\right]
\end{equation*}
whenever $C_0>0$. The expression above is precisely the value $R_\infty$ at which the sphere packing bound diverges (whether $C_0>0$ or not). The best bound is then obtained by using the channel (with the given confusability graph) which minimizes the above value. 

A major breakthrough was obtained by Lov\'asz in terms of his theta function \cite{lovasz-1979}. He could prove that $C_0$ is upper bounded by the quantity $\vartheta$ defined as\footnote{We use a logarithmic version of the original theta function as defined by Lov\'asz, so as to make comparisons with rates simpler.}
\begin{align*}
\vartheta& = \min_{\{u_\myin\}}\min_c \max_\myin \log \frac{1}{|\braket{u_\myin}{c}|^2},
\end{align*}
where the outer minimum is over all sets of unit-norm vectors in any Hilbert space such that $u_\myin$ and $u_\mysin$ are orthogonal if symbols $\myin$ and $\mysin$ cannot be confused, the inner minimum is over all  unit norm vectors $c$, and the maximum is over all input symbols. See also \cite{schrijver-1979} and \cite{mceliece-et-al-1978} for interesting comments on Lov\'asz' result.

Shortly afterwards, Haemers \cite{haemers-1978}, \cite{haemers-1979} obtained another interesting upper bound to $C_0$, which he proved to be strictly better than $\vartheta$ in some cases. The Haemers bound asserts that $C_0$ is upper bounded by the rank of any $|\myIn|\times |\myIn|$ matrix $A$ with elements $A(\myin,\myin)\neq 0$ and $A(\myin,\mysin)=0$ if $\myin$ and $\mysin$ are non-confusable inputs. This bound is in many cases looser than Lov{\'a}sz' one, but Haemers proved it to be tighter for the graph which is the complement of the so called \emph{Schl\"afli graph}.

Recently, an extension of the Lov\'asz theta function for the quantum communication problem has been derived in \cite{duan-severini-winter-2013} which is based on algebraic properties satisfied by the Lov\'asz theta function. There, the authors propose an extension for general quantum channels based on what they call \emph{non-commutative graphs}. In this paper, however, we always consider classical confusability graphs, and the interest is in the connection between bounds to $E(R)$ and bounds to $C_0$. Further work would be needed to understand if there is a connection between the results derived here and the results of \cite{duan-severini-winter-2013}.

\subsection{Distances and Divergences}
\label{sec:Basic-divergences}
In this paper, a fundamental role is played by statistical measures of dissimilarity between probability distributions and between density operators. This section defines the used notation and recalls the properties of those measures that will be needed in the rest of the paper.

In classical binary hypothesis testing between two probability distributions $U$ and $V$ on an alphabet $\mathcal{Z}$, a fundamental role is played by the function $\mu_{U,V}(s)$ defined by
\begin{equation*}
\mu_{U,V}(s)=\log \sum_{z\in\mathcal{Z}} U(z)^{1-s}V(z)^s,\qquad 0<s<1
\end{equation*}
and extended to $s=0,1$ by defining
\begin{equation*}
\mu_{U,V}(0)=\lim_{s\to 0}\mu_{U,V}(s)\quad \mbox{and} \quad \mu_{U,V}(1)=\lim_{s\to 1}\mu_{U,V}(s).
\end{equation*}
The minimum value of $\mu_{U,V}(s)$ in the interval $[0,1]$ is of importance for the study of symmetric binary hypothesis testing, and it is convenient to introduce the \emph{Chernoff distance} $d_{\text{C}}(U,V)$ between the two probability distributions $U$ and $V$, which is defined by
\begin{equation*}
d_{\text{C}}(U,V)=-\min_{0\leq s \leq 1}\mu_{U,V}(s).
\end{equation*}
It will also be useful later to discuss the relation between the Chernoff distance and other distance measures. Of particular importance is the \emph{Batthacharyya distance}, defined as
\begin{eqnarray*}
d_{\text{B}}(U,V) & = & -\mu_{U,V}(1/2)\\
& = & - \log \sum_z \sqrt{U(z)V(z)}.
\end{eqnarray*}
It is known (see for example \cite{shannon-gallager-berlekamp-1967-1}) that the function  $\mu_{U,V}(s)$ is a non-positive convex function of $s$ and, from this, the following inequalities are deduced
\begin{equation}
d_{\text{B}}(U,V)\leq d_{\text{C}}(U,V) \leq 2 d_{\text{B}}(U,V).
\label{eq:d_Bd_C}
\end{equation}
Examples are easily found showing that both equalities above are possible.

Another quantity\footnote{Different authors seem to adopt different definitions and notations for the R\'enyi divergence, see \cite{hayashi-book-2006}, \cite{csiszar-1995}. We adopt a notation similar to that used in \cite{shannon-gallager-berlekamp-1967-1} and in \cite{csiszar-1995}, which, although it is not particularly coherent, it is useful for the purpose of this paper and for comparison with the literature.} which is related (actually equivalent) to  $\mu_{U,V}(s)$ and which will be useful in Section \ref{sec:Inform-radii} is the R\'enyi divergence of order $\alpha$ defined as 
\begin{align}
D_\alpha(U||V) & = \frac{1}{\alpha-1}\mu_{U,V}(1-\alpha)\\
& = \frac{1}{\alpha-1}\log \sum_{z\in \mathcal{Z}} U(z)^\alpha V(z)^{1-\alpha}.
\label{eqdef:D_alpha}
\end{align}
When $\alpha\to 1$ the divergence $D_\alpha(U||V)$ tends to the Kullback-Leibler divergence defined as \cite{cover-thomas-book} 
\begin{equation*}
D(U||V)=\sum_{z\in\mathcal{Z}} U(z)\log\frac{U(z)}{V(z)}.
\end{equation*}

We now introduce the corresponding quantities for the quantum case, that is, when two density operators $A$ and $B$ are to be distinguished in place of the two distributions $U$ and $V$. The function $\mu_{A,B}(s)$ is defined by
\begin{equation}
\mu_{A,B}(s)=\log \Tr A^{1-s}B^s, \qquad 0<s<1
\label{eq:mu_quantum_def_1}
\end{equation}
and
\begin{equation}
\mu_{A,B}(0)=\lim_{s\to 0}\mu_{A,B}(s)\quad \mbox{and} \quad \mu_{A,B}(1)=\lim_{s\to 1}\mu_{A,B}(s).
\label{eq:mu_quantum_def_2}
\end{equation}
The Chernoff and the Bhattacharrya distances $d_{\text{C}}(A,B)$ and $d_{\text{B}}(A,B)$  between the two density operators $A$ and $B$ are then defined as in the classical case by
\begin{equation*}
d_{\text{C}}(A,B)=-\min_{0\leq s \leq 1}\mu_{A,B}(s),
\end{equation*}
and 
\begin{eqnarray*}
d_{\text{B}}(A,B) & = & -\mu_{A,B}(1/2)\\
& = & - \log \Tr \sqrt{A}\sqrt{B},
\end{eqnarray*}
and they are again related by the inequalities 
\begin{equation*}
d_{\text{B}}(A,B)\leq d_{\text{C}}(A,B) \leq 2 d_{\text{B}}(A,B).
\end{equation*}
In the quantum case, however, another important measure of the difference between two quantum states is the so called \emph{fidelity} between the two states, which is given by the expression\footnote{We use here the usual notation $|A|=\sqrt{A^*A}$.} $\Tr |\sqrt{A}\sqrt{B}|$. Here, it will be useful for us to adopt a logarithmic measure and define\footnote{Usually the quantity $2(1-\Tr|\sqrt{A}\sqrt{B}|)$  is called \emph{Bures distance}. We use the notation $d_{\text{F}}$, with $\text{F}$ for fidelity, to avoid ambiguities with the Bhattacharyya distance.}
\begin{eqnarray*}
d_{\text{F}}(A,B) & = & -\log \Tr |\sqrt{A}\sqrt{B}|\\
& = & - \log \Tr \sqrt{\sqrt{A}\,B \sqrt{A}}.
\end{eqnarray*}
It is known that  $d_{\text{F}}(A,B)$, is related to  $d_{\text{C}}(A,B)$ and $d_{\text{B}}(A,B)$ by the following inequalities
\begin{equation*}
d_{\text{F}}(A,B)\leq d_{\text{B}}(A,B)\leq d_{\text{C}}(A,B) \leq 2 d_{\text{F}}(A,B)\leq 2 d_{\text{B}}(A,B).
\end{equation*}
Both the conditions $d_{\text{C}}(A,B)=d_{\text{F}}(A,B)$ and $d_{\text{C}}(A,B)=2d_{\text{B}}(A,B)$ are possible for properly chosen density operators $A$ and $B$.
 Finally, the R\'enyi divergence of order $\alpha$ can be defined as
\begin{equation}
D_\alpha(A||B)=\frac{1}{\alpha-1}\log \Tr A^\alpha B^{1-\alpha}.
\label{eqdef: Quantum_D_rho}
\end{equation}

\section{A Preview: an ``umbrella'' bound}
\label{sec:Umbrella}

In this section, we present an upper bound to $E(R)$ for classical channels that can be interpreted as an extension of Lov\'asz' work in the direction of giving at least a crude upper bound to $E(R)$ for those rates that Lov\'asz' own bound proves to be strictly larger than the zero-error capacity. However, the intent is to obtain Lov\'asz' bound as a consequence of an upper bound on $E(R)$ and not vice versa.
The obtained bound on $E(R)$ is loose at high rates, but it has two important merits. First, it makes immediately clear how Lov\'asz' idea can be extended to find an upper bound to $E(R)$ that will give $C_0\leq \vartheta$ as a direct consequence.
Second, it reveals an important analogy between the Lov\'asz theta function and the cutoff rate. We will in fact introduce a function $\vartheta(\rho)$ that varies from the cutoff rate of the channel, when $\rho=1$, to the Lov\'asz theta function, when $\rho\to\infty$. The idea is to keep Lov\'asz' result in mind as a target but building it as the limit of a smoother construction. This construction is related to that of Gallager's expurgated lower bound to $E(R)$, but it is used precisely in the opposite direction. We believe that these analogies could shed new light on the understanding of the topic and deserve further study.

\subsection{Bhattacharyya distances and scalar products}
\label{sec:Umbrella-Bhattacharyya}
In deriving the desired bound, we will start our interpretation of classical-quantum channels as auxiliary mathematical tools for the study of classical channels. Contrarily to what may be considered the most traditional approach, however, in this section we do not interpret our channel's transition probabilities as the eigenvalues of positive semidefinite commuting operators. We consider instead the transition probabilities as the squared absolute values of the components of some wave functions, and it is thus more instructive to initially consider only pure-state channels. In this direction, we also need to recall briefly some important connections between the reliability function $E(R)$ and the Bhattacharyya distance between codewords. This connection is of great importance since the Bhattacharyya distance between distributions is related to a scalar product between unit norm vectors in a Hilbert space. It is this property that makes an adaptation of Lov\'asz' approach to study the function $E(R)$ possible.

For a generic input symbol $\myin$, consider the unit norm vector
\begin{equation}
\psi_\myin=\left(\sqrt {\myCh{\myin}{1}}, \sqrt {\myCh{\myin}{2}}, \ldots,\sqrt {\myCh{\myin}{|\myOut|}}\right)^\dagger
\label{def:psi_from_P}
\end{equation}
of the square roots of the conditional probabilities of the output symbols given input $\myin$. We call this the \emph{state vector} of input symbol $\myin$, in obvious analogy with the input signals of pure-state classical-quantum channels. We will also use the simplified notation $\psi_\myin=\sqrt{\myChv{\myin}}$.
Consider then the memoryless $n$-fold extension of our classical channel, that is, for an input sequence $\bm{\myin}=(\myin_1,\myin_2,\ldots,\myin_n)$, consider the square root of the conditional probability of a sequence $\bm{\myout}=(\myout_1,\myout_2,\ldots,\myout_n)$
\begin{equation*}
\sqrt{\myChn{\bm{\myin}}{\bm{\myout}}}=\prod_{i=1}^n \sqrt{\myCh{\myin_i}{\myout_i}}.
\end{equation*}
If all $\bm{y}$ sequences are listed in lexicographical order, we can express all the square roots of their conditional probabilities as the components of the vector $\bm{\psi}_{\bm{\myin}}=\sqrt{\myChnv{\bm{\myin}}}$, which satisfies
\begin{equation}
\bm{\psi}_{\bm{\myin}}=\psi_{\myin_1}\otimes\psi_{\myin_2}\otimes\cdots\psi_{\myin_n}
\label{eq:defPsi}
\end{equation}
where $\otimes$ is the Kronecker product. We call this vector the \emph{state vector} of the input sequence $\bm{\myin}$, again in analogy with classical-quantum channels. Let for simplicity $\bm{\psi}_m$ be the state vector of the codeword $\bm{\myin}_m$; then we can represent our code $\{\bm{x}_1, \bm{x}_2, \ldots, \bm{x}_M\}$ by means of its associated state vectors
$\{\bm{\psi}_1, \bm{\psi}_2, \ldots, \bm{\psi}_M\}$. Since all square roots are taken positive, note that our classical channel has a positive zero-error capacity if and only if there are at least two state vectors $\psi_\myin$, $\psi_\mysin$ such that $\braket{\psi_\myin}{\psi_\mysin}=0$. This implies that codes can be built such that $\braket{\bm{\psi}_m}{\bm{\psi}_{m'}}=0$ for some $m$, $m'$, that is, the two codewords $m$ and $m'$ cannot be confused at the output. 

However, the scalar product 
$\braket{\bm{\psi}_m}{\bm{\psi}_{m'}}$ plays a more general role, since it is related to the so called Bhattacharyya distance between the two codewords $m$ and $m'$. In particular, in a binary hypothesis testing between codewords $m$ and $m'$, an extension of the Chernoff Bound allows to assert that the minimum error probability $\Pe$ vanishes exponentially fast in the block length $n$ and that  \cite{shannon-gallager-berlekamp-1967-2}
\begin{equation*}
\log\frac{1}{\Pe}=d_{\text{C}}(\myChnv{\bm{\myin}_m},\myChnv{\bm{\myin}_{m'}})+o(n).
\end{equation*}
Using \eqref{eq:d_Bd_C}, we then see that 
\begin{equation*}
\log\frac{1}{\braket{\bm{\psi}_m}{\bm{\psi}_{m'}}}+o(n)\leq  \log\frac{1}{\Pe}\leq 2\log\frac{1}{\braket{\bm{\psi}_m}{\bm{\psi}_{m'}}}+o(n).
\end{equation*}
We add as a comment that equality holds on the left for the class of channels, introduced in \cite{shannon-gallager-berlekamp-1967-2}, called \emph{pairwise reversible} channels. These channels are such that for any pair of inputs $\myin$, $\mysin$ the quantity $\mu_{\myChv{\myin},\myChv{\mysin}}(s)$ is minimized by $s=1/2$, which  implies that $d_{\text{B}}(\myChv{\myin},\myChv{\mysin})=d_{\text{C}}(\myChv{\myin},\myChv{\mysin})$ for any pair of inputs $\myin$, $\mysin$.
For any channel, for a given code, the probability of error $\Pemax$ is lower bounded by the probability of error in each binary hypothesis test between two codewords. Hence, we find that 
\begin{equation}
\log\frac{1}{ \Pemax}\leq  \min_{m\neq m'} 2\log\frac{1}{ \braket{\bm{\psi}
_m}{\bm{\psi}_{m'}}} + o(n)
\label{eq"pemaxprodscal1}
\end{equation}
and, for pairwise reversible channels, 
\begin{equation}
\log\frac{1}{ \Pemax}\leq  \min_{m\neq m'} \log\frac{1}{ \braket{\bm{\psi}
_m}{\bm{\psi}_{m'}}} + o(n).
\label{eq"pemaxprodscal2}
\end{equation}
Hence, it is possible to upper bound $E(R)$ by lower bounding the quantity 
\begin{equation*}
\gamma=\max_{m\neq m'}\braket{\bm{\psi}
_m}{\bm{\psi}_{m'}}.
\end{equation*}

Lov\'asz' work aims at finding a value $\vartheta$ as small as possible that allows to conclude that, for a set of $M=e^{nR}>e^{n\vartheta}$ codewords, $\gamma$ cannot be zero, and thus at least two codewords are confusable. Here, instead, we want something more, that is, to find a lower bound on $\gamma$ for each code with rate $R>\vartheta$, so as to deduce an upper bound on $E(R)$ for all $R>\vartheta$.

\subsection{The umbrella bound}
\label{sec:Umbrella-umbrella}
Consider the scalar products between the channel state vectors $\braket{\psi_\myin}{\psi_\mysin}\geq 0$. For a fixed $\rho\geq 1$, consider then a set of $|\myIn|$ ``tilted'' state vectors, that is, unit norm vectors $\tilde{\psi}_1, \tilde{\psi}_2,\ldots, \tilde{\psi}_{|\myIn|}$ in any Hilbert space such that $| \braket{\tilde{\psi}_\myin}{\tilde{\psi}_\mysin}|\leq \braket{\psi_\myin}{\psi_\mysin}^{1/\rho}$. We call such a set of vectors $\{\tilde{\psi}_\myin\}$ an \emph{orthonormal representation of degree $\rho$} of our channel, and we call $\Gamma(\rho)$ the set of all possible such representations,
\begin{equation*}
\Gamma(\rho) = \left\{ \{\tilde{\psi}_\myin\} \, :\,  | \braket{\tilde{\psi}_\myin}{\tilde{\psi}_\mysin}|\leq \braket{\psi_\myin}{\psi_\mysin}^{1/\rho}\right\}.
\end{equation*}
Observe that $\Gamma(\rho)$ is non-empty since the original $\psi_\myin$ vectors satisfy the constraints.
The \emph{value} of an orthonormal representation is the quantity 
\begin{equation*}
V(\{\tilde{\psi}_\myin\})=\min_f\max_\myin\log \frac{1}{|\braket{\tilde{\psi}_\myin}{f}|^2}
\end{equation*}
where the minimum is over all unit norm vectors $f$. The optimal choice of the vector $f$ is called, following Lov\'asz, the \emph{handle} of the representation. We call it $f$ to point out that this vector plays essentially the same role as the auxiliary output distribution $\mathbf{f}$ used in the sphere packing bound of \cite{shannon-gallager-berlekamp-1967-1} (with their notation), a role that will be played by an auxiliary density operator $F$ later on in Section \ref{sec:SPB}.

Call now $\vartheta(\rho)$ the minimum value over all representations of degree $\rho$, 
\begin{align*}
\vartheta(\rho) & = \min_{\{\tilde{\psi}_\myin\} \in \Gamma(\rho)}V(\{\tilde{\psi}_\myin\})\\
& = \min_{\{\tilde{\psi}_\myin\} \in \Gamma(\rho) }\min_f\max_\myin\log \frac{1}{|\braket{\tilde{\psi}_\myin}{f}|^2}.
\end{align*}
The function $\vartheta(\rho)$ allows us to find an upper bound to $E(R)$ that we call the \emph{umbrella bound}. Later in Section \ref{sec:SP-umbrella}, we will interpret this bound from a different perspective, and we will introduce an evolution based on the sphere packing bound.
We have the following result.

\begin{theorem}
For any code of block-length $n$ with $M$ codewords and any $\rho\geq 1$, we have
\begin{equation*}
\max_{m}\sum_{m'\neq m}\braket{\bm{\psi}_m}{\bm{\psi}_{m'}}\geq \frac{\left( M e^{-n\vartheta(\rho)} -1\right)^\rho}{(M-1)^{\rho-1}}.
\end{equation*}
\end{theorem}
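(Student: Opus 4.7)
My plan is to lift an optimal orthonormal representation of degree $\rho$ up to the $n$-fold product space, use the handle to build a one-dimensional ``witness'' direction, and then extract a bound on the average pairwise overlap via a Gram-matrix / projection argument. Finally, I convert the $1/\rho$-exponentiated sum into the desired $\rho$-th power form by Jensen's inequality.

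Concretely, let $\{\tilde{\psi}_\myin\}\in\Gamma(\rho)$ be an orthonormal representation of degree $\rho$ and $f$ its handle, achieving (or approaching) $\vartheta(\rho)$, so that $|\braket{\tilde{\psi}_\myin}{f}|^2\ge e^{-\vartheta(\rho)}$ for every $\myin$. Define the tensor lifts $\bm{\tilde{\psi}}_m=\tilde{\psi}_{x_{m,1}}\otimes\cdots\otimes\tilde{\psi}_{x_{m,n}}$ and $\bm{f}=f^{\otimes n}$. By multiplicativity of inner products on product vectors we get the two key estimates
\begin{equation*}
|\braket{\bm{\tilde{\psi}}_m}{\bm{f}}|^2\ge e^{-n\vartheta(\rho)},\qquad |\braket{\bm{\tilde{\psi}}_m}{\bm{\tilde{\psi}}_{m'}}|\le \braket{\bm{\psi}_m}{\bm{\psi}_{m'}}^{1/\rho}.
\end{equation*}

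Next, set $\alpha_m=\braket{\bm{f}}{\bm{\tilde{\psi}}_m}$ and choose unit-modulus complex coefficients $c_m=\overline{\alpha_m}/|\alpha_m|$, so that $c_m\alpha_m=|\alpha_m|\ge e^{-n\vartheta(\rho)/2}$. Because $\bm{f}\bm{f}^{*}\preceq I$ as operators, projecting onto $\bm{f}$ yields $\|\sum_m c_m\bm{\tilde{\psi}}_m\|^2\ge \big|\sum_m c_m\alpha_m\big|^2\ge M^2 e^{-n\vartheta(\rho)}$. On the other hand, expanding the squared norm and using $\|\bm{\tilde{\psi}}_m\|=1$, $|c_m|=1$, and the triangle inequality together with the second estimate above gives
\begin{equation*}
\Big\|\sum_m c_m\bm{\tilde{\psi}}_m\Big\|^2\;\le\;M+\sum_{m\ne m'}\braket{\bm{\psi}_m}{\bm{\psi}_{m'}}^{1/\rho}.
\end{equation*}
Combining the two bounds and dividing by $M$ produces an averaged inequality, so some index $m$ must satisfy $\sum_{m'\ne m}\braket{\bm{\psi}_m}{\bm{\psi}_{m'}}^{1/\rho}\ge Me^{-n\vartheta(\rho)}-1$.

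Finally, I pass from the $1/\rho$-exponentiated overlaps back to the overlaps themselves by the concavity of $t\mapsto t^{1/\rho}$ (valid since $\rho\ge 1$). Jensen's inequality over the $M-1$ terms gives
\begin{equation*}
\sum_{m'\ne m}\braket{\bm{\psi}_m}{\bm{\psi}_{m'}}^{1/\rho}\;\le\;(M-1)^{1-1/\rho}\bigg(\sum_{m'\ne m}\braket{\bm{\psi}_m}{\bm{\psi}_{m'}}\bigg)^{1/\rho},
\end{equation*}
and rearranging yields the claimed lower bound on $\max_m\sum_{m'\ne m}\braket{\bm{\psi}_m}{\bm{\psi}_{m'}}$. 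The statement is vacuous unless $Me^{-n\vartheta(\rho)}\ge 1$, so positivity of the base is not an issue. The only mildly delicate step is the complex phase cancellation: one has to be careful to take $c_m$ of unit modulus so that the diagonal contribution to $\|\sum c_m\bm{\tilde{\psi}}_m\|^2$ is exactly $M$ while the off-diagonal terms pick up only $|\braket{\bm{\tilde{\psi}}_m}{\bm{\tilde{\psi}}_{m'}}|$ after the triangle inequality; this is the place where a naïve real-valued argument would lose a factor and where Lovász's original trick is used.
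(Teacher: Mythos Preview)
Your proof is correct and follows the same overall architecture as the paper's: tensor-lift the optimal representation and handle, establish $|\braket{\bm{\tilde{\psi}}_m}{\bm{f}}|^2\ge e^{-n\vartheta(\rho)}$, convert this into a lower bound on pairwise overlaps of the $\bm{\tilde{\psi}}_m$, pass to the original $\bm{\psi}_m$ via the degree-$\rho$ constraint, and finish with Jensen. The one tactical difference is in the middle step. The paper forms $\mathbf{\tilde{\Psi}}=(\bm{\tilde{\psi}}_1,\ldots,\bm{\tilde{\psi}}_M)/\sqrt{M}$, observes that $\bra{\bm{f}}\mathbf{\tilde{\Psi}}\mathbf{\tilde{\Psi}}^{*}\ket{\bm{f}}\ge e^{-n\vartheta(\rho)}$, transfers this to $\lambda_{\max}(\mathbf{\tilde{\Psi}}^{*}\mathbf{\tilde{\Psi}})$, and then invokes the row-sum bound $\lambda_{\max}(A)\le\max_i\sum_j|A(i,j)|$ to land directly on $\max_m\sum_{m'}|\braket{\bm{\tilde{\psi}}_m}{\bm{\tilde{\psi}}_{m'}}|$. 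You instead test with the phase-aligned vector $\sum_m c_m\bm{\tilde{\psi}}_m$, project onto $\bm{f}$ for the lower bound and expand the norm for the upper bound; this yields the \emph{total} off-diagonal sum, from which you extract a good index $m$ by averaging. Both routes are equivalent here; yours is a touch more elementary in that it never mentions eigenvalues, while the paper's Gershgorin-type step delivers the $\max_m$ immediately without the pigeonhole.
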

\begin{corollary}
For the reliability function of a general DMC we have the bound
\begin{equation}
E(R)\leq 2\rho\,\vartheta(\rho), \qquad R>\vartheta(\rho).
\label{eq:bound_theta2_1}
\end{equation}
If the channel is pairwise reversible, we can strengthen the bound to
\begin{equation}
E(R)\leq \rho\,\vartheta(\rho), \qquad R>\vartheta(\rho).
\label{eq:bound_theta2_2}
\end{equation}
\end{corollary}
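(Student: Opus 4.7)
The plan is to lift Lov\'asz's orthonormal representation argument from single input symbols to whole codewords by means of tensor products. I fix an optimal representation $\{\tilde{\psi}_\myin\}\in \Gamma(\rho)$ that achieves $\vartheta(\rho)$ together with an optimal handle $f$, and for each codeword $\bm{\myin}_m$ I form the tilted codeword vector $\tilde{\bm{\psi}}_m = \tilde{\psi}_{\myin_{m,1}}\otimes \cdots \otimes \tilde{\psi}_{\myin_{m,n}}$ and the product handle $\bm{f} = f^{\otimes n}$. Multiplicativity of the inner product on tensor products immediately gives the two bounds
\begin{align*}
|\braket{\tilde{\bm{\psi}}_m}{\bm{f}}|^2 &\geq e^{-n\vartheta(\rho)},\\
|\braket{\tilde{\bm{\psi}}_m}{\tilde{\bm{\psi}}_{m'}}| &\leq \braket{\bm{\psi}_m}{\bm{\psi}_{m'}}^{1/\rho}.
\end{align*}
Because only the moduli of these inner products are constrained, I am free to rephase each $\tilde{\bm{\psi}}_m$ by an arbitrary unit complex number; choosing the phase so that $\braket{\bm{f}}{\tilde{\bm{\psi}}_m}$ is a nonnegative real makes $\braket{\bm{f}}{\tilde{\bm{\psi}}_m}\geq e^{-n\vartheta(\rho)/2}$ for every $m$.

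The central step is then a Cauchy--Schwarz inequality mirroring Lov\'asz's own argument:
\begin{equation*}
M^2 e^{-n\vartheta(\rho)}\leq \Bigl|\textstyle\sum_m \braket{\bm{f}}{\tilde{\bm{\psi}}_m}\Bigr|^2 \leq \Bigl\|\textstyle\sum_m \tilde{\bm{\psi}}_m\Bigr\|^2 = M + \sum_{m\neq m'}\braket{\tilde{\bm{\psi}}_m}{\tilde{\bm{\psi}}_{m'}}.
\end{equation*}
The off-diagonal double sum on the right equals its own complex conjugate under the relabeling $m \leftrightarrow m'$, hence is real, and is bounded above by $\sum_{m\neq m'}|\braket{\tilde{\bm{\psi}}_m}{\tilde{\bm{\psi}}_{m'}}|\leq \sum_{m\neq m'}\braket{\bm{\psi}_m}{\bm{\psi}_{m'}}^{1/\rho}$. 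Averaging over $m$ then produces a codeword $m^*$ with
\begin{equation*}
\sum_{m'\neq m^*}\braket{\bm{\psi}_{m^*}}{\bm{\psi}_{m'}}^{1/\rho}\geq Me^{-n\vartheta(\rho)}-1.
\end{equation*}
A single application of H\"older's inequality with exponents $\rho$ and $\rho/(\rho-1)$ bounds this left-hand side by $(M-1)^{(\rho-1)/\rho}\bigl(\sum_{m'\neq m^*}\braket{\bm{\psi}_{m^*}}{\bm{\psi}_{m'}}\bigr)^{1/\rho}$; raising the resulting inequality to the $\rho$-th power yields the theorem (the statement is vacuous when $Me^{-n\vartheta(\rho)}\leq 1$, and otherwise we take the positive $\rho$-th root).

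For the corollary, I bound $\max_m \sum_{m'\neq m}\braket{\bm{\psi}_m}{\bm{\psi}_{m'}}\leq (M-1)\gamma$, where $\gamma = \max_{m\neq m'}\braket{\bm{\psi}_m}{\bm{\psi}_{m'}}$, to obtain $\gamma \geq ((Me^{-n\vartheta(\rho)}-1)/(M-1))^\rho$. For $M=\lceil e^{nR}\rceil$ with $R>\vartheta(\rho)$, the ratio inside the parentheses equals $e^{-n\vartheta(\rho)}(1+o(1))$ as $n\to\infty$, so $\log(1/\gamma)\leq n\rho\vartheta(\rho)+o(n)$. Plugging this into the already-established inequalities that control $\log(1/\Pemax)$ by $\log(1/\gamma)$---a factor $2$ generically and a factor $1$ in the pairwise reversible case---and taking the $\limsup$ delivers the two bounds $E(R)\leq 2\rho\vartheta(\rho)$ and, respectively, $E(R)\leq \rho\vartheta(\rho)$. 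The main technical care point will be the phase handling in the Cauchy--Schwarz step: the independent rephasing of each $\tilde{\bm{\psi}}_m$ is what makes the inequality tight enough to survive, and the fact that the channel-level inner products $\braket{\bm{\psi}_m}{\bm{\psi}_{m'}}$ are automatically nonnegative (since the components of each $\psi_\myin$ are square roots of probabilities) is what lets the conclusion be stated without absolute value signs.
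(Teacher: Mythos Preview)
Your proof is correct and follows the same overall architecture as the paper: lift the representation to tensor products, use the handle to force a row of the tilted Gram matrix to have large $\ell^1$ mass, then apply the constraint $|\braket{\tilde{\bm{\psi}}_m}{\tilde{\bm{\psi}}_{m'}}|\leq\braket{\bm{\psi}_m}{\bm{\psi}_{m'}}^{1/\rho}$ together with concavity of $t\mapsto t^{1/\rho}$ to descend to the true channel inner products. The one technical difference is in how the row-sum lower bound is extracted. The paper observes that $\bra{\bm{f}}\tilde{\bm{\Psi}}\tilde{\bm{\Psi}}^*\ket{\bm{f}}\geq e^{-n\vartheta(\rho)}$ forces $\lambda_{\max}(\tilde{\bm{\Psi}}^*\tilde{\bm{\Psi}})\geq e^{-n\vartheta(\rho)}$ and then invokes the bound $\lambda_{\max}(A)\leq\max_i\sum_j|A(i,j)|$; you instead rephase each $\tilde{\bm{\psi}}_m$ to make $\braket{\bm{f}}{\tilde{\bm{\psi}}_m}\geq 0$ and apply Cauchy--Schwarz to $\bigl|\sum_m\braket{\bm{f}}{\tilde{\bm{\psi}}_m}\bigr|^2$, then average over $m$. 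Both routes yield exactly the same inequality $\max_m\sum_{m'\neq m}\braket{\bm{\psi}_m}{\bm{\psi}_{m'}}^{1/\rho}\geq Me^{-n\vartheta(\rho)}-1$, and the passage to the corollary via \eqref{eq"pemaxprodscal1}--\eqref{eq"pemaxprodscal2} is identical. Your argument is slightly more elementary in that it avoids eigenvalue facts; the paper's eigenvalue route avoids the rephasing bookkeeping.
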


\begin{IEEEproof}
Note that, for an optimal representation of degree $\rho$ with handle $f$, we have $|\braket{\tilde{\psi}_\myin}{f}|^2\geq e^{-\vartheta(\rho)}$, $\forall \myin$. Set now $\bm{f}=f^{\otimes n}$ and for an input sequence $\bm{\myin}=(\myin_1,\myin_2,\ldots,\myin_n)$ call, in analogy with \eqref{eq:defPsi},  $\tilde{\bm{\psi}}_{\bm{\myin}}=\tilde{\psi}_{\myin_1}\otimes\tilde{\psi}_{\myin_2}\otimes\cdots
\tilde{\psi}_{\myin_n}$. Observe that  we have 
\begin{align}
|\braket{\tilde{\bm{\psi}}_{\bm{\myin}}}{\bm{f}}|^2 & =  \prod_{i=1}^n|\braket{\tilde{\psi}_{\myin_i}}{f}|^2\\
& \geq  e^{-n\vartheta(\rho)} \label{eq:tilted_vs_F}.
\end{align}
This is the key step which is central to both Lov\'asz' approach and to the sphere packing bound: the construction of an auxiliary state which is ``close'' to all possible states associated to any sequence. In this case the states are close in terms of scalar product, while in the sphere packing bound they will be close in terms of the more general R\'enyi divergence. The basic idea, however, is not different. 

Let us first check how Lov\'asz' bound is obtained. Lov\'asz' approach is to bound the number $M$ of codewords with orthogonal state vectors, using the property that if $\tilde{\bm{\psi}}_1,\tilde{\bm{\psi}}_2,\ldots\tilde{\bm{\psi}}_M$ form an orthonormal set, then
\begin{align*}
1 & =  \|\bm{f}\|_2^2 \\
& \geq  \sum_m |\braket{\tilde{\bm{\psi}}_{m}}{\bm{f}}|^2\\
& \geq  Me^{-n\vartheta(\rho)}.
\end{align*}
Hence, if $M>e^{n\vartheta(\rho)}$, there are at least two non-orthogonal vectors in the set, say $|\braket{\tilde{\bm{\psi}}_m}{\tilde{\bm{\psi}}_{m'}}|^2>0$. But this implies that $|\braket{\bm{\psi}_m}{\bm{\psi}_{m'}}|^2\geq|\braket{\tilde{\bm{\psi}}_m}{\tilde{\bm{\psi}}_{m'}}|^{2\rho}>0$. Hence, if $R>\vartheta(\rho)$, no zero-error code can exist. We still have the free choice of $\rho$, and it is obvious that larger values of $\rho$ can only give better results. It is then preferable to simply work in the limit of $\rho\to \infty$ and thus build the representation $\{\tilde{\psi}_x\}$ under the only constraint that $| \braket{\tilde{\psi}_\myin}{\tilde{\psi}_\mysin}| = 0$ whenever $| \braket{{\psi}_\myin}{{\psi}_\mysin}|=0$. This gives precisely Lov\'asz' result.

Now, instead of bounding $R$ under the hypothesis of zero-error communication, we want to bound the probability of error for a given $R>\vartheta(\rho)$.
Considering the tilted state vectors of the code, we can rewrite equation \eqref{eq:tilted_vs_F} as
\begin{align*}
|\braket{\tilde{\bm{\psi}}_{m}}{\bm{f}}|^2  &  =  \bra{\bm{f}}\left(\ket{\tilde{\bm{\psi}}_m}\bra{\tilde{\bm{\psi}}_m}\right)\ket{\bm{f}}
\\ & \geq  e^{-n\vartheta(\rho)}.
\end{align*}
The second expression above has the benefit of easily allowing to average it  over different codewords. So, we can average this expression over all $m$ and, defining the matrix $\mathbf{\tilde{\Psi}}=\left(\tilde{\bm{\psi}}_1,\ldots,\tilde{\bm{\psi}}_M\right)/\sqrt{M}$, we get
\begin{equation*}
\bra{\bm{f}}\mathbf{\tilde{\Psi}}\mathbf{\tilde{\Psi}}^*\ket{\bm{f}}\geq e^{-n\vartheta(\rho)}.
\end{equation*}
Since $\bm{f}$ is a unit norm vector, this implies that the matrix $\mathbf{\tilde{\Psi}}\mathbf{\tilde{\Psi}}^*$ has at least one eigenvalue larger than or equal to $e^{-n\vartheta(\rho)}$. This in turn implies that also the matrix  $\mathbf{\tilde{\Psi}}^*\mathbf{\tilde{\Psi}}$ has itself an eigenvalue larger than or equal to $e^{-n\vartheta(\rho)}$, that is
\begin{equation*}
\lambda_{\max}\left(\mathbf{\tilde{\Psi}}^*\mathbf{\tilde{\Psi}}\right)\geq e^{-n\vartheta(\rho)}.
\end{equation*}
It is known that for a given matrix $A$ with elements $A(i,j)$, the following inequality holds
\begin{equation*}
\lambda_{\max}(A)\leq \max_{i}\sum_j|A(i,j)|.
\end{equation*}
Using this inequality with $A=\mathbf{\tilde{\Psi}}^*\mathbf{\tilde{\Psi}}$ we get
\begin{align*}
e^{-n\vartheta(\rho)} & \leq 
\max_m\sum_{m'} \frac{ |\braket{\tilde{\bm{\psi}}_m}{\tilde{\bm{\psi}}_{m'}}| } {M}\\& =  \frac{1}{M}\left(1+\max_m\sum_{m'\neq m}
|\braket{\tilde{\bm{\psi}}_m}{\tilde{\bm{\psi}}_{m'}}|\right).
\end{align*}
We then deduce
\begin{align*}
\frac{Me^{-n\vartheta(\rho)}-1}{M-1} & \leq \max_m\frac{1}{M-1}\sum_{m'\neq m}
|\braket{\tilde{\bm{\psi}}_m}{\tilde{\bm{\psi}}_{m'}}|\\
&\leq \max_m\frac{1}{M-1}\sum_{m'\neq m}
\braket{\bm{\psi}_m}{\bm{\psi}_{m'}}^{1/\rho}\\
& \leq \max_m\left(\frac{1}{M-1}\sum_{m'\neq m}
\braket{\bm{\psi}_m}{\bm{\psi}_{m'}}\right)^{1/\rho},
\end{align*}
where the last step is due to the Jensen inequality, since $\rho\geq 1$.
Extracting the sum from this inequality we obtain the inequality stated in the theorem.

To prove the corollary, simply note that 
\begin{align*}
\max_{m\neq m'}  \braket{\bm{\psi}_m}{\bm{\psi}_{m'}} & \geq \max_m \frac{1}{M-1}\sum_{m'\neq m}
|\braket{\bm{\psi}_m}{\bm{\psi}_{m'}}|\\
& \geq \left(\frac{Me^{-n\vartheta(\rho)}-1}{M-1}\right)^{\rho}\\
& \geq \left( e^{-n\vartheta(\rho)}-e^{-nR}\right)^{\rho}.
\end{align*}
The bound is trivial if $R\leq \vartheta(\rho)$. If $R>\vartheta(\rho)$, we deduce again Lov\'asz' result that there are two non-orthogonal codewords. But now we also have some further information; for $R>\vartheta(\rho)$, the second term in the parenthesis decreases exponentially faster than the first, which leads us to the conclusion that 
\begin{equation*}
\frac{1}{n}\min_{m\neq m'}\log \frac{1}{|\braket{{\bm{\psi}}_m}{{\bm{\psi}}_{m'}}|}\leq \rho \vartheta(\rho) + o(1).
\end{equation*}
The bounds in terms of $E(R)$ are then obtained by simply taking the limit $n\to\infty$ and using the bounds \eqref{eq"pemaxprodscal1} and \eqref{eq"pemaxprodscal2}.
\end{IEEEproof}

\begin{remark}
In passing from the theorem to the corollary, we have essentially substituted the \emph{maximum} Bhattacharyya distance between codewords for the largest \emph{average} distance from one codeword to the remaining ones. The reason for doing this is that we are unable to bound $E(R)$ efficiently in terms of the \emph{sum} of the distances, although intuition suggests that it should be possible to do it, in consideration of the behavior of $E(R)$ near the critical rate. This is related to the tightness of the union bound; it is our firm belief that this step is crucial and that improvements in this sense  could give important enhancements in the resulting bound. 
\end{remark}

A comment about the computation of this bound is in order. There is no essential difference between the evaluation of $\vartheta(\rho)$ and the evaluation of $\vartheta$. The optimal representation $\{\tilde{\psi}_\myin\}$, for any fixed $\rho$, can be obtained by solving a semidefinite optimization problem. If we consider the $(|\myIn|+1)\times (|\myIn|+1)$ Gram matrix
\begin{equation*}
G=[\tilde{\psi}_1,\ldots,\tilde{\psi}_{|\myIn|}, f]^T[\tilde{\psi}_1,\ldots,\tilde{\psi}_{|\myIn|}, f]
\end{equation*}
we note that finding the optimal representation amounts to solving the problem
\begin{equation*}
\begin{array}{lrcl}
& \max  & t & \\
\mbox{s.t.} & G(\myin,|\myIn|+1) & \geq  &t, \quad \forall \myin\leq |\myIn|\\
& G(\myin,\myin) & =  & 1, \quad \forall \myin\\
& G(\myin,\mysin) & \leq & \braket{\psi_\myin}{\psi_\mysin}^{1/\rho}, \quad \myin\neq \mysin\\
 &  G & \mbox{is} & \mbox{positive semidefinite.}
 \end{array}
\end{equation*}
The solution to this problem is $t^*=e^{-\vartheta(\rho)/2}$, and both the optimal representation vectors $\{\tilde{\psi}_\myin\}$ and the handle $f$ can be obtained by means of the spectral decomposition of the optimal $G$ found.

\subsection{Relation to known classical quantities}
\label{sec:Umbrella-classical}
We now study the behaviour of $\vartheta(\rho)$ for different values of $\rho$.
A first important comment is about the result obtained for $\rho=1$; the value $\vartheta(1)$ is simply the cutoff rate of the channel. Indeed, for $\rho=1$, we can without loss of generality use the obvious representation $\tilde{\psi}_\myin=\psi_\myin, \forall \myin$, since any different optimal representation will simply be a rotation of this (or an equivalent description in a space with a different dimension). In this case, all the components of all the vectors $\{\tilde{\psi}_\myin\}$ are non-negative, and this easily implies that the optimal $f$ can as well be chosen with non-negative components, since changing a supposedly negative component of $f$ to its absolute value can only improve the result.
Thus, $f$ can be written as the square root of a probability density $Q$ on $\myOut$ and we have 
\begin{align*}
\vartheta(1) & =  \min_f\max_\myin\log \frac{1}{|\braket{{\psi}_\myin}{f}|^2}\\
& =  \min_{Q} \max_\myin \left(-2\log \sum_\myout\sqrt{Q(\myout) \myCh{\myin}{\myout}}\right),
\end{align*}
where the minimum is now over all probability distribution $Q$ on the output alphabet $\myOut$.
As observed by Csisz\'ar \cite[Proposition 1, with $\alpha=1/2$]{csiszar-1995}, this expression equals the cutoff rate\footnote{We use the notation $R_1$ for the cutoff rate, instead of the more common $R_0$, for notational needs that will become clear in Section \ref{sec:Inform-radii}.}
 $R_1$ of the channel defined as
\begin{align*}
R_1 & = \max_{P}\left[- \log \sum_{\myin,\mysin}P(\myin)P(\mysin)\left(\sum_\myout \sqrt{\myCh{\myin}{\myout}\myCh{\mysin}{\myout}}\right)\right]\\
& =  \max_{P}\left[- \log \sum_{\myin,\mysin}P(\myin)P(\mysin) \braket{\psi_\myin}{\psi_\mysin}\right].
\end{align*}
The identity $\vartheta(1)=R_1$ will be discussed again later in light of the new interpretation that we will give of $\vartheta(\rho)$ after studying the sphere packing bound. We will see that it represents a nice connection between a classical channel and a pure-state classical-quantum channel possibly underlying the classical one.

Another important characteristic of the function $\vartheta(\rho)$ is observed in the limit $\rho\to\infty$. In the limit, the only constraint on representations is that $| \braket{\tilde{\psi}_\myin}{\tilde{\psi}_\mysin}| = 0$ whenever $| \braket{{\psi}_\myin}{{\psi}_\mysin}|=0$. Hence, when $\rho\to\infty$, the set of possible representations is precisely the same considered by Lov\'asz \cite{lovasz-1979}, and we thus have $\vartheta(\rho)\to\vartheta$ as $\rho\to \infty$. So, the value of $\vartheta(\rho)$ moves from the cutoff rate to the Lov\'asz theta function  when $\rho$ varies from $1$ to $\infty$.
This clearly implies that our bound to $E(R)$ is finite for all $R>\vartheta$ and thus Lov\'asz' bound
\begin{align*}
C_0 & \leq  \lim_{\rho\to\infty}\vartheta(\rho)\\
& =  \vartheta.
\end{align*}

In order to understand what happens for values of $\rho$ between 1 and $\infty$, it is instructive to consider first a class of channels introduced by Jelinek \cite{jelinek-1968} and later also studied by Blahut \cite{blahut-1977}. These are channels for which the $|\myIn|\times |\myIn|$ matrix $G_\rho$ with $(\myin,\mysin)$ element $G_\rho(\myin,\mysin)=\braket{\psi_\myin}{\psi_\mysin}^{1/\rho}$ is positive semidefinite for all $\rho\geq 1$. It was proved by Jelinek that, for these channels, the expurgated bound of Gallager \cite{gallager-1965} is invariant over $n$-fold extensions of the channel, that is, it has the same form when computed on a single channel use or on multiple channel uses (this is not true in general). Thus, if the conjecture made in \cite[pag. 77]{shannon-gallager-berlekamp-1967-1}, that the expurgated bound computed on the $n$-fold channel is tight asymptotically when $n\to \infty$, is true, then for these channels the reliability would be known exactly since it would equal the expurgated bound for the single use channel. It is also known that for these channels, the inputs can be partitioned in subsets such that all pairs of symbols from the same subset are confusable and no pair of symbols from different subsets are confusable. The zero error capacity in this case is simply the logarithm of the number of such subsets.
For these channels, since the matrix $G_\rho$ is positive semidefinite, there exists a set of vectors $\tilde{\psi}_1,\tilde{\psi}_2,\ldots,\tilde{\psi}_{|\myIn|}$ such that $\braket{\tilde{\psi}_\myin}{\tilde{\psi}_\mysin}=G_\rho(\myin,\mysin)$, that is, for all $\rho\geq 1 $, representations of degree $\rho$ exist that satisfy all the constraints with equality. 
In this case, the equivalence with the cutoff rate that we have seen for $\rho=1$ can be in a sense extended to other $\rho$ values. 
We will in fact see in Section \ref{sec:classical-pure} that we can write 
\begin{align}
\vartheta(\rho) & =  \min_f\max_\myin\log \frac{1}{|\braket{\tilde{\psi}_\myin}{f}|^2}\\
 & =  \max_{P}\left[- \log \sum_{\myin,\mysin}P(\myin)P(\mysin) \braket{\tilde{\psi}_\myin}{\tilde{\psi}_\mysin}\label{eq:alt_def_theta}\right]\\
  & =  \max_{P}\left[- \log \sum_{\myin,\mysin}P(\myin)P(\mysin) \braket{{\psi}_\myin}{{\psi}_\mysin}^{1/\rho}\label{eq:alt_def_theta2}\right]\\
& = \max_{P} \frac{\Ex(\rho,P)}{\rho},
\end{align}
where, in the last step, we have used \eqref{eq:class_expurgated3}.
Hence, under such circumstances, we find that $\vartheta(\rho)=\Ex(\rho)/\rho $, where $\Ex(\rho)$ is  the value of the coefficient  used in the expurgated bound as defined in equations \eqref{eq:class_expurgated2}-\eqref{eq:class_expurgated3}.
Note that, for each $\rho$, the expurgated bound is a straight line which intercepts the axis $R$ and $E$ at the points $\Ex(\rho)/\rho$ and $\Ex(\rho)$ respectively, which equal $\vartheta(\rho)$ and $\rho\,\vartheta(\rho)$. Hence, our bound is obtained by drawing the curve parameterized as $(\Ex(\rho)/\rho, 2\Ex(\rho))$ in the $(R,E)$ plane. This automatically implies that we obtain the bound
\begin{align*}
C_0 \leq \lim_{\rho\to\infty} \frac{\Ex(\rho)}{\rho},
\end{align*}
which gives the precise value of the zero-error capacity in this case (which is however trivial) and, if $C_0=0$, the bound
\begin{align*}
E(0) & \leq  \lim_{\rho\to\infty} 2\rho\,\vartheta(\rho)\\
& =  \lim_{\rho\to\infty} 2\Ex(\rho)\\
& =  2 \Eex(0).
\end{align*}
If the channel is pairwise reversible, this can then be improved to $E(R)\leq \Eex(0)$, which is obviously tight.

For general channels with a non-trivial zero-error capacity, like for example any channel whose confusability graph is a pentagon, what happens is that the matrix $G_\rho$ is in general positive semidefinite only for values of $\rho$ in a range $[1,\bar{\rho}]$, and then it becomes not positive semidefinite for  $\rho>\bar{\rho}$. This implies that for $\rho>\bar{\rho}$, representations that satisfy all the constraints with equality do not exist in general. In this case, the two expressions in equations \eqref{eq:alt_def_theta} and \eqref{eq:alt_def_theta2} are no longer equal and in general they could both differ from $\vartheta(\rho)$. If all the values $\braket{\tilde{\psi}_\myin}{\tilde{\psi}_{\mysin}}$ are nonnegative, however, we will prove by means of Theorem \ref{th:RinftyV} in Section \ref{sec:classical-pure} that the expression in \eqref{eq:alt_def_theta} equals $\vartheta(\rho)$. In this case, we see the interesting difference between $\vartheta(\rho)$ and $E_x(\rho)/\rho$. The two 	quantities follow respectively \eqref{eq:alt_def_theta} and \eqref{eq:alt_def_theta2}. When $\rho\to\infty$, $\vartheta(\rho)$ tends to $\vartheta$, an upper bound to $C_0$. The value $E_x(\rho)/\rho$ instead is known to converge to the independence number of the confusability graph of the channel \cite{korn-1968}, a lower bound to  $C_0$.
More generally, if $\braket{\tilde{\psi}_\myin}{\tilde{\psi}_{\mysin}}\geq 0$,  $\forall \myin,\mysin$, since $\vartheta(\rho)$ is given by equation \eqref{eq:alt_def_theta}, it is an upper bound to \eqref{eq:alt_def_theta2} and thus to $E_{\text{x}}(\rho)/\rho$. It can then also be proved that \eqref{eq:alt_def_theta} is multiplicative, in this case, over the $n$-fold tensor power of the representation $\{\tilde{\psi}_\myin\}$. This implies that, for all $n$, $\vartheta(\rho)$ is an upper bound to the (normalized) expurgated bound $E_{\text{x}}^{(n)}(\rho)/\rho$ computed for the $n$-fold memoryless extension of the channel. That is, $\vartheta(\rho)$ generalizes $\vartheta$ in the sense that, in the same way as
\begin{equation*}
\vartheta \geq \sup_{n}\lim_{\rho\to \infty} \frac{E_{\text{x}}^{(n)}(\rho)}{\rho}=C_0,
\end{equation*} 
also
\begin{equation*}
\vartheta(\rho) \geq \sup_{n}\frac{E_{\text{x}}^{(n)}(\rho)}{\rho}.
\end{equation*} 
The discussion of this point with generality requires some technicalities about the function $\vartheta(\rho)$ that would bring us too far and will be presented in a future work.

It is worth pointing out that, for some channels, the optimal representation may even stay fixed for $\rho$ larger than some given finite value $\rho_{\max}$, and $\vartheta(\rho)$ is thus constant for $\rho\geq \rho_{\max}$ (in this case, the bounds are useless for $\rho>\rho_{\max}$). This happens for example for the noisy typewriter channel with five inputs and crossover probability $1/2$. In this case $\bar{\rho}=\rho_{\max}\approx 1.44$; as shown in Fig. \ref{fig}, for $\rho<\rho_{\max}$ we have $\vartheta(\rho)=E_{\text{x}}(\rho)/\rho$ while, for $\rho\geq \rho_{\max}$, $\vartheta(\rho)=C_0=\log\sqrt{5}$.

\begin{figure}
\centering
	\includegraphics[width=0.7\linewidth]{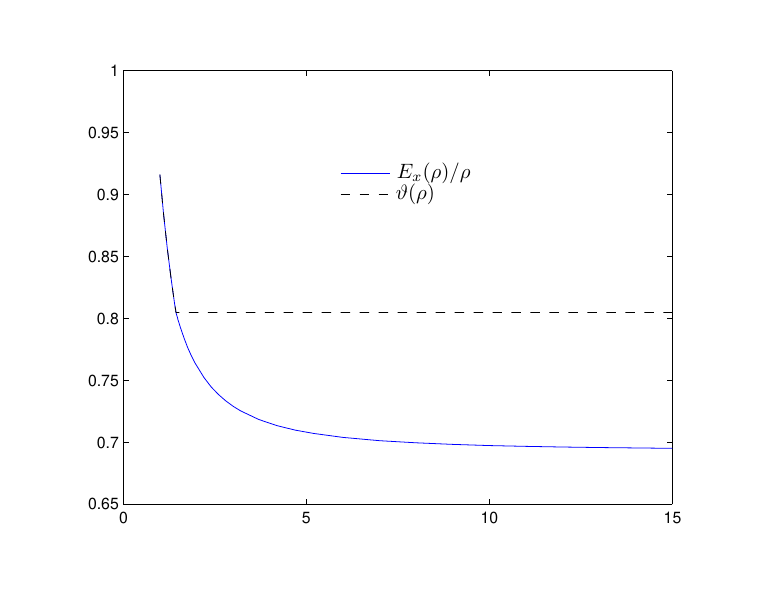}
	\caption{Plot of $\vartheta(\rho)$ and $E_x(\rho)/\rho$ for the noisy typewriter channel with five inputs and crossover probability $1/2$. \textbf{Note:} this plot is wrong in the published version of the paper.}
\label{fig}
\end{figure}

\subsection{Relation to classical-quantum channels}
\label{sec:Umbrella-quantum}

In deriving the umbrella bound in this section, we have mentioned classical-quantum channels but we have not explicitly used any of their properties. The derived bound could be interpreted as a simple variation of Lov\'asz' argument toward bounding $E(R)$. We decided in any case to use a notation that suggests an interpretation in terms of classical-quantum channels because, as we will see later in the paper, the bound derived here is a special case of a more general bound that can be derived by properly applying the sphere packing bound for classical-quantum channels.

In particular, while the construction of the representation $\{\tilde{\psi}_\myin\}$ appears in this section as a purely mathematical trick to bound $E(R)$ by means of a geometrical representation of the channel, it will be seen from the results of Section \ref{sec:SP-umbrella} that in the context of classical-quantum channels this procedure is a natural way to bound $E(R)$ by comparing the original channel with an auxiliary one. In the classical case, Lov\'asz' result came completely unexpected since it involves the unconventional idea of using vectors with negative components to play the same role of $\sqrt{\myChv{\myin}}$. When formulated in the classical-quantum setting, however, this approach becomes completely transparent and does not require pushing imagination out of the original domain. 
We may say that classical-quantum channels are to classical channels as complex numbers are to real numbers. In this analogy, Lov\'asz' theta function is like Cardano's solution for cubics.

\section{Quantum Binary Hypothesis Testing}
\label{sec:Quantum-H-test}
In this section, we consider the problem of binary hypothesis testing between quantum states. In particular, we will prove a quantum extension of the converse part of the Shannon-Gallager-Berlekamp generalized Chernoff bound \cite[Th. 5]{shannon-gallager-berlekamp-1967-1}. This is a fundamental tool in bounding the probability of error for codes over classical-quantum channels and it will thus play a central role in Sections \ref{sec:SPB} and \ref{sec:Zero-rate} for the proof of the sphere packing bound and the zero-rate bound.
 
Let $A$ and $B$ be two density operators in a Hilbert space $\mathcal{H}$. We are interested in the problem of discriminating between the hypotheses that a given system is in state $A$ or $B$. We suppose here that the two density operators have non-disjoint supports, for otherwise the problem is trivial.
Of particular importance in quantum statistics is the case where $n$ independent copies of the system are available, which means that we actually have to decide between the $n$-fold tensor powers $A^{\otimes n}$ and $B^{\otimes n}$ of $A$ and $B$. 
The decision has to be taken based on the result of a measurement that can be identified with a pair of positive operators $\{\mathds{1}-\Pi^{(n)}, \Pi^{(n)}\}$ associated with $A^{\otimes n}$ and $B^{\otimes n}$ respectively. The probability of error when the state is $A^{\otimes n}$ or $B^{\otimes n}$ are, respectively,
\begin{equation*}
\mathsf{P}_{\text{e}|A^{\otimes n}}=\Tr\left[\Pi^{(n)} A^{\otimes n}\right], \quad \mathsf{P}_{\text{e}|B^{\otimes n}}=\Tr\left[(\mathds{1}-\Pi^{(n)})B^{\otimes n}\right].
\end{equation*}
We are interested in the asymptotic behavior of the probability of error as $n$ goes to infinity.
The following result was recently derived in \cite{audenaert-2007}, \cite{nussbaum-szkola-2009} (see also \cite{audenaert-et-al-2008}).
\begin{theorem}[Quantum Chernoff Bound]
\label{th:QChernoff}
Let $A,B$ be density operators with Chernoff distance $d_{\text{C}}(A,B)$ and let $\eta_0$ and $\eta_1$ be positive real numbers. For any $n$, let
\begin{equation*}
\Pe^{(n)}=\inf_{{\Pi}^{(n)}}\left[\eta_0 \mathsf{P}_{\text{e}|A^{\otimes n}}+ \eta_1 \mathsf{P}_{\text{e}|B^{\otimes n}}\right],
\end{equation*}
where the infimum is over all measurements. Then
\begin{equation}
\lim_{n\to \infty}-\frac{1}{n}\log \Pe^{(n)}=d_{\text{C}}(A,B).
\label{eq:QChernoff}
\end{equation}
\end{theorem}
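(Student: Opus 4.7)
The plan is to prove the two directions separately: the achievability bound $\liminf_n -\frac{1}{n}\log \Pe^{(n)} \geq d_{\text{C}}(A,B)$ and the converse bound $\limsup_n -\frac{1}{n}\log \Pe^{(n)} \leq d_{\text{C}}(A,B)$. Both parts reduce to controlling the single-copy quantity $\Tr[A^{1-s}B^s]$ defined in \eqref{eq:mu_quantum_def_1}, since tensorization trivially gives $\Tr[(A^{\otimes n})^{1-s}(B^{\otimes n})^s] = e^{n\mu_{A,B}(s)}$.

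For achievability, the starting point is the Holevo--Helstrom formula: the infimum over POVMs is attained by the projector $\Pi^{(n)}$ onto the positive part of $\eta_1 B^{\otimes n} - \eta_0 A^{\otimes n}$, and equals
\begin{equation*}
\Pe^{(n)} = \tfrac{1}{2}\bigl(\eta_0 + \eta_1 - \|\eta_0 A^{\otimes n} - \eta_1 B^{\otimes n}\|_1\bigr).
\end{equation*}
The main technical input is Audenaert's operator inequality: for any positive semidefinite $X,Y$ and $s\in[0,1]$,
\begin{equation*}
\Tr[X] + \Tr[Y] - \|X-Y\|_1 \leq 2\,\Tr[X^{1-s}Y^s].
\end{equation*}
Applying this with $X = \eta_0 A^{\otimes n}$ and $Y = \eta_1 B^{\otimes n}$ and using tensorization gives $\Pe^{(n)}\leq \eta_0^{1-s}\eta_1^{s}\, e^{n\mu_{A,B}(s)}$. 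Taking $-\frac{1}{n}\log$ and minimizing over $s\in[0,1]$ produces the lower bound $d_{\text{C}}(A,B)$ up to an $O(1/n)$ term.

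For the converse, the idea is to reduce to a classical problem via the Nussbaum--Szko\l a mapping. Writing the spectral decompositions $A = \sum_i a_i \ket{e_i}\bra{e_i}$ and $B = \sum_j b_j \ket{f_j}\bra{f_j}$, define probability distributions on index pairs by
\begin{equation*}
P(i,j) = a_i |\braket{e_i}{f_j}|^2, \qquad Q(i,j) = b_j |\braket{e_i}{f_j}|^2.
\end{equation*}
A direct computation shows $\sum_{i,j}P(i,j)^{1-s}Q(i,j)^s = \Tr[A^{1-s}B^s]$, so $\mu_{P,Q}(s)=\mu_{A,B}(s)$ and hence $d_{\text{C}}(P,Q)=d_{\text{C}}(A,B)$. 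The Nussbaum--Szko\l a key lemma then asserts that for every two-outcome POVM on $n$ copies, the weighted quantum error dominates (up to a factor independent of $n$) the classical Bayes error of some test for distinguishing $P^{\otimes n}$ from $Q^{\otimes n}$; the proof of this step rests on elementary operator inequalities comparing $\Tr[\Pi^{(n)} A^{\otimes n}]$ with sums of the $P(i,j)$'s. Invoking the classical converse Chernoff bound (as in \cite{shannon-gallager-berlekamp-1967-1}) for $P^{\otimes n}$ and $Q^{\otimes n}$ then yields $\Pe^{(n)}\geq c\cdot e^{-n d_{\text{C}}(P,Q)}\cdot e^{-o(n)}$, completing the converse.

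The hardest ingredient is clearly Audenaert's operator inequality. The difficulty is purely non-commutative: pointwise on spectra one has $x+y-|x-y| = 2\min(x,y) \leq 2 x^{1-s}y^s$, but lifting this to operators is obstructed by the fact that $|X-Y|$ and $X^{1-s}Y^s$ involve incompatible functional calculi. The standard route is to write $|X-Y|$ via an integral representation, handle the two operator orderings separately using operator-monotone properties of $t\mapsto t^s$, and recombine; alternatively, one proves a stronger matrix inequality via joint-convexity of $(X,Y)\mapsto \Tr[X^{1-s}Y^s]$ (Lieb's theorem). On the converse side, the subtlety is establishing the quantum-to-classical reduction with only a constant-factor loss, since a naive data-processing argument would not preserve the Chernoff exponent; this is exactly where the non-commuting spectral bases of $A$ and $B$ force one into the product index set $(i,j)$ rather than into a single diagonalizing basis.
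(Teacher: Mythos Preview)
The paper does not actually prove Theorem~\ref{th:QChernoff}; it is stated as a known result, with attribution to \cite{audenaert-2007} for the direct part and \cite{nussbaum-szkola-2009} for the converse (see the sentence immediately preceding the theorem). Your proposal is a correct sketch of precisely that standard proof from the cited literature: Audenaert's trace inequality for achievability and the Nussbaum--Szko{\l}a mapping for the converse.

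It is worth noting that the paper \emph{does} prove a strictly more general converse statement, Theorem~\ref{th:QSGBB}, which handles arbitrary (non-i.i.d.) states $\bm{A},\bm{B}$ and gives finite-$n$ bounds with explicit second-derivative correction terms. Specializing Theorem~\ref{th:QSGBB} to $\bm{A}=A^{\otimes n}$, $\bm{B}=B^{\otimes n}$ and using $\mu_{\bm{A},\bm{B}}(s)=n\,\mu_{A,B}(s)$ recovers the converse half of Theorem~\ref{th:QChernoff} (the paper remarks on this just after the proof of Theorem~\ref{th:QSGBB}). The mechanism is the same as in your sketch---the Nussbaum--Szko{\l}a distributions plus a Chebyshev-type argument on the tilted distribution $\tilde{Q}_s$---but the paper's version is sharper in that it yields explicit non-asymptotic bounds rather than just the exponent. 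Your description of the converse as invoking ``the classical converse Chernoff bound'' is slightly imprecise: what is actually used (both in \cite{nussbaum-szkola-2009} and in the paper's Theorem~\ref{th:QSGBB}) is the inequality $\eta_0 \mathsf{P}_{\text{e}|\bm{A}}+\eta_1\mathsf{P}_{\text{e}|\bm{B}}\geq \tfrac{1}{2}\sum_{i,j}\min(\eta_0 Q_0(i,j),\eta_1 Q_1(i,j))$, which follows from $|v|^2+|w|^2\geq |v+w|^2/2$ applied to matrix elements of the projector---this is the ``key lemma'' you allude to, and it is fully spelled out in the paper at equation~\eqref{eq:minsum}.
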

Note that the coefficients $\eta_0,\eta_1$ have no effect on the asymptotic exponential behavior of the error probability. With fixed $\eta_0,\eta_1$, the optimal projectors are such that the error probabilities $\mathsf{P}_{\text{e}|A^{\otimes n}}$ and $\mathsf{P}_{\text{e}|B^{\otimes n}}$ have the same exponential decay in $n$.

In some occasions, and in particular for the purpose of the present paper, it is important to characterize the performance of optimal tests when different exponential behaviour for the two error probabilities are needed. The following result has been recently obtained as a generalization of the previous theorem \cite{nagaoka-2006}, \cite{audenaert-et-al-2008}.
\begin{theorem}
\label{th:hoeffding}
Let $A,B$ be density operators with non-disjoint supports and let
 $\xi=\Tr (B\,\mbox{supp}(A))$. Let $e(r)$ be defined as
\begin{equation*}
e(r)=
\sup_{0\leq s < 1}  \frac{-s r - \mu_{A,B}(s)}{1-s},  \qquad  r \geq -\log \xi ,
\end{equation*}
and $e(r)=\infty$ if $r<-\log \xi$.
Let $\mathcal{P}$ be the set of all sequences of operators $\Pi^{(n)}$ such that
\begin{equation*}
\lim_{n\to \infty}-\frac{1}{n}\log \left(\Tr (\mathds{1}-\Pi^{(n)})B^{\otimes n}\right)\geq r.
\end{equation*}
Then
\begin{equation*}
\sup_{\{\Pi^{(n)}\}\in\mathcal{P}}\left[\lim_{n\to \infty}-\frac{1}{n}\log \left(\Tr \Pi^{(n)}A^{\otimes n}\right)\right] = e(r).
\end{equation*}
\end{theorem}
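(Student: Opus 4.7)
The plan is to prove matching direct (achievability) and converse (optimality) bounds. The direct part is obtained by analysing a family of Neyman--Pearson-type quantum tests and invoking a quantum trace inequality of the Audenaert type; the converse relies on the Nussbaum-Szko\l a mapping (mentioned in the Introduction as the key tool in the converse of the quantum Chernoff bound) to reduce to a classical Hoeffding-type bound, for which an argument along the lines of \cite[Th. 5]{shannon-gallager-berlekamp-1967-1} applies.

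For the direct part, for each $s\in[0,1)$ set $a_{s}=-(r+\mu_{A,B}(s))/(1-s)$ and let $\Pi^{(n)}_{s}$ be the spectral projector onto the non-negative part of $B^{\otimes n}-e^{na_{s}}A^{\otimes n}$. The key inequality, a non-commutative analogue of the scalar bound $\min(p,q)\le p^{1-s}q^{s}$, yields
\begin{equation*}
\Tr\Pi^{(n)}_{s}A^{\otimes n}\leq e^{-sna_{s}}\,\Tr(A^{\otimes n})^{1-s}(B^{\otimes n})^{s},
\end{equation*}
\begin{equation*}
\Tr(\mathds{1}-\Pi^{(n)}_{s})B^{\otimes n}\leq e^{(1-s)na_{s}}\,\Tr(A^{\otimes n})^{1-s}(B^{\otimes n})^{s}.
\end{equation*}
Since $\Tr(A^{\otimes n})^{1-s}(B^{\otimes n})^{s}=e^{n\mu_{A,B}(s)}$, the choice of $a_{s}$ makes the right-hand side of the second display exactly $e^{-nr}$, so $\{\Pi^{(n)}_{s}\}\in\mathcal{P}$, while the first display gives a Type-I exponent equal to $(-sr-\mu_{A,B}(s))/(1-s)$. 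Taking the supremum over $s\in[0,1)$ shows that the left-hand side of the theorem is at least $e(r)$.

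For the converse, I would use the Nussbaum-Szko\l a construction. Writing the spectral decompositions $A=\sum_{i}\alpha_{i}\ket{u_{i}}\bra{u_{i}}$ and $B=\sum_{j}\beta_{j}\ket{v_{j}}\bra{v_{j}}$, define the classical distributions
\begin{equation*}
P(i,j)=\alpha_{i}|\braket{u_{i}}{v_{j}}|^{2},\qquad Q(i,j)=\beta_{j}|\braket{u_{i}}{v_{j}}|^{2}
\end{equation*}
on the product index set, for which $\mu_{P,Q}(s)=\mu_{A,B}(s)$ for $s\in[0,1]$ and $Q(\mbox{supp}(P))=\xi$. The core inequality, refining the Nussbaum-Szko\l a comparison used in the converse of the quantum Chernoff bound, bounds the quantum errors below by the classical ones in a joint manner: for any quantum test $\Pi$ one shows
\begin{equation*}
\bigl(\Tr\Pi A\bigr)^{1-s}\bigl(\Tr(\mathds{1}-\Pi)B\bigr)^{s}\geq c\sum_{i,j}P(i,j)^{1-s}Q(i,j)^{s}
\end{equation*}
for a constant $c>0$ independent of $\Pi$. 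Applied with $A^{\otimes n}$ and $B^{\otimes n}$ in place of $A$ and $B$, this inequality forces any sequence with Type-II exponent $\geq r$ to have Type-I exponent bounded above by $\sup_{s\in[0,1)}(-sr-\mu_{A,B}(s))/(1-s)=e(r)$, matching the direct part.

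The main obstacle is the converse: the Nussbaum-Szko\l a mapping does not carry arbitrary quantum measurements to classical tests in a canonical way, so the joint lower bound above cannot be proved term-by-term and must come from a spectral/operator argument valid uniformly in $\Pi$. A secondary technical difficulty is the boundary behaviour of $e(r)$ as $r\downarrow-\log\xi$ when $\text{supp}(A)\neq\text{supp}(B)$: there the supremum in $s$ is approached as $s\to 1$, and $\mu_{A,B}(s)$ need only be left-continuous at $s=1$, so the optimisation must be handled carefully (this is precisely where the assumption of non-disjoint supports plus the regime $r\geq-\log\xi$ is used to keep $e(r)$ finite). Finally, one must verify that the $\limsup$/$\liminf$ version of the statement coincides with a true limit, which follows from the exponential matching of the two bounds obtained above.
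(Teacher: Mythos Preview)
The paper does not give its own proof of this theorem; it is stated as a known result due to \cite{nagaoka-2006} and \cite{audenaert-et-al-2008}. The paper only remarks, after proving Theorem~\ref{th:QSGBB}, that the converse part of the Hoeffding bound follows from that theorem together with the additivity $\mu_{A^{\otimes n},B^{\otimes n}}(s)=n\,\mu_{A,B}(s)$ and the convexity of $\mu_{A,B}$.

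Your direct part is correct and is the standard argument from \cite{audenaert-2007}.

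Your converse, however, has a genuine gap. The displayed inequality
\[
\bigl(\Tr\Pi A\bigr)^{1-s}\bigl(\Tr(\mathds{1}-\Pi)B\bigr)^{s}\geq c\sum_{i,j}P(i,j)^{1-s}Q(i,j)^{s}
\]
is simply false: take $\Pi=0$, so the left side vanishes (for $s<1$) while the right side equals $c\,e^{\mu_{A,B}(s)}>0$. You half-acknowledge this as the ``main obstacle,'' but the inequality is not salvageable by a cleverer spectral argument, because it fails as a statement uniformly in $\Pi$.

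The correct route, the one actually carried out in the paper's Theorem~\ref{th:QSGBB}, is additive rather than multiplicative. From the Nussbaum--Szko{\l}a decomposition one gets
\[
\eta_0\,\Tr\Pi A+\eta_1\,\Tr(\mathds{1}-\Pi)B\ \geq\ \frac{1}{2}\sum_{i,j}\min\bigl(\eta_0 P(i,j),\,\eta_1 Q(i,j)\bigr)
\]
for arbitrary positive weights $\eta_0,\eta_1$. One then chooses the weights via a Chebyshev argument on the tilted distribution $\tilde Q_s(i,j)\propto P(i,j)^{1-s}Q(i,j)^s$, obtaining an either/or lower bound on the two error probabilities separately (this is precisely the content of Theorem~\ref{th:QSGBB}). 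Specialising to tensor powers then yields the Hoeffding converse exponent. The key conceptual point is that the Nussbaum--Szko{\l}a comparison controls a \emph{weighted sum} of the two errors, never a \emph{product}; the asymmetric exponent trade-off is recovered only after the weights are chosen to depend on $s$.
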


This generalization of the Chernoff bound, however, is not yet sufficient for the purpose of the present paper. In channel coding problems, in fact, what is usually of interest is the more general problem of distinguishing between two states $\bm{A}$ and $\bm{B}$ that are represented by tensor products of non-identical density operators such as
\begin{equation*}
\bm{A}=A_1\otimes A_2\otimes\cdots\otimes A_n \quad \mbox{and} \quad 
\bm{B}=B_1\otimes B_2\otimes\cdots\otimes B_n.
\end{equation*}
In this case, it is clear that the probability of error depends on the sequences $A_1,A_2,\cdots$ and $B_1,B_2,\cdots$ in such a way that an asymptotic result of the form of Theorems \ref{th:QChernoff} and \ref{th:hoeffding} is not to be hoped in general.
 For example, after the obvious redefinition of $\Pe^{(n)}$ in Theorem \ref{th:QChernoff}, the limit on the left hand side of \eqref{eq:QChernoff} may even not exist. For this reason, it is useful to establish a more general result than Theorems \ref{th:QChernoff} and \ref{th:hoeffding} which is stated directly in terms of two general operators, that in our case are to be interpreted as the operators $\bm{A}$ and $\bm{B}$ above. This is precisely what is done in \cite[Th. 5]{shannon-gallager-berlekamp-1967-1} for the classical case and we aim here at deriving at least the corresponding converse part of that result for the quantum case.

\begin{theorem}
\label{th:QSGBB}
Let $\bm{A}$ and $\bm{B}$ be density operators with non-disjoint supports, let $\Pi$ be a measurement operator for the binary hypothesis test between $\bm{A}$ and $\bm{B}$ and let the probabilities of error $\mathsf{P}_{\text{e}|\bm{A}},\mathsf{P}_{\text{e}|\bm{B}}$ be defined as
\begin{equation*}
\mathsf{P}_{\text{e}|\bm{A}}=\Tr\left[\Pi \bm{A}\right], \quad \mathsf{P}_{\text{e}|\bm{B}}=\Tr\left[(\mathds{1}-\Pi)\bm{B}\right].
\end{equation*}
 
 Let $\mu(s)=\mu_{\bm{A},\bm{B}}(s)$ as defined in equations \eqref{eq:mu_quantum_def_1}-\eqref{eq:mu_quantum_def_2}. Then, for any $0<s<1$, either
\begin{equation*}
\mathsf{P}_{\text{e}|\bm{A}}>\frac{1}{8}\exp\left[\mu(s)-s\mu'(s)-s\sqrt{2\mu''(s)}\right]
\end{equation*}
or
\begin{equation*}
\mathsf{P}_{\text{e}|\bm{B}}>\frac{1}{8}\exp\left[\mu(s)+(1-s)\mu'(s)-(1-s)\sqrt{2\mu''(s)}\right].
\end{equation*}
\end{theorem}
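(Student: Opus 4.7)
The plan is to reduce the quantum binary hypothesis test to a classical one via the Nussbaum-Szkola mapping \cite{nussbaum-szkola-2009}, and then invoke the classical converse part of the Shannon-Gallager-Berlekamp generalized Chernoff bound, Theorem 5 of \cite{shannon-gallager-berlekamp-1967-1}, which is the exact classical analogue of the present statement.

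Fix spectral decompositions $\bm{A}=\sum_i a_i\ket{u_i}\bra{u_i}$ and $\bm{B}=\sum_j b_j\ket{v_j}\bra{v_j}$, and on the product index set introduce the Nussbaum-Szkola distributions $P_{\bm{A}}(i,j)=a_i|\braket{u_i}{v_j}|^2$ and $P_{\bm{B}}(i,j)=b_j|\braket{u_i}{v_j}|^2$. A direct computation gives $\sum_{i,j}P_{\bm{A}}(i,j)^{1-s}P_{\bm{B}}(i,j)^{s}=\Tr(\bm{A}^{1-s}\bm{B}^{s})$, so $\mu_{\bm{A},\bm{B}}(s)$ coincides with its classical counterpart $\mu_{P_{\bm{A}},P_{\bm{B}}}(s)$ on $(0,1)$, and consequently the derivatives $\mu'(s)$ and $\mu''(s)$ coincide as well. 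All the quantities appearing in the exponents of the theorem therefore have a purely classical meaning.

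I would then show that the given quantum POVM $\{\mathds{1}-\Pi,\Pi\}$ induces a (possibly randomized) classical test $\tau:(i,j)\mapsto[0,1]$ satisfying, on each hypothesis separately, $\sum_{i,j}P_{\bm{A}}(i,j)\tau(i,j)\leq 2\,\mathsf{P}_{\text{e}|\bm{A}}$ and $\sum_{i,j}P_{\bm{B}}(i,j)(1-\tau(i,j))\leq 2\,\mathsf{P}_{\text{e}|\bm{B}}$. Granting this, the classical Theorem 5 of \cite{shannon-gallager-berlekamp-1967-1} applied to $(P_{\bm{A}},P_{\bm{B}},\tau)$ yields an either/or inequality of exactly the stated exponential shape, but with constant $1/4$. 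The classical argument proceeds by introducing the tilted distribution $P_s(i,j)=P_{\bm{A}}(i,j)^{1-s}P_{\bm{B}}(i,j)^{s}e^{-\mu(s)}$, noting that under $P_s$ the log-likelihood ratio $L(i,j)=\log(b_j/a_i)$ has mean $\mu'(s)$ and variance $\mu''(s)$, applying Chebyshev to give probability $\geq 1/2$ to the typical set $T=\{|L-\mu'(s)|\leq\sqrt{2\mu''(s)}\}$, splitting $T$ according to whether $\tau\geq 1/2$ or $\tau<1/2$, and finally using the pointwise bounds $P_{\bm{A}}\geq P_s\,e^{\mu(s)-s\mu'(s)-s\sqrt{2\mu''(s)}}$ and $P_{\bm{B}}\geq P_s\,e^{\mu(s)+(1-s)\mu'(s)-(1-s)\sqrt{2\mu''(s)}}$ valid on $T$. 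Absorbing the factor $2$ from the quantum-to-classical reduction of the previous paragraph then turns the classical constant $1/4$ into the quantum constant $1/8$.

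The main obstacle is the per-hypothesis form of the quantum-to-classical comparison. The standard Nussbaum-Szkola lemma only supplies a lower bound on the \emph{sum} $\Tr(\bm{A}\Pi)+\Tr(\bm{B}(\mathds{1}-\Pi))$, which suffices for the symmetric quantum Chernoff problem but not automatically for the two-sided asymmetric statement sought here. A close reading of that lemma shows, however, that the underlying operator inequality is essentially pointwise in $(i,j)$, so the factor $1/2$ can be distributed across the two hypotheses separately. A natural candidate for $\tau$ is built from the matrix elements of $\Pi$ in the eigenbases $\{\ket{u_i}\}$ and $\{\ket{v_j}\}$ (for instance $\tau(i,j)=\tfrac12(\bra{u_i}\Pi\ket{u_i}+\bra{v_j}\Pi\ket{v_j})$, or a refinement obtained by further pinching $\Pi$ along its own spectral decomposition); verifying that such a $\tau$ satisfies both per-hypothesis inequalities simultaneously, with the correct universal constant, is the technical heart of the proof and where I expect most of the real work to lie.
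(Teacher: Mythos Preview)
Your plan---Nussbaum--Szko{\l}a mapping followed by the classical Shannon--Gallager--Berlekamp argument---is exactly the paper's approach. The difference lies in how the quantum-to-classical step is executed, and the paper's execution is simpler than what you propose.

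The paper does \emph{not} build a single classical test $\tau$ with per-hypothesis error bounds. After reducing to a projector $\Pi$ (quantum Neyman--Pearson) and expanding $\mathsf{P}_{\text{e}|\bm{A}}=\sum_{i,j}a_i|\bra{u_i}\Pi\ket{v_j}|^2$ and $\mathsf{P}_{\text{e}|\bm{B}}=\sum_{i,j}b_j|\bra{u_i}(\mathds{1}-\Pi)\ket{v_j}|^2$, it applies $|v|^2+|w|^2\geq\tfrac12|v+w|^2$ pointwise in $(i,j)$ to obtain, for \emph{arbitrary} weights $\eta_0,\eta_1>0$,
\[
\eta_0\,\mathsf{P}_{\text{e}|\bm{A}}+\eta_1\,\mathsf{P}_{\text{e}|\bm{B}}\;\geq\;\tfrac12\sum_{i,j}\min\bigl(\eta_0 P_{\bm A}(i,j),\,\eta_1 P_{\bm B}(i,j)\bigr).
\]
The tilting/Chebyshev analysis you outline is then carried out on the right-hand side exactly as in the classical proof, and at the end one simply \emph{chooses} $\eta_0^{-1},\eta_1^{-1}$ to be the two exponentials in the statement. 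The right-hand side then exceeds $1/2$, so $\eta_0\mathsf{P}_{\text{e}|\bm{A}}+\eta_1\mathsf{P}_{\text{e}|\bm{B}}>1/4$, whence $\mathsf{P}_{\text{e}|\bm{A}}>1/(8\eta_0)$ or $\mathsf{P}_{\text{e}|\bm{B}}>1/(8\eta_1)$. No per-hypothesis comparison is ever needed: the freedom in $(\eta_0,\eta_1)$ replaces it.

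Your specific candidate $\tau(i,j)=\tfrac12(\bra{u_i}\Pi\ket{u_i}+\bra{v_j}\Pi\ket{v_j})$ does not satisfy the per-hypothesis bound you want. In two dimensions with $\bm{A}=\ket{u_1}\bra{u_1}$, the basis $\{v_j\}$ rotated by $\pi/4$, and $\Pi$ a rank-one projector close to $\ket{u_2}\bra{u_2}$, one finds $\sum_{i,j} P_{\bm A}(i,j)\tau(i,j)\approx 1/4$ while $\mathsf{P}_{\text{e}|\bm{A}}$ is arbitrarily small. Your route can still be completed: for projectors $\Pi$, the choice $\tau(i,j)=\min\bigl(1,\,2|\bra{u_i}\Pi\ket{v_j}|^2/|\braket{u_i}{v_j}|^2\bigr)$ does satisfy both per-hypothesis inequalities with constant $2$ (both reduce to the same parallelogram inequality above), but this is a detour compared with the paper's direct weighted-sum argument and relies on the identical pointwise ingredient.
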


\begin{IEEEproof}
This theorem is essentially the combination of the main idea introduced in \cite{nussbaum-szkola-2009} for proving the converse part of the quantum Chernoff bound and of \cite[Th. 5]{shannon-gallager-berlekamp-1967-1}, the classical version of this same theorem. 
Since some intermediate steps of those proofs are needed, we provide the details here for the reader's convenience. 

Following \cite{audenaert-et-al-2008}, let the spectral decomposition of $\bm{A}$ and $\bm{B}$ be respectively 
\begin{equation*}
\bm{A}=\sum _i\alpha_i \ket{a_i}\bra{a_i} \quad \mbox{and}\quad \bm{B}=\sum_j \beta_j    \ket{b_j}\bra{b_j}.
\end{equation*}
where $\{a_i\}$ and $\{b_j\}$ are orthonormal bases.
First observe that, from the Quantum Neyman-Pearson Lemma (\cite{helstrom-1976}, \cite{holevo-1972}), it suffices to consider orthogonal projectors $\Pi$. So, we have
$\Pi=\Pi^2=\Pi\mathds{1}\Pi=\sum_j \Pi\ket{b_j}\bra{b_j}\Pi$. Symmetrically,  we have that $(\mathds{1}-\Pi)=\sum_i (\mathds{1}-\Pi)\ket{a_i}\bra{a_i}(\mathds{1}-\Pi)$. So we have 
\begin{align*}
\mathsf{P}_{\text{e}|\bm{A}} & =  \Tr \Pi \bm{A}\\
 & =  \sum_{i,j}\alpha_i |\bra{a_i} \Pi\ket{b_j}|^2;\\
\mathsf{P}_{\text{e}|\bm{B}} & =  \Tr (\mathds{1}-\Pi) \bm{B}\\
 & =  \sum_{i,j}\beta_j |\bra{a_i} \mathds{1}-\Pi\ket{b_j}|^2.
\end{align*}
Thus, for any positive $\eta_0, \eta_1$, we have 
\begin{align}
\eta_0 & \mathsf{P}_{\text{e}|\bm{A}} + \eta_1 \mathsf{P}_{\text{e}|\bm{B}}\nonumber \\ 
&=  \sum_{i,j}\left(\eta_0 \alpha_i |\bra{a_i} \Pi\ket{b_j}|^2 + \eta_1 \beta_j |\bra{a_i} \mathds{1}-\Pi\ket{b_j}|^2 \right)\nonumber\nonumber\\
 & \geq  \sum_{i,j}  \min(\eta_0\alpha_i,\eta_1\beta_j ) \left(|\bra{a_i} \Pi\ket{b_j}|^2 +  |\bra{a_i} \mathds{1}-\Pi\ket{b_j}|^2 \right)\nonumber\nonumber\\
 & \geq  \sum_{i,j}  \min(\eta_0\alpha_i,\eta_1\beta_j ) \frac{|\braket{a_i}{b_j}|^2}{2}\nonumber\\
 & =  \frac{1}{2}\sum_{i,j}  \min\left(\eta_0\alpha_i |\braket{a_i}{b_j}|^2,\eta_1\beta_j |\braket{a_i}{b_j}|^2 \right) 
\label{eq:minsum},
\end{align}
where the second-to-last inequality is motivated by the fact that for any two complex numbers $v,w$ we have $|v|^2+|w|^2\geq |v+w|^2/2$.

Now, following \cite{nussbaum-szkola-2009}, consider the two probability distributions defined by the \emph{Nussbaum-Szko{\l}a mapping}
\begin{equation}
Q_0(i,j)=\alpha_i |\braket{a_i}{b_j}|^2, \quad Q_1(i,j)=\beta_j |\braket{a_i}{b_j}|^2.
\label{eq:Pdef_first}
\end{equation}
These two probability distributions are both strictly positive for at least one pair of $(i,j)$ values, since we assumed $\bm{A},\,\bm{B}$ to have non-disjoint supports. Furthermore, they have the nice property that 
\begin{equation*}
\Tr \bm{A}^{1-s}\bm{B}^s=\sum_{i,j}Q_0(i,j)^{1-s}Q_1(i,j)^s,
\end{equation*}
so that
\begin{align*}
\mu_{ \bm{A},\bm{B}}(s) & =  \log \Tr \bm{A}^{1-s}\bm{B}^s\\
& =  \log  \sum_{i,j}Q_0(i,j)^{1-s}Q_1(i,j)^s\\
& =  \mu_{Q_0,Q_1}(s).
\end{align*}
Define\footnote{Note that $\tilde{Q}_0=Q_0$ and $\tilde{Q}_1=Q_1$ if $Q_0$ and $Q_1$ have the same support.} 
\begin{equation}
\tilde{Q}_s(i,j)=\frac{Q_0(i,j)^{1-s}Q_1(i,j)^s}{\sum_{i',j'}Q_0(i',j')^{1-s}Q_1(i',j')^s},
\label{eq:Qdef}
\end{equation}
and observe that
\begin{align}
\mu'(s) & =  E_{\tilde{Q}_s}\left[\log(Q_1/Q_0)\right],
\label{eq:mu'}\\
\mu''(s) & =  \mbox{Var}_{\tilde{Q}_s}\left[\log(Q_1/Q_0)\right],
\label{eq:mu''}
\end{align}
where the subscript $\tilde{Q}_s$ means that the expected values are with respect to the probability distribution $\tilde{Q}_s$. If one defines the set
\begin{equation}
\mathcal{Z}_s=\left\{(i,j): \left|\log\left(\frac{Q_1(i,j)}{Q_0(i,j)}\right)-\mu'(s)\right|\leq \sqrt{2\mu''(s)}\right\}
\label{eq:Ydef}
\end{equation}
then, by Chebyshev's inequality,
\begin{equation}
\sum_{(i,j)\in\mathcal{Z}_s}\tilde{Q}_s(i,j)>1/2.
\label{eq:Cheby_Q}
\end{equation}
It is easily checked, using the definitions \eqref{eq:Qdef} and \eqref{eq:Ydef}, that for each $(i,j)\in \mathcal{Z}_s$ the distribution $\tilde{Q}_s$ satisfies
\begin{equation}
\tilde{Q}_s(i,j)  \leq  Q_0(i,j)\left(\exp{[\mu(s)-s\mu'(s)-s\sqrt{2\mu''(s)}]}\right)^{-1},
\label{eq:qless1}
\end{equation}
\begin{equation}
\tilde{Q}_s(i,j)  \leq  Q_1(i,j)\Bigl(\exp [\mu(s)+(1-s)\mu'(s) \Bigr.  \Bigl. -(1-s)\sqrt{2\mu''(s)}]\Bigr)^{-1}.\label{eq:qless2}
\end{equation}
Hence, in $\mathcal{Z}_s$, $\tilde{Q}_s(i,j)$ is bounded by the minimum of the two expressions on the right hand side of \eqref{eq:qless1} and \eqref{eq:qless2}. 
If we call $\eta_0$ the coefficient of $Q_0(i,j)$ in \eqref{eq:qless1} and $\eta_1$ the coefficient of $Q_1(i,j)$ in \eqref{eq:qless2},  using \eqref{eq:Cheby_Q} we obtain 
\begin{eqnarray*}
\frac{1}{2} & <  & \sum_{(i,j)\in \mathcal{Z}_s} \tilde{Q}_s(i,j)\\
& \leq & \sum_{(i,j) \in \mathcal{Z}_s}\min\left(\eta_0 Q_0(i,j),\eta_1 Q_1(i,j)\right)\\
& \leq & \sum_{(i,j)}\min\left(\eta_0 Q_0(i,j),\eta_1 Q_1(i,j)\right).
\end{eqnarray*}

Now note that the last expression, by the definition of $Q_0$ and $Q_1$ in \eqref{eq:Pdef_first}, exactly equals the sum in \eqref{eq:minsum}. So, with the selected values of $\eta_0$ and $\eta_1$ we have $\eta_0 \mathsf{P}_{\text{e}|\bm{A}} + \eta_1 \mathsf{P}_{\text{e}|\bm{B}} >1/4$. Hence, either $\mathsf{P}_{\text{e}|\bm{A}}>\eta_0^{-1}/8$ or $\mathsf{P}_{\text{e}|\bm{B}} >\eta_1^{-1}/ 8$, concluding the proof.
\end{IEEEproof}

For the special case where $\bm{A}=A^{\otimes n}$ and $\bm{B}=B^{\otimes n}$, the bounds on $\mathsf{P}_{\text{e}|\bm{A}}$ and $\mathsf{P}_{\text{e}|\bm{B}}$ derived in Theorem \ref{th:QSGBB} can be simplified in light of the observation that 
\begin{equation*}
\mu_{\bm{A},\bm{B}}(s)=n\,\mu_{A,B}(s).
\end{equation*}
With some algebra, using the convexity of $\mu_{A,B}(s)$ and its relation with $D_s(B||A)$, it is then possible to show that Theorem \ref{th:QSGBB} implies the converse part of Theorem \ref{th:hoeffding}.

\section{Sphere Packing Bound for\\ Classical-Quantum Channels}
\label{sec:SPB}

In this section, the sphere packing bound for general classical-quantum channels is proved. We will follow closely the proof given in \cite[Sec. IV]{shannon-gallager-berlekamp-1967-1} for the classical case. It is the author's belief that the proof of the sphere packing bound used in \cite{shannon-gallager-berlekamp-1967-1} is not really widely known, especially within the quantum information theory community, because, as explained in the introduction, the simpler approach used in \cite{haroutunian-1968} has become much more popular\footnote{Viterbi and Omura \cite{viterbi-omura-book} define ``an intellectual tour-de-force'', even if characterized by ``flavor, style, elegance'', the proof of the sphere packing bound of \cite{shannon-gallager-berlekamp-1967-1}
and Gallager himself defines it as ``quite complicated'' \cite{gallager-2001} and ``tedious and subtle to derive'' \cite{gallager-book}. See Appendix \ref{sec:App_history} for some historical comments on the proof of the theorem in the classical case.}. A detailed analysis of that proof, however, is useful for the understanding of the analogy between the sphere packing bound and Lov\'asz' bound that will be discussed in Section \ref{sec:Inform-radii}.
Furthermore, some intermediate steps in the proof are clearly to be adjusted from the classical case to the quantum case, and this does not always come as a trivial task. Hence, it is both useful and necessary to go through the whole proof used in \cite{shannon-gallager-berlekamp-1967-1}. To avoid repetitions, we present the proof directly speaking in terms of classical-quantum channels, taking advantage of the weaker results that we are pursuing here\footnote{In \cite{shannon-gallager-berlekamp-1967-1}, bounds for fixed $M$ and $n$ are obtained. Here, we are only interested instead in determining $E(R)$ and we can then work in the asymptotic regime $n\to \infty$.}  with respect to \cite[Th. 5]{shannon-gallager-berlekamp-1967-1} to overcome some technical difficulties that arise in this more general context.

\begin{theorem}[Sphere Packing Bound]
\label{th:sphere-packing}
Let $S_1,\ldots,S_{|\myIn|}$ be the input signal states of a classical-quantum channel and let $E(R)$ be its reliability function. Then, for all positive rates $R$ and all positive $\varepsilon < R$,
\begin{equation*}
E(R)\leq \Esp(R-\varepsilon),
\end{equation*}
where $\Esp(R)$ is defined by the relations
\begin{align}
\Esp(R) & =  \sup_{\rho \geq 0} \left[ E_0(\rho) - \rho R\right]\label{eq:Esp},\\
E_0(\rho) & =  \max_{P}E_0(\rho,P)\label{eq:E0rho},\\
E_0(\rho,P) & = -\log\Tr\left( \sum_\myin P(\myin) S_\myin^{1/(1+\rho)} \right)^{1+\rho}.
\label{eq:E0rhoq}
\end{align}
\end{theorem}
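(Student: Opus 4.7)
The plan is to adapt the Shannon-Gallager-Berlekamp argument of \cite[Sec.~IV]{shannon-gallager-berlekamp-1967-1} to the classical-quantum setting, with Theorem~\ref{th:QSGBB} playing the role of their generalized Chernoff bound. As a preliminary reduction, I would pigeonhole on compositions: there are only polynomially many types $P$ on $\myIn$ of length $n$, so any code of rate $R$ contains a constant-composition subcode of rate $R-o(1)$. This absorbs part of the slack $\varepsilon$ in the statement and lets us work at a fixed composition $P$, optimizing over $P$ only at the end.

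Fix $\rho\geq 0$ and an auxiliary density operator $F$ on $\mathcal{H}$; set $\bm{F}=F^{\otimes n}$. Since $\sum_m\Pi_m\leq\mathds{1}$ implies $\sum_m\Tr(\Pi_m\bm{F})\leq\Tr\bm{F}=1$, there is an index $m^\ast$ with $\Tr(\Pi_{m^\ast}\bm{F})\leq 1/M=e^{-nR}$. View $\{\mathds{1}-\Pi_{m^\ast},\Pi_{m^\ast}\}$ as a binary test between $\bm{A}=\bm{S}_{\bm{x}_{m^\ast}}$ and $\bm{B}=\bm{F}$, taking the test operator of Theorem~\ref{th:QSGBB} to be $\Pi=\mathds{1}-\Pi_{m^\ast}$, so that $\mathsf{P}_{\text{e}|\bm{A}}$ coincides with the probability of miscoding $m^\ast$ and $\mathsf{P}_{\text{e}|\bm{B}}=\Tr(\Pi_{m^\ast}\bm{F})\leq e^{-nR}$. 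The product form of the two states yields the crucial additivity
\begin{equation*}
\mu_{\bm{A},\bm{B}}(s)=\sum_{i=1}^n\mu_{S_{x_{m^\ast,i}},F}(s)=n\sum_{x}P(x)\,\mu_{S_x,F}(s),
\end{equation*}
so $\mu(s)$ and $\mu'(s)$ scale linearly in $n$ while $\mu''(s)=\Theta(n)$, making the correction $\sqrt{2\mu''(s)}$ only $O(\sqrt n)$.

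I would then choose $s\in(0,1)$ so that the second alternative of Theorem~\ref{th:QSGBB} is ruled out by the pigeonhole estimate, i.e.\ so that $e^{-nR}\leq \tfrac18\exp[\mu(s)+(1-s)\mu'(s)-(1-s)\sqrt{2\mu''(s)}]$. This forces the first alternative, yielding $\mathsf{P}_{\text{e}|m^\ast}\geq\tfrac18\exp[\mu(s)-s\mu'(s)-O(\sqrt n)]$ and hence the same lower bound on $\Pemax$. Parametrizing $R$ and the resulting exponent by $s$ through $nR=-\mu(s)-(1-s)\mu'(s)$ and $-nE=\mu(s)-s\mu'(s)$ is the usual Legendre transform underlying $\Esp(R)$; specializing $s=\rho/(1+\rho)$ and minimizing the per-letter cumulant $\sum_xP(x)\mu_{S_x,F}(s)$ over $F$ should collapse the exponent to $E_0(\rho,P)$. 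The extremizer is expected to be
\begin{equation*}
F^\ast\propto\Bigl(\sum_{x}P(x)\,S_x^{1/(1+\rho)}\Bigr)^{1+\rho},
\end{equation*}
reproducing \eqref{eq:E0rhoq}. Maximizing over $P$ and sweeping $\rho\geq 0$ then yields $\Esp(R-\varepsilon)$, with $\varepsilon$ absorbing both the composition reduction and the $O(\sqrt n/n)$ correction from Theorem~\ref{th:QSGBB}.

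The main obstacle I anticipate is the optimization over $F$: in the classical case it is a short calculus exercise in the simplex, but here $S_x^{1/(1+\rho)}$ and $F^{\rho/(1+\rho)}$ do not commute, so obtaining a clean variational characterization requires either an operator trace inequality of Minkowski type or a reduction of the pointwise optimization to a classical one through a Nussbaum-Szko{\l}a-style mapping analogous to the one used inside the proof of Theorem~\ref{th:QSGBB}. A secondary technical nuisance is tracking the uniformity of the $O(\sqrt n)$ correction across the compositional reduction and across the choice of $s$, and handling the boundary behaviour of $\mu(s)$ at $s=0,1$ when $F$ and the $S_x$'s have different supports; both issues are standard but tedious and are precisely the points where the classical SGB proof becomes most intricate.
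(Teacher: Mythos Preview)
Your high-level strategy matches the paper's: reduce to constant composition, pigeonhole a codeword $m^\ast$ with $\Tr(\Pi_{m^\ast}\bm{F})\leq e^{-nR}$, apply Theorem~\ref{th:QSGBB} to the pair $(\bm{S}_{\bm{x}_{m^\ast}},\bm{F})$ with $\bm{F}=F^{\otimes n}$, and exploit tensor additivity of $\mu$. Two technical choices, however, are handled differently and are worth flagging.

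First, on the construction of $F$: you propose $F^\ast\propto\bigl(\sum_xP(x)S_x^{1/(1+\rho)}\bigr)^{1+\rho}$ built from the \emph{code's} composition $P$, aiming for $E_0(\rho,P)$ and then maximizing over $P$. The paper instead follows the Shannon--Gallager--Berlekamp shortcut and builds $F_s$ from the \emph{channel-optimal} distribution $P_s=\arg\min_P\Tr A(s,P)^{1/(1-s)}$, independent of the code. The payoff is that the KKT condition for $P_s$ (Holevo's \cite[eq.~(38)]{holevo-2000}) yields a per-symbol inequality $\mu_{S_x,F_s}(s)\geq-(1-s)E_0\bigl(\tfrac{s}{1-s}\bigr)$ valid for \emph{every} $x$, so $E_0(\rho)$ appears directly with no residual optimization over $P$. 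Your route---fix $P$, optimize $F$, then maximize---is precisely the ``direct'' Fano approach that the paper explicitly flags as workable but substantially messier; in particular you would need to control $\sum_xP(x)\mu_{S_x,F}(s)$ and $\sum_xP(x)\mu'_{S_x,F}(s)$ jointly, and the natural Jensen step goes the wrong way. Your anticipated Minkowski-type obstacle is largely obviated by the paper's choice: once $F_s$ is fixed via $P_s$, only the single KKT inequality is needed.

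Second, on the $\sqrt{\mu''}$ correction and the choice of $s$: you treat these as routine, but the paper cannot bound $\mu_{S_x,F_s}''(s)$ uniformly over $s\in(0,1)$---the classical proof uses the smallest nonzero transition probability, and there is no analogue here because under the Nussbaum--Szko{\l}a mapping both classical distributions move when $F$ moves. The paper therefore abandons any finite-$n$ statement and argues only asymptotically, passing to subsequences (so that the compositions $P_n$ converge) and doing a three-way case split on whether $R_n$ lies everywhere above, everywhere below, or crosses the curve $s\mapsto R_n(s,P_n,F_s)$. Your assumption that one can always pick $s$ to defeat the second alternative of Theorem~\ref{th:QSGBB} corresponds only to the crossing case; the other two (no such $s$ on the whole interval) must be handled separately and yield $E(R)\leq 0$ and $\Esp(R-\varepsilon)=\infty$ respectively.
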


\begin{remark}
The role of the arbitrarily small constant $\varepsilon$ is only important for the single value of the rate $R=R_\infty$ where the sphere packing bound goes to infinity.
\end{remark}

\begin{IEEEproof}
The key point is the idea first used (in a published work) by Fano \cite{fano-book} of bounding the probability of error for at least one codeword $\bm{\myin}_m$ by studying a binary hypothesis testing problem between $\bm{S}_{\bm{\myin}_m}$ and a dummy state $\bm{F}$. Roughly speaking, we will show that there exists one $m$ and a density operator $\bm{F}$ such that 
\begin{itemize}
\item[-] the probability under state $\bm{F}$ of the outcome associated to the decision for message $m$, call it $\mathsf{P}_{m|\bm{F}}=\Tr(\Pi_m\bm{F})$, is ``small'';
\item[-] the state $\bm{F}$ is only distinguishable from $\bm{S}_{\bm{\myin}_m}$  ``to a certain degree'' in a binary detection test. 
\end{itemize} 
Using Theorem \ref{th:QSGBB}, this will imply that the probability $\mathsf{P}_{m|m}$ cannot be too high. The whole proof is devoted to the construction of such a state $\bm{F}$, which has to be chosen properly depending on the  code. We are now ready to start the detailed proof.

We first simplify the problem using a very well known observation, that is, the fact that for the study of $E(R)$ we can only  consider the case of \emph{constant composition} codes. It is well known that every code with rate $R$ and block length $n$ contains a constant composition subcode of rate $R'=R-o(1)$, where $o(1)$ goes to zero when $n$ goes to infinity (see \cite{blahut-book,viterbi-omura-book,csiszar-korner-book}). This is due to the fact that the number of different compositions of codewords of length $n$ is only polynomial in $n$ while the code size is exponential. Hence, we will focus on this constant composition subcode and consider it as our initial code. Let thus our code have $M$ codewords with the same composition $P$, that is, $P$ is the distribution on $\myIn$ such that symbol $\myin$ occurs $nP(\myin)$ times in each codeword.

Let $\bm{F}$ be a density operator in $\mathcal{H}^{\otimes n}$. We will first apply Theorem \ref{th:QSGBB} using one of the states $\bm{S}_{\bm{\myin}_m}$ as state $\bm{A}$ and $\bm{F}$ as state $\bm{B}$. This will result in a trade-off between the rate of the code $R$ and the probability of error $\Pemax$, where both quantities will be parameterized in the parameter $s$, a higher rate being allowed if a larger $\Pemax$ is tolerated and vice versa. This trade-off depends of course on the composition $P$ and on the density operator $\bm{F}$. We will later pick $\bm{F}$ properly so as to obtain the best possible bound for a given $R$ valid for all compositions $P$.

For any $m=1,\ldots,M$ consider a binary hypothesis test between $\bm{S}_{\bm{\myin}_m}$ and $\bm{F}$. We assume that their supports are not disjoint (we will later show, after equation \eqref{def:fs} below, that such a choice of $\bm{F}$ is possible)  and define the quantity
\begin{align*}
\mu(s) & =   \mu_{\bm{S}_{\bm{\myin}_m},\bm{F}}(s)\\
& = \log \Tr \bm{S}_{\bm{\myin}_m}^{1-s}\bm{F}^s.
\end{align*}
Applying Theorem \ref{th:QSGBB} with $\bm{A}=\bm{S}_{\bm{\myin}_m}$, $\bm{B}=\bm{F}$ and $\Pi=\mathds{1}-\Pi_m$, we find that for each $s$ in $(0,1)$, either 
\begin{equation*}
\Tr\left[\left(\mathds{1}-\Pi_m\right)\bm{S}_{\bm{\myin}_m}\right]>\frac{1}{8}\exp\left[\mu(s)-s\mu'(s)-s\sqrt{2\mu''(s)}\right]
\end{equation*}
or
\begin{equation*}
\Tr\left[ \Pi_m\bm{F}\right]>\frac{1}{8}\exp\left[\mu(s)+(1-s)\mu'(s)-(1-s)\sqrt{2\mu''(s)}\right].
\end{equation*}
Note now that $\Tr\left[\left(\mathds{1}-\Pi_m\right)\bm{S}_{\bm{\myin}_m}\right]=\Pem\leq \Pemax$ for all $m$. Furthermore, since $\sum_{m=1}^M \Pi_m\leq \mathds{1}$, for at least one value of $m$ we have $\Tr\left[\Pi_m\bm{F}\right]\leq 1/M=e^{-nR}$. Choosing this particular $m$, we thus obtain from the above two equations that 
either
\begin{equation}
\Pemax>\frac{1}{8}\exp\left[\mu(s)-s\mu'(s)-s\sqrt{2\mu''(s)}\right]
\label{eq:cond1}
\end{equation}
or
\begin{equation}
R<-\frac{1}{n}\left[ \mu(s)+(1-s)\mu'(s)-(1-s)\sqrt{2\mu''(s)} - \log 8\right].
\label{eq:cond2}
\end{equation}

In these equations we begin to see the aimed trade-off between the rate and the probability of error. It is implicit here in the definition of $\mu(s)$ that both equations depend on $\bm{S}_{\bm{\myin}_m}$ and $\bm{F}$. Since $m$ has been fixed, we can drop its explicit indication and use simply $\bm{\myin}$ in place of $\bm{\myin}_m$ from this point on. We will now let $R_n(s,\bm{S}_{\bm{\myin}},\bm{F})$ denote the right hand side of \eqref{eq:cond2}, that is 
%\begin{multline*}
\begin{equation*}
R_n\left(s,\bm{S}_{\bm{\myin}},\bm{F}\right)=-\frac{1}{n}\Bigl( \mu(s)+(1-s)\mu'(s) \Bigr. \\ \Bigl. - (1-s)\sqrt{2\mu''(s)} - \log 8 \Bigr). 
%\end{multline*}
\end{equation*}
This expression allows us to write $\mu'(s)$ in \eqref{eq:cond1} in terms of $R_n\left(s,\bm{S}_{\bm{\myin}},\bm{F}\right)$ so that, taking the logarithm in equation \eqref{eq:cond1}, our conditions can be rewritten as either
\begin{equation*}
R<R_n\left(s,\bm{S}_{\bm{\myin}},\bm{F}\right)
\end{equation*}
or
%\begin{multline*}
\begin{equation*}
\log \frac{1}{\Pemax} < -\frac{\mu(s)}{1-s}-\frac{s}{1-s} n R_n\left(s,\bm{S}_{\bm{\myin}},\bm{F}\right)\\
+2s\sqrt{2\mu''(s)} +\frac{\log 8}{1-s}.
%\end{multline*}
\end{equation*}

At this point, we exploit the fact that we are considering a fixed composition code. Since we want our result to depend only on the composition $P$ and not on the particular sequence $\bm{\myin}$, we choose $\bm{F}$ so that the function $\mu(s)$ also only depends on the composition $P$. We thus choose $\bm{F}$ to be the $n$-fold tensor power of a state $F$ in $\mathcal{H}$, that is $\bm{F}=F^{\otimes n}$.
With this choice, in fact, we easily check that, if $\bm{\myin}=(\myin_1,\myin_2,\ldots,\myin_n)$,
\begin{align*}
\mu(s) & =  \log \Tr \bm{S}_{\bm{\myin}}^{1-s} \bm{F}^s\\
 & =  \log \prod_{i=1}^n \Tr S_{\myin_i}^{1-s} F^s \\
 & =  \log \prod_{\myin} \left(\Tr S_{\myin}^{1-s} F^s\right)^{n P(\myin)}\\
 & =  n \sum_\myin P(\myin) \log\left(\Tr S_{\myin}^{1-s} F^s\right)\\
 & =  n \sum_\myin P(\myin) \mu_{S_{\myin},F}(s).
\end{align*}
It is useful to recall that since we assumed the supports of $\bm{F}$ and  $\bm{S}_{\bm{\myin}}$ to be non-disjoint, the supports of $S_\myin$ and $F$ must be non-disjoint if $P(\myin)>0$, and thus all terms in the sum are well defined.
Note that we also have
\begin{align*}
\mu'(s)  & =  n\sum_\myin P(\myin) \mu_{S_\myin,F}'(s),\\
\mu''(s) & =  n\sum_\myin P(\myin) \mu_{S_\myin,F}''(s).
\end{align*}
With the same procedure used to obtain \eqref{eq:mu''} using the Nussbaum-Szko{\l}a mapping \eqref{eq:Pdef}, we see that for fixed $F$ and $s\in(0,1)$, $\mu_{S_\myin,F}''(s)$ is a variance of a finite random variable and it is thus non-negative and bounded by a constant for all $\myin$. This also implies that $\mu''(s)$ is bounded by a constant.

The essential point here is that the contribution of $\mu(s)$ and $\mu'(s)$ in our bounds will grow linearly in $n$, while the contribution of $\mu''(s)$ will only grow with $\sqrt{n}$. Hence, the terms involving $\mu''(s)$ in the above equations will not have any effect on the first order exponent of the bounds. A formalization of this fact, however, is tricky. In \cite{shannon-gallager-berlekamp-1967-1} the effect of $\mu''(s)$ in the classical case is dealt with by exploiting the fact that $\mu_{\myin,F}''(s)$ is a variance and proving that, uniformly over $s$ and $F$, $s\sqrt{\mu_{\myin,F}''(s)}\leq \log(e/\sqrt{P_{\text{min}}})$, where $P_{\text{min}}$ is the smallest non-zero transition probability of the channel. This allows the authors to obtain a bound valid for finite $n$. In our case, this procedure appears to be more complicated. If $\mu_{\myin,F}(s)$ is studied in the quantum domain of operators $S_\myin$ and $F$, then $\mu_\myin''(s,F)$ is not a variance, and thus a different approach must be studied; if $\mu_{\myin,F}(s)$ is studied by means of the Nussbaum-Szko{\l}a mapping, then in \eqref{eq:mu''} both $Q_0$ and $Q_1$ vary when $F$ varies, and thus there is no such $P_{\text{min}}$ to be used. For this reason, we need to take a different approach and we content ourselves with finding a bound on $E(R)$ using the asymptotic regime $n\to \infty$.

Simplifying again the notation in light of the previous observations, let us write $R_n(s,P,F)$ for $ R_n(s,\bm{S}_{\bm{\myin}},\bm{F})$. Using the obtained expression for $\mu(s)$, our conditions are 
 either
\begin{equation}
R<R_n(s,P,F)
\label{eq:condR<R(sqf)}
\end{equation}
or
%\begin{multline}
\begin{equation*}
\frac{1}{n}\log \frac{1}{\Pemax} < -\frac{1}{1-s}\sum_\myin P(\myin) \mu_{S_\myin,F}(s)\\-\frac{s}{1-s} R_n(s,P,F)\\
+\frac{1}{n}\left(2s\sqrt{2\mu''(s)} +\frac{\log 8}{1-s}\right).
\label{eq:cond1/nPemax<..sumqkmuk}
%\end{multline}
\end{equation*}

Now, given a rate $R$, we want to bound $\Pemax$ for all codes with rate at least $R$. We can fix first the composition of the code, bound the probability of error, and then find the best possible composition. 
Since we can choose $s$ and $F$, for a given $R$ and $P$, we will choose them so that the first inequality is not satisfied, which will imply that the second one is, thus bounding $\Pemax$.

The point here is that we are free to choose $s$ and $F$, but we then need to optimize the composition $P$ in order to have a bound valid for all codes.
This direct approach, even in the classical case, turns out to be very complicated (see \cite[Sec. 9.3 and 9.4, pag. 188-303]{fano-book} for a detailed and however instructive analysis). The authors in \cite{shannon-gallager-berlekamp-1967-1} thus proceed in a more concise way by stating the resulting optimal $F$ and $P$ as a function of $s$ and then proving that this choice leads to the desired bound. Here, we will follow this approach showing that the same reasoning can be applied also to the case of quantum channels. It is worth noticing that the price for using the concise approach of \cite{shannon-gallager-berlekamp-1967-1} is that, contrarily to the approach in \cite{fano-book}, it does not allow us to derive tight bounds for constant composition codes with non optimal composition $P$.

It is important to point out that it is not possible to simply convert the quantum problem to the classical one using the Nussbaum-Szko{\l}a mapping \eqref{eq:Pdef_first} directly on the states $S_\myin$ and $F$ and then using the construction of \cite[eqs. (4.18)-(4.20)]{shannon-gallager-berlekamp-1967-1} on the obtained classical distributions. In fact, in \eqref{eq:Pdef_first}, even if one of the two states is kept fixed and only the other one varies, \emph{both} distributions vary. Thus, even if $F$ is kept fixed, the effect of varying $S_\myin$ for the different values of $\myin$ would not be compatible with the fact that in \cite[eq. (4.20)]{shannon-gallager-berlekamp-1967-1} a fixed $\mathbf{f}_s$ (in their notation) has to be used which cannot depend on $\myin$. 
Fortunately, it is instead possible to exactly replicate the steps used in \cite{shannon-gallager-berlekamp-1967-1}  by correctly reinterpreting the construction of $F$ and $P$ in the quantum setting.

For any $s$ in the interval $0<s<1$, define 
\begin{equation}
A(s,P) = \sum_\myin P(\myin) S_\myin^{1-s}.
\label{eq:alphas_gen_p}
\end{equation}
Let then $P_s$ be the distribution that minimizes the expression
\begin{equation}
\Tr A(s,P) ^{1/(1-s)},
\label{eq:tominimize}
\end{equation}
which surely admits a minimum in the simplex of probability distributions. Finally define
\begin{align}
A_s &= A(s,P_s)\\
&= \sum_\myin P_s(\myin) S_\myin^{1-s}.
\label{eq:alphas}
\end{align}

As observed by Holevo\footnote{The variable $s$ in \cite{holevo-2000} corresponds to our $s/(1-s)$, that we call $\rho$ here in accordance with the consolidated classical notation.} \cite[eq. (38)]{holevo-2000}, the distribution $P_s$ that achieves the minimum in \eqref{eq:tominimize} satisfies the conditions 
\begin{equation}
\Tr \left(S_\myin^{1-s}A_s ^{s/(1-s)}\right)\geq \Tr \left( A_s ^{1/(1-s)}\right), \qquad \forall \myin.
\label{eq:Lagr-cond}
\end{equation}

 Furthermore, equation \eqref{eq:Lagr-cond} is satisfied with equality for those $\myin$ with $P_s(\myin)>0$, as can be verified by multiplying it by $P_s(\myin)$ and summing over $\myin$.
Then, we define
\begin{equation}
F_s=\frac{A_s^{1/(1-s)}}{\Tr A_s^{1/(1-s)}}.
\label{def:fs}
\end{equation}
Since we can choose $s$ and $F$ freely, we will now tie the operator $F$ to the choice of $s$, using $F_s$  for $F$. 
We only have to keep in mind that $\mu'(s)$ and $\mu''(s)$ are computed by holding $F$ fixed. The distribution $P_s$ will instead be used later. Note further that we fulfill the requirement that $F$ and $S_\myin$ have non-disjoint supports, since the left hand side in \eqref{eq:Lagr-cond}  must be positive for all $\myin$.

As in \cite[eqs (4.21)-(4.22)]{shannon-gallager-berlekamp-1967-1}, we see that, using $F_s$ in place of $F$ in the definition of $\mu_{S_\myin,F}(s)$, we get
\begin{equation*}
\mu_{S_\myin,F_s}(s)=\log \Tr\left(S_\myin^{1-s}A_s^{s/(1-s)}\right)-s\log\Tr A_s^{1/(1-s)}
\end{equation*}
Using \eqref{eq:Lagr-cond} we then see that
\begin{align}
\mu_{S_\myin,F_s}(s) & \geq  (1-s)\log \Tr A_s^{1/(1-s)}\\ 
 & =  -(1-s)E_0\left(\frac{s}{1-s},P_s\right)\\
 & =  -(1-s)E_0\left(\frac{s}{1-s}\right),
 \label{eq:E0s/(1-s)}
\end{align}
with equality if $P_s(\myin)>0$. 
Here, we have used equation \eqref{eq:alphas}, the definitions \eqref{eq:E0rhoq} and \eqref{eq:E0rho}, and the the fact that $P_s$ minimizes \eqref{eq:tominimize}. Thus, with the choice of $F=F_s$, equations \eqref{eq:condR<R(sqf)} and \eqref{eq:cond1/nPemax<..sumqkmuk} can be rewritten as
(for each $s$) either
\begin{equation}
R<R_n(s,P,F_s)
\label{eq:lastR}
\end{equation}
or
\begin{equation}
\frac{1}{n}\log \frac{1}{\Pemax} < E_0\left(\frac{s}{1-s}\right) 
-\frac{s}{1-s} R_n(s,P,F_s)
\\+\frac{2s\sqrt{2}}{\sqrt{n}}\sqrt{\sum_\myin P(\myin) \mu_{S_\myin,F_s}''(s) } +\frac{\log 8}{(1-s)n},
\label{eq:lastPmax}
\end{equation}
where
\begin{equation}
R_n(s,P,F_s)=- \sum_\myin P(\myin)\left[ \mu_{S_\myin,F_s}(s)+(1-s) \mu_{S_\myin,F_s}'(s)\right]\\
+\frac{1}{\sqrt{n}}(1-s)\sqrt{2\sum_\myin P(\myin) \mu_{S_\myin,F_s}''(s)} + \frac{1}{n} \log 8.
\label{eq:defRnfinal}
\end{equation}

Now, for a fixed $R$, we can choose $s$ and then use the two conditions.
Dealing with these equations for a fixed code is more complicated in our case than in \cite{shannon-gallager-berlekamp-1967-1} due to the fact that we have not been able to bound uniformly the second derivatives $\mu_{S_\myin,F_s}''(s)$ for $s\in (0,1)$. Thus, we have to depart from \cite{shannon-gallager-berlekamp-1967-1}. Instead of considering a fixed code of block length $n$, consider sequences of codes of increasing block-length. From the definition of $E(R)$, there exists a sequence of codes of block-lengths $n_1,n_2,\ldots,n_k\ldots,$ with rates $R_{n_1},R_{n_2},\ldots,R_{n_k},\ldots$ such that $R=\lim_{k\to\infty} R_{n_k}$, and with probabilities of error $\Pemax^{(n_1)}, \Pemax^{(n_2)},\ldots,\Pemax^{(n_k)},\ldots$ such that
\begin{equation*}
E(R)=\lim_{k\to\infty} -\frac{1}{n_k}\log \Pemax^{(n_k)}.
\end{equation*}
Each code of the sequence will in general have a different composition\footnote{With some abuse of notation, we now use $P_n$ where $n$ is an index for the sequence and obviously does not have anything to do with the $s$ of $P_s$.} $P_n$. Since the compositions $P_n$ are in a compact set, there exists a subsequence of codes such that $P_n$ converges to, say, $\bar{P}$. Thus, we can directly assume this subsequence is our own sequence, remove the double subscript $n_k$, and safely assume that  $P_n\to\bar{P}$, $R_n\to R$ and $-\frac{1}{n}\log \Pemax^{(n)}\to E(R)$ as $n\to\infty$.

Observe that, since $\mu_{S_\myin,F_s}(s)$ is a non-positive convex function of $s$ for all $s\in (0,1)$, we have  \cite[Fig. 6]{shannon-gallager-berlekamp-1967-1}
\begin{align*}
\mu_{S_\myin,F_s}(s)+ (1-s) \mu_{S_\myin,F_s}'(s) & \leq \mu_{S_\myin,F_s}(1^-)\\ & \leq 0,
\end{align*}
which implies that $R_n(s,P_n,F_s)$ is a non-negative quantity.
Then, using a procedure similar to that used in \cite[pag. 100-102]{shannon-gallager-berlekamp-1967-1}, it is proved in Appendix \ref{appendix:cont_R} that $R_n(s,P_n,F_s)$ is a continuous function of $s$ in the interval $0<s<1$. 
Thus, for the rate $R_n$ of the code with block-length $n$, we can only have three possible cases:

\begin{enumerate}
\item $R_n>R_n(s,P_n,F_s) \quad \forall s \in (0,1)$;
\item $R_n<R_n(s,P_n,F_s) \quad \forall s \in (0,1)$;
\item $R_n=R_n(s,P_n,F_s)\quad$for some $s$ in $(0,1)$.
\end{enumerate}
For each $n$, one of the three possible cases above is satisfied and at least one\footnote{For some channels at certain rates, there could be more than one case which is satisfied infinitely often, depending on how $R_n$ is made to converge to $R$. This happens for example for those channels with $C_0=C$ whenever $R=C$. Note that this does not impact the correctness of the proof.} of the cases is then satisfied for infinitely many values of $n$.

Suppose thus that case 1) above is satisfied infinitely often. Then, for any \emph{fixed} $s\in (0,1)$ we have $R_n(s,P_n,F_s)< R_n$ infinitely often. 
We can focus on the subsequence of codes with this property and use it as our sequence, so that $R_n(s,P_n,F_s)< R_n$ for all $n$.
For these codes, since condition \eqref{eq:lastR} is not satisfied, then \eqref{eq:lastPmax} must be satisfied. Since $s\in (0,1)$ is fixed, we can make $n\to\infty$ so that the last two terms on the right hand side of \eqref{eq:lastPmax} vanish. In the limit, since $R_n(s,P_n,F_s)$ is non-negative, we get
\begin{align}
E(R) & \leq E_0\left(\frac{s}{1-s}\right) 
-\frac{s}{1-s} R_n(s,P_n,F_s)\\
& \leq E_0\left(\frac{s}{1-s}\right) 
\label{eq:E(R)to0}
\end{align}
for all $s\in(0,1)$. 
Letting then $s\to 0$ we find, using \eqref{eq:E0rhoq}, that $E(R)\leq E_0(0)=0$. Thus, surely $E(R)\leq \Esp(R)$ proving the theorem in this case.

Suppose now that case 2) above is satisfied infinitely often. Then, for any \emph{fixed} $s\in (0,1)$ we have $R_n<R_n(s,P_n,F_s)$ infinitely often. 
We can focus on the subsequence of codes with this property and use it as our sequence, so that $R_n<R_n(s,P_n,F_s)$ for all $n$.
As $n\to\infty$, since $s$ is fixed and $P_n\to\bar{P}$, we see from 
\eqref{eq:defRnfinal} that, 
\begin{equation}
R_n(s,P_n,F_s)=\\- \sum_\myin  \bar{P}(\myin) \left[ \mu_{S_\myin,F_s}(s)+ (1-s) \mu_{S_\myin,F_s}'(s)\right]+o(1).
\label{eq:Rnlimit}
\end{equation}
Since $R_n\to R$, the inequality $R_n<R_n(s,P_n,F_s)$ leads then, in the limit $n\to\infty$, to
\begin{equation*}
R  \leq  - \sum_\myin \bar{P}(\myin)\left[ \mu_{S_\myin,F_s}(s)+ (1-s) \mu_{S_\myin,F_s}'(s)\right].
\end{equation*}
Now, by the fact that $\mu_{S_\myin,F}(s)$ is convex and non-positive for all $F$, it is possible to observe that $\mu_{S_\myin,F_s}(s)-s\mu_{S_\myin,F_s}'(s)\leq 0$, which implies that $-\mu_{S_\myin,F_s}'(s)\leq -\mu_{S_\myin,F_s}(s)/s$. Thus, for all $s\in(0,1)$, 
\begin{align*}
R & \leq  \sum_\myin \bar{P}(\myin) \left(- \frac{1}{s}\mu_{S_\myin,F_s}(s)\right)\\
& \leq  \frac{1-s}{s}E_0\left(\frac{s}{1-s}\right)
\end{align*}
where, in the last step, we have used \eqref{eq:E0s/(1-s)}. Calling now $\rho=s/(1-s)$, we find that for all $\rho>0$
\begin{equation*}
R  \leq  \frac{E_0(\rho)}{\rho}.
\end{equation*}
Hence, for any $\varepsilon>0$, we find
\begin{align*}
\Esp\left(R - \varepsilon\right) & =  \sup_{\rho>0}\left(E_0\left(\rho\right) -\rho \left( R-\varepsilon\right) \right)
 \\   & \geq  \sup_{\rho>0} (\rho\, \varepsilon ).
\end{align*}
This means that $\Esp(R-\varepsilon)$ is unbounded for any $\varepsilon>0$, which obviously implies that $E(R)\leq \Esp(R-\varepsilon)$ for all positive $\varepsilon$, proving the theorem in this case.

Suppose finally  that case 3) is satisfied infinitely often. Thus, for infinitely many $n$, there is a $s_n\in(0,1)$ such that $R_n=R_n(s_n,P_n,F_{s_n})$. Since the values $s_n$ are in the interval $(0,1)$, there must exist an accumulating point for the $s_n$ in the closed interval $[0,1]$. We will first assume that such an accumulating point $\bar{s}$ exists satisfying $0<\bar{s}<1$. A subsequence of codes then exists with the $s_n$ tending to $\bar{s}$. Let this subsequence be our new sequence. For these codes, since condition \eqref{eq:lastR} is not satisfied for $s=s_n$, then \eqref{eq:lastPmax} must be satisfied with $s=s_n$.
We can first substitute $R_n(s_n,P_n,F_{s_n})$ with $R_n$ in \eqref{eq:lastPmax}. Letting then $n\to\infty$, we find that $R_n\to R$ and the last two terms on the right hand side of $\eqref{eq:lastPmax}$ vanish, since $\mu_{S_\myin,F_{s_n}}''(s)$ is bounded for $s_n$ sufficiently close to $\bar{s}\neq0,1$. Hence, we obtain in the limit
\begin{align*}
E(R) & \leq   E_0\left(\frac{\bar{s}}{1-\bar{s}}\right) 
-\frac{\bar{s}}{1-\bar{s}} R \\
& \leq   \sup_{\rho\geq0}\left(E_0\left(\rho\right) 
-\rho R\right)\\
& =  \Esp(R).
\end{align*} 

If the only accumulating point for the $s_n$ is $\bar{s}=1$ or $\bar{s}=0$, the above procedure cannot be applied since we cannot get rid of the last two terms in \eqref{eq:lastPmax} by letting $n\to\infty$, because we have not bounded $\mu_{S_\myin,F_{s}}''(s)$ uniformly over $s\in(0,1)$, and it may well be that $\mu_{S_\myin,F_{s_n}}''(s_n)$ is unbounded as $s_n$ approaches $0$ or $1$. These cases, however, can be handled with the same procedure used for cases 1) and 2) above. 
Assume that case 3) is satisfied infinitely often with the only accumulating points $\bar{s}=0$ or $\bar{s}=1$ for the $s_n$. Consider again the appropriate sub-sequence of codes as our sequence. Fix any $\delta>0$. For  $n$ large enough, there is no $s$ in the closed interval $ [\delta,1-\delta]$ which satisfies $R_n=R_n(s,P_n,F_s)$. Hence, for all $n$ sufficiently large, one of the two following conditions must be satisfied
\begin{enumerate}
\item [1')] $R_n>R_n(s,P_n,F_s) \quad \forall s \in [\delta,1-\delta]$;
\item [2')] $R_n<R_n(s,P_n,F_s) \quad \forall s \in [\delta,1-\delta]$.
\end{enumerate}
One of the two conditions must then be satisfied infinitely often. If condition 1') is satisfied infinitely often, we can use the same procedure used for condition 1), with a fixed $s\in [\delta,1-\delta]$, to obtain
\begin{equation*}
E(R)  \leq E_0\left(\frac{s}{1-s}\right), \quad s\in [\delta,1-\delta].
\end{equation*}
Since $\delta$ is arbitrary, we can let $\delta\to 0$ and then also let $s\to 0$, deducing again $E(R)\leq 0$.
If condition 2') is satisfied infinitely often, we can repeat the procedure used for case 2),  with a fixed $s\in [\delta,1-\delta]$, to obtain 
\begin{equation*}
R\leq \frac{1-s}{s}E_0\left(\frac{s}{1-s}\right), \quad s\in [\delta,1-\delta],
\end{equation*}
and then
\begin{align*}
\Esp\left(R - \varepsilon\right) & \geq \sup_{0<\rho \leq (1-\delta)/\delta } (\rho\, \varepsilon)\\
& =\frac{(1-\delta)\varepsilon}{\delta}.
\end{align*}
Letting then $\delta\to 0$ we prove again that $E_{sp}(R-\varepsilon)$ is unbounded for any $\varepsilon>0$. This concludes the proof of the theorem.
\end{IEEEproof}

In the high rate region, the obtained expression for the upper bound to the reliability function coincides with the random coding  expression which is respectively proved and conjectured to represent a lower bound to the reliability of pure-state and mixed-state  channels.
Thus, the sphere packing bound is proved to be tight in the high rate region for pure-state channels and we may as well conjecture that it is tight in the general case.

As in the classical case, however, the sphere packing bound is provably not tight in the low rate region. For channels with no zero-error capacity, it is possible to obtain tighter bounds by extending some classical bounds to the classical-quantum setting. These channels are considered in Section \ref{sec:Zero-rate}. For channels with a zero-error capacity, instead, contrarily to the classical case, the sphere packing bound has in the classical-quantum case some important properties that relates it to the Lov\'asz theta function. The next section gives a min-max characterization of the function $\Esp(R)$ which clarifies this relation. For this purpose, it is convenient now to use the R\'enyi divergence $D_{\alpha}(\cdot,\cdot)$ in place of $\mu(s)$.

\section{Information Radii, Zero-Error Capacity and the Lov\'asz Theta Function}
\label{sec:Inform-radii}

It is known \cite{csiszar-korner-book} that the capacity of a classical channel can be written as an information radius in terms of the Kullback-Leibler divergence according to the expression
\begin{equation}
C=\min_Q \max_x D(\myChv{\myin}||Q),
\label{eq:minmaxC}
\end{equation}
where the minimum is over all distributions $Q$ on the output alphabet $\myOut$.
As made clear by Csisz\'ar \cite{csiszar-1995}, a similar expression holds for what has sometimes been called \emph{generalized capacity} or \emph{generalized cutoff rate}\footnote{Since we also consider the zero-error capacity $C_0$ of channels in this paper,  we prefer to avoid any reference to generalized capacities in the sense of  \cite{csiszar-1995}. Instead, we prefer to adopt the notation of Savage \cite[eq. (15)]{savage-1966}. In light of Arikan's results \cite{arikan-1988}, this may also be more appropriate than the notation in \cite[eq. (5)]{jacobs-berlekamp-1967}.) 
} (see \cite{csiszar-1995} for a detailed discussion).
The function $E_{\text{sp}}(R)$ equals the upper envelope of all the lines $E_0(\rho)-\rho R$, and it is useful to define the quantity 
\begin{equation}
R_\rho=\frac{E_0(\rho)}{\rho},
\label{eqdef:R_rho}
\end{equation}
which is the value at which each of these lines meets the $R$ axis.
As explicitly pointed out in \cite[Prop. 1]{csiszar-1995}, equation \eqref{eq:minmaxC} can be generalized using the R\'enyi divergence defined in \eqref{eqdef:D_alpha} to show that
\begin{equation}R_\rho=\min_{Q}\max_x D_\alpha(\myChv{\myin}||Q), \quad \alpha=1/(1+\rho).
\label{eq:minmaxRrho}
\end{equation}
It is important to remark that, in the classical case, this property of the function $\Esp(R)$ was already observed in \cite[eq. (4.23)]{shannon-gallager-berlekamp-1967-1} even if stated in different terms, and it is essentially this property that is used in the proof of the sphere-packing bound. Many related min-max characterizations of $\Esp(R)$, which give a more complete picture, were given in \cite{blahut-1974}.

Using the known properties of the R\'enyi divergence (see \cite{csiszar-1995}), we find that when $\rho\to 0$ the above expression (with $\alpha\to 1$) gives the already mentioned expression for the capacity \eqref{eq:minmaxC}, while for $\rho\to\infty$ we obtain 
\begin{equation*}
R_\infty=\min_{Q}\max_x \left[-\log \sum_{y:W(y|x)>0}Q(y)\right],
\end{equation*}
which is the dual formulation of \eqref{eq:Rinfty_class}.

As already observed in \cite[eqs. (1)-(2)]		{korner-orlitsky-1998}, 
it is evident that there is an interesting similarity between the min-max expression for the ordinary capacity $C$ (and of $R_\rho$ in general) and the definition of the Lov{\'a}sz theta function. 
In this section, we show that this similarity is not a simple coincidence. By extending relation \eqref{eq:minmaxRrho} to general classical-quantum channels, we show that Lov{\'a}sz' bound to $C_0$ emerges naturally, in that context, as a consequence of the bound $C_0\leq R_\infty$. 

\begin{theorem}
\label{th:Rrho}
For a classical-quantum channel with states $\{S_x\}$, $x\in\mathcal{X}$, and $\rho>0$, the rate $R_\rho$ defined in \eqref{eqdef:R_rho} satisfies
\begin{equation*}
R_\rho=\min_{F}\max_{x} D_\alpha(S_x||F), \quad  \alpha=1/(1+\rho),
\end{equation*} 
where the minimum is over all density operators $F$.
\end{theorem}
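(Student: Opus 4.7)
The plan is to prove both directions of the min-max identity after rewriting the R\'enyi divergence in a form compatible with $E_0(\rho,P)$. Setting $s = \rho/(1+\rho)$ so that $1-s = 1/(1+\rho)$, one has $D_\alpha(S_x\|F) = -\frac{1+\rho}{\rho}\log\Tr S_x^{1/(1+\rho)} F^{\rho/(1+\rho)}$, so the claim reduces to showing
\begin{equation*}
E_0(\rho)\;=\;\min_F\max_x\Bigl[-(1+\rho)\log\Tr S_x^{1/(1+\rho)} F^{\rho/(1+\rho)}\Bigr].
\end{equation*}

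For the inequality $\min_F\max_x(\cdots)\leq E_0(\rho)$, I would reuse the dummy operator $F_s$ already constructed in the proof of Theorem \ref{th:sphere-packing}. Let $P_s$ achieve $E_0(\rho)=E_0(\rho,P_s)$, set $A_s = \sum_x P_s(x)S_x^{1/(1+\rho)}$, and put $F_s = A_s^{1+\rho}/\Tr A_s^{1+\rho}$. The optimality condition \eqref{eq:Lagr-cond} gives $\Tr(S_x^{1/(1+\rho)} A_s^\rho)\geq \Tr A_s^{1+\rho}$ for all $x$, with equality whenever $P_s(x)>0$. Substituting $F_s$,
\begin{equation*}
-(1+\rho)\log \Tr S_x^{1/(1+\rho)} F_s^{\rho/(1+\rho)} = -(1+\rho)\log\Tr S_x^{1/(1+\rho)} A_s^{\rho}+\rho\log\Tr A_s^{1+\rho},
\end{equation*}
which is $\leq -\log\Tr A_s^{1+\rho}=E_0(\rho)$ for every $x$, with equality on the support of $P_s$. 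Hence $\max_x$ equals $E_0(\rho)$ for this choice, and the minimum over $F$ is no larger.

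For the reverse inequality, I would fix an arbitrary density $F$ and any $P$, and lower bound the max against a $P$-average. Applying Jensen's inequality to the convex function $-\log$ gives
\begin{equation*}
\max_x\bigl[-(1+\rho)\log\Tr S_x^{1/(1+\rho)} F^{\rho/(1+\rho)}\bigr]\geq -(1+\rho)\log\Tr\Bigl(\sum_x P(x) S_x^{1/(1+\rho)}\Bigr)F^{\rho/(1+\rho)}.
\end{equation*}
The noncommutative H\"older trace inequality, applied with conjugate exponents $1+\rho$ and $(1+\rho)/\rho$, together with $\Tr F = 1$, yields
\begin{equation*}
\Tr\Bigl(\sum_x P(x) S_x^{1/(1+\rho)}\Bigr)F^{\rho/(1+\rho)}\leq\Bigl(\Tr\Bigl(\sum_x P(x) S_x^{1/(1+\rho)}\Bigr)^{1+\rho}\Bigr)^{1/(1+\rho)}.
\end{equation*}
Taking $-(1+\rho)\log$ of both sides and combining with the Jensen bound shows $\max_x(\cdots)\geq E_0(\rho,P)$, and maximizing over $P$ produces the bound $\geq E_0(\rho)$ valid for every density $F$.

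The step I expect to require the most care is recovering \eqref{eq:Lagr-cond} cleanly in the operator setting: this is the noncommutative analog of a Kuhn--Tucker stationarity condition, and its justification hinges on properly differentiating $\Tr A(s,P)^{1/(1-s)}$ with respect to $P$ while keeping track of the noncommutativity of $S_x^{1-s}$ and $A(s,P)$. Once that is in hand, the H\"older step goes through as stated even though $F$ and $\sum_x P(x)S_x^{1/(1+\rho)}$ need not commute, since the noncommutative H\"older inequality (a standard consequence of the Lieb--Thirring/von Neumann trace inequalities) does not require commutativity.
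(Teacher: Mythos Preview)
Your proof is correct but takes a genuinely different route from the paper. The paper rewrites $R_\rho$ as $\frac{1}{\alpha-1}\log\|A(1-\alpha,P)\|_{1/\alpha}$, invokes the duality $\|A\|_{1/\alpha}=\max_{\|B\|_{1/(1-\alpha)}\leq 1}\Tr(AB)$, and then swaps the resulting $\min_P\max_B$ via the minimax theorem (the objective is bilinear in $P$ and $B$ over convex compact sets); the substitution $F=B^{1/(1-\alpha)}$ then yields the claim in one stroke. You instead prove the two inequalities separately: the $\leq$ direction by exhibiting the explicit optimizer $F_s$ borrowed from the sphere-packing construction and using the Kuhn--Tucker condition \eqref{eq:Lagr-cond}, and the $\geq$ direction by averaging against an arbitrary $P$, applying Jensen, and then H\"older. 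The paper's argument is shorter and more symmetric, and it avoids any appeal to KKT stationarity; your argument is more constructive, makes the optimal $F$ explicit, and ties the result directly back to the sphere-packing machinery, at the cost of depending on \eqref{eq:Lagr-cond}. Your caution about that condition is reasonable but ultimately unnecessary here: for $p>1$ the map $A\mapsto\Tr A^p$ has differential $p\,\Tr A^{p-1}\,dA$ by trace cyclicity regardless of commutativity, so the directional derivative in $P$ is exactly what you need, and convexity of $P\mapsto\Tr A(s,P)^{1/(1-s)}$ (a consequence of the triangle inequality for Schatten norms) makes the first-order condition sufficient.
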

\begin{IEEEproof}
Setting $\alpha=1/(1+\rho)$, we can write
\begin{equation*}
R_\rho =  \max_P \frac{1}{\alpha-1}\log\left[ \Tr\left( \sum_x P(x) S_x^{\alpha} \right)^{1/\alpha}\right]^{\alpha}
\end{equation*}
and, defining according to \eqref{eq:alphas_gen_p} $A(1-\alpha,P) =\sum_x P(x) S_x^{\alpha}$, we can write
\begin{equation}
R_\rho =  \max_P \frac{1}{\alpha-1}\log \| A(1-\alpha,P) \|_{1/\alpha},
\label{eq:rrhoins}
\end{equation}
where $\|\cdot\|_{r}$ is the Schatten $r$-norm. From the H\"older inequality we know that, for any  positive operators $A$ and $B$, we have
\begin{equation*}
\|A\|_{1/\alpha}\|B\|_{1/(1-\alpha)}\geq \Tr(AB),
\end{equation*}
with equality if and only if $B=\gamma A^{1-1/\alpha}$ for some scalar coefficient $\gamma$. Thus we can write
\begin{equation*}
\|A\|_{1/\alpha}=\max_{\|B\|_{1/(1-\alpha)}\leq 1} \Tr(AB),
\end{equation*}
where $B$ runs over positive operators in the unit ball in the $(1/(1-\alpha))$-norm.
Using this expression for the Schatten norm we obtain
\begin{align*}
R_\rho & =  \max_P  \frac{1}{\alpha-1}\log \max_{\|B\|_{1/(1-\alpha)}\leq 1} \Tr( A(1-\alpha,P) B)\\
& =\frac{1}{\alpha-1}\log  \min_P \max_{\|B\|_{1/(1-\alpha)}\leq 1} \Tr\left( \sum_x P(x) S_x^{\alpha} B \right).
\end{align*}
In the last expression, the minimum and the maximum are both taken over convex sets and the objective function is linear both in $P$ and $B$. Thus, we can interchange the order of maximization and minimization to get
\begin{align*}
R_\rho & = \frac{1}{\alpha-1}\log \max_{\|B\|_{1/(1-\alpha)}\leq 1}\min_P  \sum_x P(x) \Tr\left(S_x^{\alpha} B \right) \\
&= \frac{1}{\alpha-1}\log\max_{ \|B\|_{1/(1-\alpha)}\leq 1}\min_x  \Tr\left(S_x^{\alpha} B \right).
\end{align*}
Now, we note that the maximum over $B$ can always be achieved by a positive operator, since all the $S_x^\alpha$ are positive operators. Thus, we can change the dummy variable $B$ with $F=B^{1/(1-\alpha)}$, where $F$ is now a positive operator constrained to satisfy $\|F\|_1\leq 1$, that is, it is a density operator. Using $F$, we get
\begin{align*}
R_\rho & = \frac{1}{\alpha-1}\log \max_{F}\min_{x}  \Tr\left(S_x^{\alpha} F^{1-\alpha} \right) \\
& =  \min_{F}\max_{x} \frac{1}{\alpha-1} \log \Tr\left(S_x^{\alpha} F^{1-\alpha} \right)\\
& =  \min_{F}\max_{x} D_\alpha (S_x||F).
\end{align*}
where $F$ now runs over all density operators. 
\end{IEEEproof}

If all operators $S_x$ commute, which means that the channel is classical, then the optimal $F$ is diagonal in the same basis where the $S_x$ are, and we thus recover Csisz\'ar's expression for the classical case. Furthermore, for $\rho\to 0$ (that is, $\alpha\to 1$) we obtain the expression of the capacity as an information radius 
already established for classical-quantum channels \cite{hayashi-nagaoka-2003}. When $\rho=1$ (that is, $\alpha=1/2$) then, we obtain an alternative expression for the so called quantum cut-off rate \cite{ban-kurokawa-hirota-1998}. 

The most important case in our context, however, is the case when $\rho\to\infty$, that is, $\alpha\to 0$, for which we obtain
\begin{equation*}
R_\infty=\lim_{\rho\to \infty} R_\rho.
\end{equation*}
Let $S_\myin^0$ be the projector onto the support of $S_\myin$. 
We first point out that, by letting $\alpha\to 0$ in equation \eqref{eq:rrhoins}, we can write
\begin{align}
R_\infty & =\max_P \left[-\log \lambda_{\max} \left(\sum_\myin P(\myin) S_\myin^0\right)\right]\\
& = -\log \min_{P} \lambda_{\max} \left(\sum_\myin P(\myin) S_\myin^0\right).
\label{eq:rinflambdaminmixed}
\end{align}
We will not use this expression for now. We only point out that we will use this expression in the next section for the particular case of pure-state channels. Furthermore, the above expression shows that finding $R_\infty$ for a given channel is a so called \emph{eigenvalue problem}, a well known special case of linear matrix inequality problems, which can be efficiently solved by numerical methods \cite{boyd-book-1994}. If the states $S_\myin$ commute, the channel is classical and this problem is already known to reduce to a linear programming one.

Here, however, we proceed by using expressions that make more evident the relation with the Lov\'asz' theta function. Taking the limit $\rho\to\infty$ (that is, $\alpha\to 0$) in Theorem \ref{th:Rrho} and using definition \eqref{eqdef: Quantum_D_rho} we obtain
\begin{equation}
R_\infty=\min_{F}\max_{x} \log\frac{1}{\Tr\left(S_x^0 F \right)},
\label{eq:minmaxQRinfty}
\end{equation}
where the minimum is again over all density operators $F$. Note that the argument of the min-max in \eqref{eq:minmaxQRinfty} coincides with $D_{\min}(S_x|| F)$ according to the definition of $D_{\min}$ introduced in \cite{datta2009}.
The analogy with the Lov{\'a}sz theta function becomes evident if we consider the particular case of pure-state channels. If a channel has pure states  $S_x=\ket{\psi_x}\bra{\psi_x}$, the set $\{\psi_x\}$ is a valid representation of the confusability graph of the channel in Lov\'asz' sense. On the other hand, any representation in Lov\'asz' sense can be interpreted as a pure-state channel. Consider for a moment the search for the optimum $F$ in \eqref{eq:minmaxQRinfty} when restricted to rank-one operators, that is $F=\ket{f}\bra{f}$. We see that in this case we can write $\Tr(S_x^0F)=|\braket{\psi_x}{f}|^2$ and, so, we obtain precisely the value $V(\{\psi_x\})$. When searching over all possible $F$  we thus have
\begin{equation}
R_\infty\leq V(\{\psi_x\}).
\label{eq:RinftyvsV}
\end{equation}
Hence, we see that Lov{\'a}sz' bound $C_0\leq V(\{\psi_x\})$ can be deduced as a consequence of the inequality $C_0\leq R_\infty$. 

Now, as in the classical case, different classical-quantum channels share the same confusability graph and thus have the same zero-error capacity $C_0$. Hence, the best upper bound that we can obtain for $C_0$ is not in general obtained using the rate $R_\infty$ of the original channel but the rate $R_\infty$ of some auxiliary channel. It is then preferable to focus on confusability graphs. For a given graph $G$, we can then consider as a representation of the graph any set of states $\{S_x\}$ with $S_\myin S_\mysin=0$ if $\myin,\mysin$ are not connected in $G$. The role of the \emph{value} is then played by the rate $R_\infty$ associated to the states $\{S_\myin\}$, that we may denote as $R_\infty(\{S_\myin\})$. We can then define, in analogy with the theta function, the quantity
\begin{align*}
\vartheta_{\text{sp}} & = \min_{\{S_\myin\}} R_\infty(\{S_\myin\})\\
& =\min_{\{S_\myin\}}\min_{F}\max_{\myin} \log\frac{1}{\Tr\left(S_\myin^0 F \right)}\\
& =\min_{\{U_\myin\}}\min_{F}\max_{\myin} \log\frac{1}{\Tr\left(U_x F \right)},
\end{align*}
where $\{S_\myin\}$ runs over the sets of operators such that $S_\myin S_\mysin=0$ if symbols $\myin$ and $\mysin$ cannot be confused and $\{U_\myin\}$ runs over the sets of \emph{projectors} with that same property. 
Then we have the bound $C_0\leq \vartheta_{\text{sp}}$. We will later show that, in fact, $\vartheta_{\text{sp}}=\vartheta$, that is, when the rate $R_\infty$ is minimized over all channels compatible with the confusability graph, we obtain precisely the Lov\'asz theta function. In order to add some insight to this equivalence, however, we first give a self contained proof of the inequality $C_0\leq \vartheta_{\text{sp}}$ which does not involve the sphere packing bound and is obtained by ``generalizing'' Lov\'asz' approach.

\begin{theorem}
\label{th:mytheta}
For any graph $G$, we have
\begin{equation*}
C_0\leq \vartheta_{\text{sp}}\leq \vartheta.
\end{equation*}
\end{theorem}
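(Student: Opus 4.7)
My plan is to prove the two inequalities separately, with the second one (the comparison with $\vartheta$) being essentially a specialization argument and the first one (the bound on $C_0$) being a direct quantum generalization of Lov\'asz' original reasoning.

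For $\vartheta_{\text{sp}}\leq \vartheta$, the idea is simply that Lov\'asz' definition of $\vartheta$ is recovered from the defining expression of $\vartheta_{\text{sp}}$ by restricting the infimum to rank-one operators. Given any orthonormal representation $\{u_\myin\}$ and handle $c$ in the sense of Lov\'asz, I set $U_\myin=\ket{u_\myin}\bra{u_\myin}$ and $F=\ket{c}\bra{c}$. The orthogonality condition $\braket{u_\myin}{u_\mysin}=0$ for non-confusable $\myin,\mysin$ translates exactly into $U_\myin U_\mysin=0$, so $\{U_\myin\}$ is admissible, $F$ is a density operator, and $\Tr(U_\myin F)=|\braket{u_\myin}{c}|^2$. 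Since $\vartheta_{\text{sp}}$ is the minimum over a larger class of representations and of auxiliary operators, the inequality follows.

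For $C_0\leq \vartheta_{\text{sp}}$, I will adapt Lov\'asz' tensor-power argument to the quantum setting. Fix an optimal (or near-optimal) choice $\{U_\myin\}$, $F$ achieving $\vartheta_{\text{sp}}$, so that $\Tr(U_\myin F)\geq e^{-\vartheta_{\text{sp}}}$ for every $\myin$. Consider any zero-error code $\{\bm{\myin}_1,\dots,\bm{\myin}_M\}$ of block-length $n$ for a channel with confusability graph $G$, and associate to each codeword $\bm{\myin}_m$ the projector $\bm{U}_{\bm{\myin}_m}=U_{\myin_{m,1}}\otimes\cdots\otimes U_{\myin_{m,n}}$ together with $\bm{F}=F^{\otimes n}$. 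The zero-error property implies that for every pair $m\neq m'$ there is a position $i$ with $\myin_{m,i}$ and $\myin_{m',i}$ non-confusable, hence $U_{\myin_{m,i}}U_{\myin_{m',i}}=0$ and therefore $\bm{U}_{\bm{\myin}_m}\bm{U}_{\bm{\myin}_{m'}}=0$. Because these are projectors, pairwise orthogonality as operators yields that they project onto mutually orthogonal subspaces, so $\sum_m \bm{U}_{\bm{\myin}_m}\leq\mathds{1}$.

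Applying $\Tr(\cdot\,\bm{F})$ to both sides gives $\sum_m \Tr(\bm{U}_{\bm{\myin}_m}\bm{F})\leq 1$. On the other hand, the tensor-product structure and the bound on each factor yield
\begin{equation*}
\Tr(\bm{U}_{\bm{\myin}_m}\bm{F})=\prod_{i=1}^n \Tr(U_{\myin_{m,i}}F)\geq e^{-n\vartheta_{\text{sp}}}.
\end{equation*}
Combining the two inequalities gives $M e^{-n\vartheta_{\text{sp}}}\leq 1$, i.e.\ the rate of any zero-error code satisfies $R=\frac{1}{n}\log M\leq \vartheta_{\text{sp}}$, and taking the supremum over all $n$ and zero-error codes yields $C_0\leq\vartheta_{\text{sp}}$. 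The main conceptual step, which replaces Lov\'asz' use of the unit norm of the handle vector, is precisely this inequality $\sum_m \bm{U}_{\bm{\myin}_m}\leq \mathds{1}$ for pairwise-orthogonal projectors, and this is exactly what makes the argument go through in the more general quantum framework without any additional technical difficulty.
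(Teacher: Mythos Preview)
Your proof is correct and follows essentially the same approach as the paper: both inequalities are proved by exactly the arguments you give, namely specialization to rank-one operators for $\vartheta_{\text{sp}}\leq\vartheta$ and the tensor-power Lov\'asz-type argument with $\sum_m \bm{U}_{\bm{\myin}_m}\leq\mathds{1}$ for $C_0\leq\vartheta_{\text{sp}}$. The only cosmetic difference is that the paper phrases the orthogonality of the codeword projectors via $\Tr(\bm{U}_{\bm{\myin}_m}\bm{U}_{\bm{\myin}_{m'}})=0$ rather than $\bm{U}_{\bm{\myin}_m}\bm{U}_{\bm{\myin}_{m'}}=0$, which for projectors is of course equivalent.
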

\begin{IEEEproof}
The fact that $\vartheta_{\text{sp}}\leq \vartheta$ is obvious, since Lov\'asz' $\vartheta$ is obtained by restricting the minimization in the definition of $\vartheta_{\text{sp}}$ to pure states $S_\myin=\ket{\psi_\myin}\bra{\psi_\myin}$ and ``handle'' $F$. The inequality $C_0\leq \vartheta_{\text{sp}}$ derives from the sphere packing bound, but we prove it here adapting Lov\'asz' argument to the more general situation where general projector operators are used for the representation in place of the one-dimensional vectors, and a general density operator is used as the handle.

Let $\{U_\myin\}$ be a set of projectors for $G$ which achieves $\vartheta_{\text{sp}}$. To a sequence of symbols $\bm{\myin}=(\myin_1,\myin_2,\ldots,\myin_n)$, associate the operator (projector) $\bm{U}_{\bm{\myin}}=U_{\myin_1}\otimes U_{\myin_2}\cdots\otimes U_{\myin_n}$.
Consider then a set of $M$ non-confusable codewords of length $n$, $\bm{\myin}_1,\ldots,\bm{\myin}_M$, and their associated projectors $\bm{U}_{\bm{\myin}_1},\ldots,\bm{U}_{\bm{\myin}_M}$. Then, for $m\neq m'$, we have
\begin{align*}
\Tr(\bm{U}_{\bm{\myin}_m}\bm{U}_{\bm{\myin}_{m'}}) & =\prod_{i=1}^n\Tr(U_{\myin_{m,i}}U_{\myin_{m',i}})\\
& =0,
\end{align*}
since, for at least one value of $i$, $U_{\myin_{m,i}}U_{\myin_{m',i}}=0$ because codewords $\bm{\myin}_m,\bm{\myin}_{m'}$ are not confusable. Hence, since the states $\{\bm{U}_{\bm{\myin}_m}\}$ are orthogonal projectors, we have
\begin{equation*}
\sum_{m=1}^{M}\bm{U}_{\bm{\myin}_m}\leq \mathds{1},
\end{equation*}
where $\mathds{1}$ is the identity operator. Consider now the state $\bm{F}=
F^{\otimes n}$, where $F$ achieves $\vartheta_{\text{sp}}$ for the representation $\{U_x\}$. Note that, for each $m$, we have
\begin{align*}
\Tr (\bm{U}_{\bm{\myin}_m} \bm{F}) & = \prod_{i=1}^n \Tr (U_{\myin_{m,i}} F)\\
& \geq e^{-n \vartheta_{\text{sp}}}.
\end{align*}

Hence, we have 
\begin{align*}
1 & =   \Tr(\bm{F}) \\
& \geq  \sum_{m=1}^M \Tr (\bm{U}_{\bm{\myin}_m} \bm{F})\\
& \geq  Me^{-n \vartheta_{\text{sp}}}.
\end{align*}
Thus, we deduce that $M\leq e^{n\vartheta_{\text{sp}}}$. This implies that the rate of any zero-error code is not larger than $\vartheta_{\text{sp}}$, proving that $C_0\leq \vartheta_{\text{sp}}$.
\end{IEEEproof}

The following result is due to Schrijver \cite{schrijver2013}.
\begin{theorem}
For any graph $G$
\begin{equation*}
\vartheta_{\text{sp}}=\vartheta.
\end{equation*}
\label{th:schrijver}
\end{theorem}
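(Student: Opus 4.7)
The plan is to prove the nontrivial inequality $\vartheta \le \vartheta_{\text{sp}}$, since the reverse $\vartheta_{\text{sp}} \le \vartheta$ has already been noted (Lov\'asz' representation is a special case of the one defining $\vartheta_{\text{sp}}$ where the $U_x$ are one-dimensional projectors and $F$ is a pure state). So, starting from an optimal (or near-optimal) pair $(\{U_x\}, F)$ for $\vartheta_{\text{sp}}$, I want to manufacture unit vectors $\{u_x\}$ and a handle $c$ that form a legitimate Lov\'asz representation achieving value at most $\vartheta_{\text{sp}}$.

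The construction proceeds in two reduction steps. First, I would \emph{purify} the density operator $F$: writing the spectral decomposition $F=\sum_k \lambda_k \ket{\phi_k}\bra{\phi_k}$ in $\mathcal{H}$ and introducing an auxiliary Hilbert space $\mathcal{K}$ with an orthonormal basis $\{\ket{k}\}$, define
\begin{equation*}
\ket{f} = \sum_k \sqrt{\lambda_k}\,\ket{\phi_k}\otimes\ket{k} \in \mathcal{H}\otimes\mathcal{K},
\qquad \widetilde{U}_x = U_x\otimes\mathds{1}.
\end{equation*}
Then $\mathrm{Tr}(\widetilde{U}_x\ket{f}\bra{f})=\mathrm{Tr}(U_x F)$ and $\widetilde{U}_x\widetilde{U}_{x'}=0$ whenever $U_x U_{x'}=0$. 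So we have reduced to the case of projector representations against a pure handle $\ket{f}$, without changing the value of any $\mathrm{Tr}(U_x F)$ and without destroying the orthogonality constraints.

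Second, given the projectors $\{\widetilde{U}_x\}$ and the unit vector $\ket{f}$, I would collapse each projector to a rank-one object by projecting $\ket{f}$ itself:
\begin{equation*}
\ket{u_x} \;=\; \frac{\widetilde{U}_x\ket{f}}{\|\widetilde{U}_x\ket{f}\|}\, ,
\qquad \ket{c} \;=\; \ket{f}.
\end{equation*}
(If $\widetilde{U}_x\ket{f}=0$ for some $x$ then $\mathrm{Tr}(U_x F)=0$ and $\vartheta_{\text{sp}}=\infty$, so there is nothing to prove; thus we may assume all these vectors are well-defined.) A direct computation gives $\braket{u_x}{c}=\|\widetilde{U}_x\ket{f}\|=\sqrt{\mathrm{Tr}(U_x F)}$, so $|\braket{u_x}{c}|^2 = \mathrm{Tr}(U_x F)$. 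Moreover, since $\widetilde{U}_x\widetilde{U}_{x'}=0$ forces the ranges of $\widetilde{U}_x$ and $\widetilde{U}_{x'}$ to be orthogonal, the vectors $\ket{u_x}$ and $\ket{u_{x'}}$ are orthogonal whenever $x$ and $x'$ are non-confusable. Hence $\{\ket{u_x}\}$ with handle $\ket{c}$ is a valid Lov\'asz orthonormal representation of $G$ with value $\max_x \log(1/|\braket{u_x}{c}|^2)=\max_x \log(1/\mathrm{Tr}(U_x F))=\vartheta_{\text{sp}}$.

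There is no real obstacle here: the proof is essentially a purification followed by a Cauchy--Schwarz-style collapse of each projector onto $\ket{f}$. The only delicate point is making sure the orthogonality constraints of the original representation transfer correctly through both reductions, which they do because both $U_x\mapsto U_x\otimes\mathds{1}$ and $U_x\mapsto U_x\ket{f}$ preserve the subspace orthogonality $U_x U_{x'}=0\Rightarrow \mathrm{range}(U_x)\perp\mathrm{range}(U_{x'})$. Taking an infimizing sequence if necessary (to accommodate the fact that the minima in the definitions of $\vartheta$ and $\vartheta_{\text{sp}}$ may only be attained in the limit for some graphs), we obtain $\vartheta\le \vartheta_{\text{sp}}$, completing the proof.
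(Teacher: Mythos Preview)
Your proof is correct and takes a genuinely different, more direct route than the paper's. The paper's argument (due to Schrijver) proceeds by tensoring the optimal projector representation $\{U_x\}$ of $G$ with an arbitrary rank-one representation $\{V_x = \ket{v_x}\bra{v_x}\}$ of the complement $\bar{G}$, noting that $\{U_x\otimes V_x\}$ is a family of mutually orthogonal projectors, and summing $\Tr((U_x\otimes V_x)(F\otimes D))$ to obtain $\vartheta_{\text{sp}}\ge \log\sum_x|\braket{v_x}{d}|^2$; maximizing over $\{v_x\}$ and $d$ and invoking Lov\'asz' dual characterization (\cite[Th.~5]{lovasz-1979}) then gives $\vartheta_{\text{sp}}\ge\vartheta$. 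Your approach bypasses the complement graph and the duality theorem entirely: you purify $F$ and then project the purification through each $\widetilde{U}_x$, directly exhibiting a Lov\'asz representation of $G$ whose value equals $\max_x\log(1/\Tr(U_xF))$. This is more elementary and self-contained; the paper's route, on the other hand, situates the result within Lov\'asz' own framework and makes the connection to the known structure of $\vartheta$ explicit. (One small note: both minima are in fact attained, so the infimizing-sequence caveat is unnecessary.)
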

\begin{IEEEproof}
Since we already know that $\vartheta_{\text{sp}}\leq\vartheta$, we only need to prove the inequality $\vartheta_{\text{sp}}\geq\vartheta$. The proof is based on an extension of \cite[Lemma 4]{lovasz-1979} and on \cite[Th. 5]{lovasz-1979}.

Let $\{U_x\}$ be an optimal representation of projectors for the graph $G$ with handle $F$. Let $\bar{G}$ be the complementary\footnote{Symbols $\myin,\mysin$ are connected in $\bar{G}$ if and only if they are not connected in $G$.} graph of $G$ and let $\{V_x\}$ be a representation of projectors for $\bar{G}$. For any pair of inputs $\myin\neq \mysin$ we have 
\begin{align*}
\Tr((U_\myin\otimes V_\myin)(U_\mysin\otimes V_\mysin)) & =\Tr(U_\myin U_\mysin)\Tr(V_\myin V_\mysin)
\\  & = 0,
\end{align*}
which means that the set
\begin{equation*}
\{U_\myin\otimes V_\myin\}_{\myin\in\myIn}
\end{equation*}
is a set of orthogonal projectors. Then, for any density operator $D$, we have
\begin{align*}
1 & = \Tr(F\otimes D)\\
& \geq \sum_\myin \Tr ((U_\myin\otimes V_\myin)(F\otimes D) ) \\
& = \sum_\myin \Tr(U_\myin F)\Tr(V_\myin D)\\
& \geq e^{-\vartheta_{\text{sp}}}\sum_\myin \Tr(V_\myin D).
\end{align*}

If we maximize the last term above over rank-one representations $V_\myin=\ket{v_\myin}\bra{v_\myin}$ of $\bar{G}$ and rank one operators  $D=\ket{d}\bra{d}$, we obtain
\begin{equation*}
\vartheta_{\text{sp}}\geq \max_{\{v_\myin\},d}\left[ \log \sum_\myin |\braket{v_\myin}{d}|^2\right],
\end{equation*}
where the maximum is over all rank-one representations $V_\myin=\ket{v_\myin}\bra{v_\myin}$ of $\bar{G}$, and unit norm vectors $d$. From \cite[Th. 5]{lovasz-1979}, however, the right hand side of the last equation is precisely  $\vartheta$ (with our logarithmic version of $\vartheta$).
\end{IEEEproof}

Theorem \ref{th:schrijver} conclusively shows that the sphere packing bound, when applied to classical-quantum channels, gives precisely Lov\'asz' bound to $C_0$. It also shows that  pure-state channels suffice for this purpose and that, for at least one optimal channel, the minimizing $F$ in \eqref{eq:minmaxQRinfty} can be taken to have rank one. That is, for some optimal pure-state channel, equality holds in \eqref{eq:RinftyvsV}.  It is worth pointing out that this is not true in general. For some pure-state channels, the optimal $F$ in \eqref{eq:minmaxQRinfty} has rank larger than one, and thus strict inequality holds in \eqref{eq:RinftyvsV}.

\section{Classical Channels and Pure-State Channels}
\label{sec:classical-pure}

It is useful to separately consider some properties of the sphere packing bound when computed for classical and for classical-quantum channels.
Classical channels can always be described as classical-quantum channels with commuting states $S_\myin$, and the sphere packing bound for these channels is precisely the same as the usual one \cite{shannon-gallager-berlekamp-1967-1}. So, there is no need to discuss this type of channels in general, and the aim of this section is instead to show that there is an interesting relation between a classical channel and a properly chosen pure-state channel. In order to make this relation clear, we first study the particular form of the sphere packing bound for pure-state channels. Note that, for these channels, the bound is known to be tight at high rates in light of the random coding bound \cite{burnashev-holevo-1998}. 

For a channel with pure states $S_\myin=\ket{{\psi}_\myin}\bra{{\psi}_\myin}$, we simply note that we have $S_\myin^{1/(1+\rho)}=S_\myin$ and, hence, the function $E_0(\rho,P)$ can be written in a simplified way. Let
\begin{equation*}
\bar{S}_P=\sum_\myin P(\myin) \ket{\psi_\myin}\bra{\psi_\myin		}
\end{equation*}
be the mixed state generated by the distribution $P$ over the input states. Then we can write
\begin{align*}
E_0(\rho, P) & =   -\log\Tr\left( \sum_\myin P(\myin) S_\myin \right)^{1+\rho}\\
& =    -\log\Tr\bar{S}_P^{1+\rho}\\
& = -\log \sum_i \lambda_i(\bar{S}_P)^{1+\rho},
\end{align*}
where the $\lambda_i(\bar{S}_P		)$'s are the eigenvalues of $\bar{S}_P$.
For these channels, the expressions for $R_\infty$ already introduced in 
the previous section reduce to
\begin{equation}
R_\infty=- \log \min_P\lambda_{\max}(\bar{S}_P)
\label{eq:rinftyminlambda}
\end{equation}
and
\begin{align}
R_\infty & = \min_{F}\max_{\myin} \log\frac{1}{\Tr\left(S_\myin F \right)}\\
& = \min_{F}\max_{\myin} \log\frac{1}{\bra{\psi_\myin}F\ket{\psi_\myin}}.
\label{eq:rinfty_ps_density}
\end{align}
We note that if the state vectors $\psi_\myin$ are constructed from the transition probabilities $\myCh{\myin}{\myout}$ of a classical channel as done in Section \ref{sec:Umbrella-Bhattacharyya} according to equation \eqref{def:psi_from_P}, then the expression in \eqref{eq:rinfty_ps_density} is very similar to what we called $\vartheta(1)$ in Section \ref{sec:Umbrella-classical}. The only difference is that we are now using a general density operator $F$ here, while we used a rank one operator $F=\ket{f}\bra{f}$ there. For a general set of vectors $\{\psi_\myin\}$, what happens is that if for the optimizing $P^*$ in \eqref{eq:rinftyminlambda} the eigenvalue $\lambda_{\max}(\bar{S}_{P^*})$ has multiplicity one, then the optimal density operator $F$ in  \eqref{eq:rinfty_ps_density} always has rank one. If the eigenvalue has multiplicity larger than one, instead, then the optimal $F$ can have rank larger than one.
We next show that for a pure-state channel with states $\{\psi_\myin\}$ constructed from a classical channel according to \eqref{def:psi_from_P}, there is always an optimal $F$ of rank one, and the value $R_\infty$ always equals the value $\vartheta(1)$ of the original classical channel. Since we already observed in Section \ref{sec:Umbrella-classical} that $\vartheta(1)$ is the cutoff rate of the classical channel, we deduce the interesting identity between the cutoff rate of a classical channel and the rate $R_\infty$ of a pure-state classical-quantum channel defined by means of \eqref{def:psi_from_P}. We actually prove a more general result, which will also be useful in proving a statement that we made in section \ref{sec:Umbrella-classical} about the so called non-negative channels. 

\begin{theorem}
\label{th:RinftyV}
If the vectors $\psi_\myin$ of a pure-state channel satisfy $\braket{\psi_\myin}{\psi_\mysin}\geq 0$, $\forall \myin,\mysin$, then we have 
\begin{equation*}
R_{\infty} = \max_{P} \left[-\log \sum_{\myin, \mysin} P(\myin)P(\mysin) \braket{\psi_\myin}{\psi_\mysin}\right].
\end{equation*}
Furthermore, $R_\infty$ equals the value $V(\{\psi_\myin\})$ in Lov\'asz' sense, that is, the optimal $F$ in equation \eqref{eq:rinfty_ps_density} can be chosen to have rank one.
\end{theorem}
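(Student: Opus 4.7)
The plan is to prove the formula for $R_\infty$ and the rank-one claim in one stroke, by producing a rank-one density operator that simultaneously attains the minimum in \eqref{eq:rinfty_ps_density} and equals the asserted maximum. Throughout I will write $G$ for the entrywise non-negative Gram matrix $G_{\myin\mysin}=\braket{\psi_\myin}{\psi_\mysin}$, $T$ for the right-hand side of the stated identity, and $P^*$ for any minimizer on the probability simplex of the continuous convex quadratic $\|\phi_P\|^2=P^TGP$, which exists by compactness. The hypothesis gives $\|\phi_P\|^2\ge \sum_\myin P(\myin)^2>0$, so $f^*:=\phi_{P^*}/\|\phi_{P^*}\|$ is well defined; the first-order KKT conditions for this convex quadratic program read
\begin{equation*}
\braket{\psi_\myin}{\phi_{P^*}}\ge \|\phi_{P^*}\|^2 \quad \forall\,\myin\in\myIn,
\end{equation*}
with equality on the support $S:=\{\myin:P^*(\myin)>0\}$.

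For the direction $R_\infty\le T$, I will plug the rank-one density $F^*=\ket{f^*}\bra{f^*}$ into \eqref{eq:rinfty_ps_density}. The KKT inequality together with the non-negativity of $\braket{\psi_\myin}{\phi_{P^*}}$ immediately gives $\bra{\psi_\myin}F^*\ket{\psi_\myin}=\braket{\psi_\myin}{f^*}^2\ge \|\phi_{P^*}\|^2$ for every $\myin$, so that $R_\infty\le -\log\|\phi_{P^*}\|^2=T$. The candidate rank-one minimizer is thereby identified; only the matching lower bound remains, and once it is in hand both conclusions of the theorem follow at once.

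For the converse $R_\infty\ge T$ I will use the dual characterization \eqref{eq:rinftyminlambda} and establish the key identity $\lambda_{\max}(\bar{S}_{P^*})=\|\phi_{P^*}\|^2$. Writing $\bar{S}_{P^*}=ADA^T$ with $A$ the operator whose columns are the $\psi_\myin$ and $D=\operatorname{diag}(P^*)$, the non-zero eigenvalues of $\bar{S}_{P^*}$ coincide by the cyclic property with those of $A^TAD=GD$ and, since every column of $GD$ indexed outside $S$ vanishes, an elementary eigenvector analysis reduces these to the non-zero eigenvalues of the principal block $M_{SS}=(G_{\myin\mysin}P^*(\mysin))_{\myin,\mysin\in S}$. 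The KKT equalities on $S$ can then be read coordinate-wise as $M_{SS}\mathds{1}_S=\|\phi_{P^*}\|^2\mathds{1}_S$, exhibiting the strictly positive vector $\mathds{1}_S$ as an eigenvector of the entrywise non-negative matrix $M_{SS}$. The Perron--Frobenius theorem then forces $\|\phi_{P^*}\|^2$ to be the spectral radius of $M_{SS}$, hence the largest eigenvalue of $\bar{S}_{P^*}$, which through \eqref{eq:rinftyminlambda} delivers $R_\infty\ge T$.

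I expect this Perron--Frobenius step to be the main obstacle. The KKT condition by itself reveals only that $\phi_{P^*}$ is an eigenvector of $\bar{S}_{P^*}$ with eigenvalue $\|\phi_{P^*}\|^2$, which produces only the easy lower bound $\lambda_{\max}(\bar{S}_{P^*})\ge \|\phi_{P^*}\|^2$; upgrading this to equality is precisely where the entrywise non-negativity of $G$ is used decisively, by coupling it with the positivity of $\mathds{1}_S$ to invoke Perron--Frobenius. That the non-negativity hypothesis cannot be dropped is already visible for antipodal vectors $\psi_1=-\psi_2$, where $\phi_{(1/2,1/2)}=0$ yet $\lambda_{\max}(\bar{S}_P)=1$.
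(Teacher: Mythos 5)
Your proof is correct, and it arrives at the same two ingredients the paper uses — the KKT conditions for the minimizer $P^*$ of the quadratic $P^{T}GP$ over the simplex, and the rank-one witness $f^{*}=\phi_{P^*}/\|\phi_{P^*}\|$ whose scalar products with the $\psi_\myin$ are controlled by those KKT inequalities — but you close the key lower-bound gap $R_\infty\geq T$ by a genuinely different route. The paper rewrites $\lambda_{\max}(\bar{S}_P)$ as a Rayleigh quotient, uses entrywise non-negativity to restrict the optimizing vector to $\sqrt{V}$ for a distribution $V$, and then verifies via a second set of KKT conditions (for the $V$-maximization) that the bilinear form $\min_P\max_V\sum\sqrt{V(\myin)V(\mysin)P(\myin)P(\mysin)}\braket{\psi_\myin}{\psi_\mysin}$ attains its min-max at $P=V=P^*$, a verification that is somewhat delicate because the objective is concave in each variable separately and min-max does not automatically commute. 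You instead pass via the cyclic property to the non-symmetric, entrywise non-negative matrix $GD_{P^*}$, reduce to its support block $M_{SS}$, and observe that the KKT equalities say exactly $M_{SS}\mathds{1}_S=\|\phi_{P^*}\|^{2}\mathds{1}_S$, whence Perron--Frobenius (or, more elementarily, the fact that a non-negative matrix with constant row sum $\lambda$ has spectral radius $\lambda$ — the same row-sum bound $\lambda_{\max}(A)\leq\max_i\sum_j|A(i,j)|$ the paper already invokes in Section \ref{sec:Umbrella-umbrella}) gives $\lambda_{\max}(\bar{S}_{P^*})=\|\phi_{P^*}\|^{2}$ immediately. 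This avoids the saddle-point bookkeeping altogether and makes the role of the non-negativity hypothesis sharply visible: it is precisely what lets $\mathds{1}_S$ be recognized as the Perron vector, and your antipodal-pair example cleanly shows why the hypothesis cannot be dropped. The two proofs are, at bottom, the same linear algebra in different clothing, but your presentation of the lower bound is more direct and avoids the potentially confusing appearance of a min-max interchange.
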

\begin{IEEEproof}
We start with the expression
\begin{align*}
R_\infty & =- \log \min_P\lambda_{\max}(\bar{S}_P)\\
& = - \log \min_P\lambda_{\max}\left(\sum_\myin P(\myin) \ket{\psi_\myin}\bra{\psi_\myin}\right).
\end{align*}
Define a diagonal matrix $D_{P}$ with the distribution $P$ on its diagonal, and a matrix ${\Psi}$ with the vectors $\psi_\myin$ in its columns. Observe that 
\begin{align*}
\lambda_{\max}\left(\sum_\myin P(\myin) \ket{\psi_\myin}\bra{\psi_\myin}\right) & =  \lambda_{\max}({\Psi}D_{P}{\Psi}^\dagger)\\
& = \lambda_{\max}({\Psi}\sqrt{D_{P}}\sqrt{D_{P}}^\dagger{\Psi}^\dagger)\\
& = \lambda_{\max}(({\Psi}\sqrt{D_{P}})({\Psi}\sqrt{D_{P}})^\dagger)\\
& = \lambda_{\max}(({\Psi}\sqrt{D_{P}})^\dagger({\Psi}\sqrt{D_{P}})),
\end{align*}
so that
\begin{align*}
R_\infty  & =   -\log \min_P \lambda_{\max}(({\Psi}\sqrt{D_P})^\dagger({\Psi}\sqrt{D_P})) \\
 & =  -\log \min_P\max_{\|v\|=1}\bra{v}({\Psi}\sqrt{D_P})^\dagger({\Psi}\sqrt{D_P})\ket{v}\\
& =  -\log \min_P\max_{\|v\|=1}\bra{v}A_P\ket{v},
\end{align*}
where the maximum is over all unit norm vectors $v$ in $\R^{|\mathcal{X}|}$, and we set $A_P=({\Psi}\sqrt{D_P})^\dagger({\Psi}\sqrt{D_P})$. Since the matrix $A_P$ has non-negative entries $A_P(\myin,\mysin)=\sqrt{P(\myin)}\sqrt{P(\mysin)}\braket{\psi_\myin}{\psi_\mysin}\geq 0$, it is not difficult to see that the maximum can always be attained by a vector $v$ with non-negative components. We can then write $v=\sqrt{V}$, where $V$ is a distribution on $\mathcal{X}$. So,
\begin{align}
R_\infty  & = -\log \min_P\max_{V}\sum_{\myin, \mysin}\sqrt{V(\myin)}\sqrt{V(\mysin)}A_P(\myin,\mysin).
\label{eq:R0_bilinMinMax}
\end{align}
The sum in the last expression is a concave function of $P$ for fixed $V$ and vice versa. In fact, the generic term of the sum can be written as
\begin{equation*}
\sqrt{V(\myin)}\sqrt{V(\mysin)}\sqrt{P(\myin)}\sqrt{P(\mysin)}\braket{\psi_\myin}{\psi_\mysin}.
\end{equation*}
To prove concavity in $P$, for example, it suffices to note that $\braket{\psi_\myin}{\psi_\mysin}\geq 0$ and that, for two distributions $P_1$ and $P_2$, and $\alpha\in[0,1]$, we have
\begin{equation*}
\sqrt{\alpha P_1(\myin)+(1-\alpha)P_{2}(\myin)}\sqrt{\alpha P_1(\mysin)+(1-\alpha)P_{2}(\mysin)}\\
 \geq \alpha\sqrt{P_{1}(\myin)}\sqrt{P_{1}(\mysin)}+(1-\alpha)\sqrt{P_{2}(\myin)}\sqrt{P_{2}(\mysin)}
\end{equation*}
by the Cauchy-Schwartz inequality. Thus, it is not possible here to exchange the maximum and the minimum, as we did in the previous expressions for $R_\rho$, without further considerations. We proceed in a less conventional way by directly proving the optimality of the pair of distributions $P$ and $V$ both equal to
\begin{equation*}
P^*=\argmin_{P}\sum_{\myin, \mysin} P(\myin)P(\mysin) \braket{\psi_\myin}{\psi_\mysin}.
\end{equation*}
First note that the sum in the last expression is a convex function of $P$, since it is a quadratic form with nonnegative definite kernel matrix. This implies that $P^*$ can be determined by applying the usual Kuhn-Tucker conditions, which, after some calculations (see also \cite{jelinek-1968}), lead to
\begin{equation}
\sum_\mysin P^*(\mysin)\braket{\psi_\myin}{ \psi_\mysin}\geq \sum_{\myin,\mysin}P^*(\myin)P^*(\mysin)\braket{\psi_\myin}{ \psi_\mysin},\qquad \forall \myin
\label{eq:cond_optp*}
\end{equation}
with equality if $P^*(\myin)>0$.

Now, in order to prove that $V=P=P^*$ solves the min-max problem of eq. \eqref{eq:R0_bilinMinMax}, consider the conditions for optimality of $V$ given a fixed $P$. Since, as already observed, the function to maximize is concave, we can apply the usual Kuhn-Tucker conditions which lead, after some calculations, to the conclusion that $V$ is optimal for a given $P$ if and only if
\begin{equation}
\sum_{\mysin} \frac{1}{\sqrt{V(\myin)}}\sqrt{V(\mysin)}\sqrt{P(\myin)}\sqrt{P(\mysin)}\braket{\psi_\myin}{\psi_\mysin} \\ \leq \sum_{\myin_1,\myin_2}\sqrt{V(\myin_1)}\sqrt{V(\myin_2)}\sqrt{P(\myin_1)}\sqrt{P(\myin_2)}
\braket{\psi_{\myin_1}}{\psi_{\myin_2}}
\label{eq:cond_optq*}
\end{equation}
for all $\myin$, with equality if $V(\myin)>0$. Note that the inequality is surely strictly satisfied if $P(\myin)=0$, which implies that the condition is always satisfied for all $\myin$ such that $P(\myin)=0$, and also that, for such $\myin$, the optimal $V$ has $V(\myin)=0$. Also, $V(\myin)=0$ is optimal only if $P(\myin)=0$, for otherwise the associated condition is not met. Hence, for a given $P$,  the optimal $V$ satisfies $V(\myin)=0$ if and only if $P(\myin)=0$. Now we check if $V=P$ is optimal for the maximization by substituting $P$ for $V $ in the conditions of optimality. The condition for those $\myin$ with $V(\myin)>0$ (and thus $P(\myin)>0$) becomes 
\begin{align}
\sum_{\mysin} P(\mysin)\braket{\psi_\myin}{\psi_\mysin} & =  \sum_{\myin_1, \myin_2}P(\myin_1)P(\myin_2)\braket{\psi_{\myin_1}}{\psi_{\myin_2}},\label{eq:condpk>0} 
\end{align}
while the condition for those $\myin$ with $V(\myin)=0$ (and thus $P(\myin)=0$) is
\begin{align}
0 & \leq  \sum_{\myin_1, \myin_2}P(\myin_1)P(\myin_2)\braket{\psi_{\myin_1}}{\psi_{\myin_2}}.\label{eq:condpk=0} 
\end{align}
Comparing with eq. \eqref{eq:cond_optp*}, we thus see that $V=P$ is optimal if $P=P^*$, since all the conditions of eq. \eqref{eq:condpk>0} are satisfied with equality 
for all $\myin$ such that $P(\myin)>0$ (and thus $V(\myin)>0$), while the conditions in \eqref{eq:condpk=0} are always trivially satisfied.
Thus, $V=P$ is optimal for $P=P^*$. This automatically implies that $P=P^*$ is optimal, since we have
\begin{align*}
\max_{V}\sum_{\myin,\mysin}  \sqrt{V(\myin)}\sqrt{V(\mysin)}\sqrt{P(\myin)}\sqrt{P(\mysin)}\braket{\psi_\myin}{\psi_\mysin} 
& \geq  \left[\sum_{\myin,\mysin}  \sqrt{V(\myin)}\sqrt{V(\mysin)}\sqrt{P(\myin)}\sqrt{P(\mysin)}\braket{\psi_\myin}{\psi_\mysin}\right]_{V=P}\\
  & =  \sum_{\myin,\mysin}P(\myin)P(\mysin)\braket{\psi_\myin}{\psi_\mysin}\\
 & \geq  \sum_{\myin,\mysin}P^*(\myin) P^*(\mysin)\braket{\psi_\myin}{\psi_\mysin}.
\end{align*}
Thus, for the choice $P=P^*$ we have the optimal $V=P^*$ and thus 
\begin{align*}
R_\infty & =-\log \min_{P}\sum_{\myin,\mysin}P(\myin)P(\mysin) \braket{\psi_\myin}{\psi_\mysin}\\
&=\max_{P}\left[ -\log \sum_{\myin,\mysin}P(\myin)P(\mysin)\braket{\psi_\myin}{\psi_\mysin}\right],
\end{align*}
as was to be proven.

We only need now to prove that $R_\infty$ equals the value of the representation $\{\psi_\myin\}$ in Lov\'asz' sense, which means that the optimal $F$ in equation \eqref{eq:rinfty_ps_density} can be chosen to have rank one, or, in other words, that there exists a unit norm vector $f$ satisfying 
\begin{equation*}
|\braket{\psi_\myin}{f}|^2\geq e^{-R_\infty}.
\end{equation*}
In order to do this, consider the conditions expressed in equation \eqref{eq:cond_optp*}, which are satisfied for the optimal $P$ achieving $R_\infty$. Note that the right hand side of \eqref{eq:cond_optp*} is precisely the value $e^{-R_\infty}$ and it can be written as $\|\sum_\myin P^*(\myin)\psi_\myin\|^2$.
Hence, if we define 
\begin{equation*}
f=\frac{\sum_\myin P^*(\myin)\psi_\myin}{\|\sum_\myin P^*(\myin)\psi_\myin\|},
\end{equation*}
the conditions of equation \eqref{eq:cond_optp*} can be written as 
\begin{equation*}
\braket{\psi_\myin}{f}\geq \left\|\sum_\mysin P^*(\mysin)\psi_\mysin\right\|,\quad \forall \myin.
\end{equation*}
This implies that $f$ satisfies 
\begin{align*}
|\braket{\psi_\myin}{f}|^2 & \geq \left\|\sum_\mysin P^*(\mysin)\psi_\mysin\right\|^2\\
& =e^{-R_\infty}
\end{align*} as desired.
\end{IEEEproof}

\begin{remark}
The condition $\braket{\psi_\myin}{\psi_\mysin}\geq 0$, $\forall \myin,\mysin$, is by no means necessary. If a vector $\psi_\myin$ is substituted with $-\psi_\myin$, the density matrix $\bar{S}_P$ will not change, while some scalar products $\braket{\psi_\myin}{\psi_\mysin}$ will become negative. However, note that all the signs of the scalar products $\braket{\psi_{\myin_1}}{\psi_{\myin_2}}$ with $\myin_1=\myin$ or $\myin_2=\myin$ and $\myin_1\neq \myin_2$ will change.
\end{remark}

\begin{corollary}
The cutoff rate of a classical channel with transition probabilities $\myCh{\myin}{\myout}$ equals the rate $R_\infty$ of any classical-quantum pure-state channel with states $S_\myin=\ket{\psi_\myin}\bra{\psi_\myin}$ such that 
\begin{equation*}
\braket{\psi_\myin}{\psi_\mysin}=\sum_{\myout}\sqrt{\myCh{\myin}{\myout}\myCh{\mysin}{\myout}}.
\end{equation*}
In particular we have the following expression for the cutoff rate 
\begin{equation*}
- \log \min_P\lambda_{\max}\left(\sum_\myin P(\myin) \ket{\psi_\myin}\bra{\psi_\myin}\right).
\end{equation*}

\end{corollary}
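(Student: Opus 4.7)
The plan is to derive the corollary as a direct specialization of Theorem~\ref{th:RinftyV}. Before invoking it, I would first address a small well-posedness point: such vectors $\psi_\myin$ do exist, since the vectors of equation~\eqref{def:psi_from_P} provide an explicit realization, $\braket{\sqrt{\myChv{\myin}}}{\sqrt{\myChv{\mysin}}}=\sum_\myout\sqrt{\myCh{\myin}{\myout}\myCh{\mysin}{\myout}}$. Any two realizations with the same Gram matrix are related by an isometry, and both $R_\infty$ in the form~\eqref{eq:rinftyminlambda} (an eigenvalue of $\bar{S}_P$) and the expression furnished by Theorem~\ref{th:RinftyV} (depending only on the scalar products $\braket{\psi_\myin}{\psi_\mysin}$) are invariant under such isometries, so the value is independent of the representative chosen.

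Next, I would verify the hypothesis of Theorem~\ref{th:RinftyV}. Since $\sqrt{\myCh{\myin}{\myout}\myCh{\mysin}{\myout}}\geq 0$ for every $\myin,\mysin,\myout$, the Gram entries $\braket{\psi_\myin}{\psi_\mysin}$ are nonnegative, so the theorem applies and yields
\begin{equation*}
R_\infty=\max_P\left[-\log\sum_{\myin,\mysin}P(\myin)P(\mysin)\braket{\psi_\myin}{\psi_\mysin}\right].
\end{equation*}
Substituting the prescribed scalar products gives
\begin{equation*}
R_\infty=\max_P\left[-\log\sum_{\myin,\mysin}P(\myin)P(\mysin)\sum_\myout\sqrt{\myCh{\myin}{\myout}\myCh{\mysin}{\myout}}\right],
\end{equation*}
which is exactly the definition of the cutoff rate $R_1$ recalled in Section~\ref{sec:Umbrella-classical}. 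This proves the first assertion.

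For the second assertion, I would simply invoke equation~\eqref{eq:rinftyminlambda}, which was derived at the opening of Section~\ref{sec:classical-pure} for any pure-state channel and reads $R_\infty=-\log\min_P \lambda_{\max}(\bar{S}_P)$ with $\bar{S}_P=\sum_\myin P(\myin)\ket{\psi_\myin}\bra{\psi_\myin}$. Combining this with the preceding identification of $R_\infty$ as the cutoff rate immediately gives the claimed min-max eigenvalue formula.

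I do not anticipate any real obstacle: every nontrivial piece (the characterization of $R_\infty$ via the quadratic form in the $\braket{\psi_\myin}{\psi_\mysin}$, and the eigenvalue formula~\eqref{eq:rinftyminlambda}) is already in hand. The only subtlety worth flagging in the write-up is the independence of the value from the particular realization of the vectors, which is why the statement meaningfully speaks of \emph{any} pure-state channel with the prescribed Gram matrix.
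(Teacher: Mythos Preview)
Your proposal is correct and follows exactly the route the paper intends: the corollary is stated without a separate proof because it is an immediate consequence of Theorem~\ref{th:RinftyV} (applied with the nonnegative Gram entries $\braket{\psi_\myin}{\psi_\mysin}=\sum_\myout\sqrt{\myCh{\myin}{\myout}\myCh{\mysin}{\myout}}$) together with the identification of the resulting quadratic-form expression as the cutoff rate $R_1$ from Section~\ref{sec:Umbrella-classical}, and the eigenvalue formula~\eqref{eq:rinftyminlambda}. Your added remark on independence from the chosen realization of the $\psi_\myin$ is a welcome clarification that the paper leaves implicit.
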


\begin{remark}
Note that the original classical channel can be obtained from a pure-state channel defined by \eqref{def:psi_from_P} if a separable orthogonal measurement is used. Hence, any such pure-state channel can be interpreted as an underlying channel upon which the classical one is built. It is worth comparing this result with the known properties of the cutoff rate in the contexts of sequential decoding \cite{savage-1966}, \cite{arikan-1988} and list decoding \cite{forney-1968}. We have not yet studied this analogy, but we believe it deserves further consideration.
\end{remark}

We close this section with a bound on the possible values taken by $\Esp(R)$ for pure-state channels, that will also be useful in the next section. Since $\Esp(R)$ is finite and non-increasing for all $R>R_\infty$, it would be interesting to evaluate $\Esp(R_\infty^+)$, since this represents the largest finite value of the function $\Esp(R)$. Unfortunately, it is not easy in general to find this precise value. For the purpose of this paper, however, the following upper bound will be useful.
\begin{theorem}
\label{th:Esp(Rinfty)}
For any pure-state channel, we have 
\begin{equation*}
\Esp(R_\infty)\leq R_\infty.
\end{equation*}
\end{theorem}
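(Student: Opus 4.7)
The plan is to exploit the pure-state structure of $E_0(\rho)$ together with the characterization \eqref{eq:rinftyminlambda} of $R_\infty$ as $-\log \min_P \lambda_{\max}(\bar{S}_P)$. First I would write, for an arbitrary input distribution $P$,
\begin{equation*}
E_0(\rho,P) = -\log \Tr \bar{S}_P^{1+\rho} = -\log \sum_i \lambda_i(\bar{S}_P)^{1+\rho},
\end{equation*}
and then discard all but the largest eigenvalue in the sum. Since $\rho \geq 0$, this yields the simple inequality
\begin{equation*}
E_0(\rho,P) \leq -(1+\rho)\log \lambda_{\max}(\bar{S}_P).
\end{equation*}

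Next I would maximize both sides over $P$. The right-hand side becomes $-(1+\rho)\log \min_P \lambda_{\max}(\bar{S}_P)$, which by \eqref{eq:rinftyminlambda} equals $(1+\rho)R_\infty$. Thus
\begin{equation*}
E_0(\rho) \leq (1+\rho)R_\infty \qquad \forall \rho \geq 0.
\end{equation*}
Substituting this bound into the definition \eqref{eq:Esp} of $\Esp$ at $R=R_\infty$ gives, for every $\rho \geq 0$,
\begin{equation*}
E_0(\rho) - \rho R_\infty \leq (1+\rho)R_\infty - \rho R_\infty = R_\infty,
\end{equation*}
and taking the supremum over $\rho \geq 0$ on the left-hand side yields $\Esp(R_\infty) \leq R_\infty$, as required.

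The argument is essentially two lines once one recognizes that for pure-state channels $\Tr \bar{S}_P^{1+\rho}$ is a plain sum of powers of eigenvalues, so there is no real obstacle here; the only point worth flagging is that the inequality $\Tr \bar{S}_P^{1+\rho} \geq \lambda_{\max}(\bar{S}_P)^{1+\rho}$ relies on $1+\rho \geq 1$ and on the positivity of the eigenvalues of a density operator, both of which hold throughout the range $\rho \geq 0$ over which $\Esp(R)$ is defined as a supremum.
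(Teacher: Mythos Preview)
Your proof is correct and takes essentially the same approach as the paper's: both drop all but the largest eigenvalue in $\sum_i \lambda_i(\bar{S}_P)^{1+\rho}$ to obtain $E_0(\rho,P)\leq -(1+\rho)\log\lambda_{\max}(\bar{S}_P)$, then optimize over $P$ to reach $E_0(\rho)\leq (1+\rho)R_\infty$, from which the result is immediate. The paper phrases the optimization by fixing the minimizing distribution $P^*$ explicitly, but the logic is identical.
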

\begin{IEEEproof}
Let $P^*$ be the optimal $P$ in \eqref{eq:rinftyminlambda}. Then we have
\begin{align*}
\Esp(R_\infty) & = \sup_{\rho\geq 0} {E}_0(\rho)-\rho {R}_\infty\\
& =  \sup_{\rho\geq 0} \left[ -\log \min_{P}\sum_i\lambda_i^{1+\rho}(\bar{{S}}_{P})+\rho \log \lambda_{\max}(\bar{{S}}_{P^*})\right].
\end{align*}
For each $\rho$ and each $P$, we have
\begin{align*}
\sum_i\lambda_i^{1+\rho}(\bar{{S}}_{P}) & \geq \lambda_{\max}^{1+\rho}(\bar{{S}}_{P})\\
& \geq \lambda_{\max}^{1+\rho}(\bar{{S}}_{P^*}),
\end{align*}
since $P^*$ minimizes $\lambda_{\max}\left(\bar{{S}}_P\right)$.
Hence, 
\begin{align*}
\Esp(R_\infty)& \leq \sup_{\rho\geq 0} -\log \lambda_{\max}^{1+\rho}(\bar{{S}}_{P^*})+\rho \log \lambda_{\max}(\bar{{S}}_{P^*})\\
& =  -\log \lambda_{\max}(\bar{{S}}_{P^*})\\
& = {R}_\infty.
\end{align*}
\end{IEEEproof}

\remark{The bound is tight, in the sense that $\Esp(R_\infty)=R_\infty$ is possible. For example, if the optimal $P=P^*$ in \eqref{eq:rinftyminlambda} is such that $\lambda_{\max}(\bar{{S}}_{P^*})$ has multiplicity one, and if the value $\Esp(R_\infty)$ is attained\footnote{This is not obvious in general. For general channels, the function $E_0(\rho)$ is not necessarily concave, and this implies that $\Esp(R_\infty)$ may in principle be obtained for a finite $\rho$. We conjecture, however, that $E_0(\rho)$ is concave for pure state channels.}  as $\rho\to\infty$, then by inspection of the proof we notice that $\Esp(R_\infty)=R_\infty$. Note also that this bound does not imply $E(R_\infty)\leq R_{\infty}$,
 due to the arbitrarily small but positive constant $\varepsilon$ which appears in the statement of Theorem \ref{th:sphere-packing} (see also footnote \ref{fn:E(C0)}).
}

\section{Sphere-Packed Umbrella Bound}
\label{sec:SP-umbrella}

In this section we consider again the umbrella bound of Section \ref{sec:Umbrella} and we extend it to a more general bound by means of the sphere packing bound. While the original idea was to bound the performance of a classical channel by means of auxiliary representations, that were in the end auxiliary pure-state classical-quantum channels, here we expand it to obtain an umbrella bound for a general classical-quantum channel by means of an auxiliary general classical-quantum channel. 

Given a classical-quantum channel $\mathcal{C}$ with density operators $S_\myin$, $\myin\in\myIn$, and given a fixed $\rho\geq 1$, consider an auxiliary classical-quantum channel  $\tilde{\mathcal{C}}$ with states $\tilde{S}_\myin$ such that 
\begin{equation*}
\Tr \sqrt{\tilde{S}_\myin}\sqrt{\tilde{S}_\mysin}\leq \left(\Tr|\sqrt{S_\myin}\sqrt{S_\mysin}|\right)^{2/\rho},
\end{equation*}
where $|A|=\sqrt{A^*A}$. We call such a channel an admissible auxiliary channel of degree $\rho$ and we call $\Gamma(\rho)$ the set of all such channels. For a fixed auxiliary channel  $\tilde{\mathcal{C}}$, let $\tilde{E}(R)$ be its reliability function and let $\tilde{E}_{\text{sp}}(R)$ be the associated sphere packing exponent.

To any sequence of $n$ input symbols $\bm{\myin}=(\myin_1,\myin_2,\ldots,\myin_n)$, we can 
associate a signal state $\bm{S}_{\bm{\myin}}=S_{\myin_1}\otimes S_{\myin_2}\cdots\otimes S_{\myin_n}$ for the original channel and a signal state $\tilde{\bm{S}}_{\bm{\myin}}=\tilde{S}_{\myin_1}\otimes \tilde{S}_{\myin_2}\cdots\otimes \tilde{S}_{\myin_n}$ for the auxiliary channel. Thus, to a set of $M$ codewords $\bm{\myin}_1,\ldots, \bm{\myin}_M$, with a simplified notation, we can associate  states $\bm{S}_1, \ldots, \bm{S}_M$ for the original channel and  states $\tilde{\bm{S}}_1, \ldots, \tilde{\bm{S}}_M$ for the auxiliary channel. We then bound the probability of error $\Pemax$ of the original channel $\mathcal{C}$ by bounding $\tilde{\mathsf{P}}_{\text{e},\text{max}}$ of the auxiliary channel $\tilde{\mathcal{C}}$.

Consider the auxiliary channel codeword states. It was proved by Holevo \cite{holevo-2000} that for such a set of states, there exists a measurement with probability of error for the $m$-th message bounded as
\begin{equation*}
\tilde{\mathsf{P}}_{\text{e}|m}\leq \sum_{m'\neq m}\Tr\sqrt{\tilde{\bm{S}}_m}\sqrt{\tilde{\bm{S}}_{m'}}.
\end{equation*}
This implies that 
\begin{align*}
\tilde{\mathsf{P}}_{\text{e},\text{max}}& \leq  (M-1)\max_{m\neq m'}\Tr\sqrt{\tilde{\bm{S}}_m}\sqrt{\tilde{\bm{S}}_{m'}}\\
& \leq  e^{nR}\max_{m\neq m'}\Tr\sqrt{\tilde{\bm{S}}_m}\sqrt{\tilde{\bm{S}}_{m'}}.
\end{align*}
Hence, asymptotically in the block length $n$, we find that 
\begin{align*}
\max_{m\neq m'}\Tr\sqrt{\tilde{\bm{S}}_m}\sqrt{\tilde{\bm{S}}_{m'}} & \geq  \tilde{\mathsf{P}}_{\text{e},\text{max}} e^{-nR}\\
&\geq  e^{-n (\tilde{E}_{\text{sp}}(R))+o(1))}e^{-nR}
\end{align*}
and hence
\begin{equation*}
\frac{1}{n}\min_{m\neq m'}\log \frac{1}{\Tr\sqrt{\tilde{\bm{S}}_m}\sqrt{\tilde{\bm{S}}_{m'}}}\leq  \tilde{E}_{\text{sp}}(R)+R+ o(1).
\end{equation*}
Considering the original states $\bm{S}_m$, we deduce that, 
\begin{equation}
\frac{1}{n}\min_{m\neq m'}\log \frac{1}{\Tr|\sqrt{\bm{S}_m}\sqrt{\bm{S}_{m'}}|}\leq \frac{\rho}{2} \left(\tilde{E}_{\text{sp}}(R)+R\right) + o(1).
\label{eq:fidelity_vs_Esp(R)}
\end{equation}

We now anticipate some notions that will be discussed in more detail in Section \ref{sec:Zero-rate} for expository convenience, since they play a central role also in other low rate bounds to $E(R)$. The left hand side of equation \eqref{eq:fidelity_vs_Esp(R)} is the minimum fidelity distance between codewords $\min_{m\neq m'}d_{\text{F}}(m,m')$ of equation \eqref{eq:codewords_d_F} below. Borrowing from Section \ref{sec:Zero-rate} the definition of the Chernoff distance between codewords \eqref{def:codewords_d_C}, the inequality $d_{\text{C}}(m,m')\leq 2d_{\text{F}}(m,m')$ from \eqref{eq:codewords_divergences}, and Theorem 
 \ref{th:quantum-codewords}, we  deduce
\begin{equation}
E(R)\leq\rho\left(\tilde{E}_{\text{sp}}(R)+R\right).
\label{eq:sph-pack-umbrella1}
\end{equation}
For particular types of channels, this last step can be tightened. For example, if the original channel is a pairwise reversible classical channel, then we can use the relation $d_{\text{C}}(m,m')= d_{\text{F}}(m,m')$, which holds in that case, and thus state
\begin{equation}
E(R)\leq\frac{\rho}{2}\left(\tilde{E}_{\text{sp}}(R)+R\right).
\label{eq:sph-pack-umbrella2}
\end{equation}
In general, however, it would be better to change the definition of $\Gamma(\rho)$ in order to take into account these types of particular cases. We have not yet investigated the topic in this direction and we thus only focus on the general case.

Clearly, for any rate $R$, the parameter $\rho$ and the auxiliary channel $\tilde{\mathcal{C}}$ can be chosen optimally. We thus have the following result.
\begin{theorem}
The reliability function $E(R)$ is upper bounded by the function $E_{\text{spu}}(R)$, where
\begin{equation}
E_{\text{spu}}(R) = \inf_{\rho\geq 1,  \tilde{\mathcal{C}}\in\Gamma(\rho) }\rho\left(\tilde{E}_{\text{sp}}(R)+R\right).
\label{bound:umbrella_2}
\end{equation}
\end{theorem}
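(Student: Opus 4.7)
The plan is to simply consolidate the argument already sketched in this section. The idea is: for any admissible auxiliary channel $\tilde{\mathcal{C}} \in \Gamma(\rho)$, a block code for the original channel with codewords $\bm{\myin}_1,\ldots,\bm{\myin}_M$ also serves as a code for $\tilde{\mathcal{C}}$, simply by mapping each input symbol to its auxiliary state $\tilde{S}_\myin$ instead of $S_\myin$. So I would take the optimal code for the original channel at rate $R$, view its codewords as a code for the auxiliary channel, and exploit the sphere packing bound in that auxiliary setting.

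First, by Holevo's upper bound on the error probability for pure detection of the auxiliary codeword states, combined with the union bound and a worst-case step, I obtain
\begin{equation*}
\tilde{\mathsf{P}}_{\text{e,max}} \le e^{nR}\max_{m\ne m'}\Tr\sqrt{\tilde{\bm{S}}_m}\sqrt{\tilde{\bm{S}}_{m'}}.
\end{equation*}
On the other hand, by the classical-quantum sphere packing bound (Theorem \ref{th:sphere-packing}) applied to $\tilde{\mathcal{C}}$, for any code of rate at least $R$ the maximum error probability of this auxiliary code satisfies $\tilde{\mathsf{P}}_{\text{e,max}} \ge e^{-n(\tilde{E}_{\text{sp}}(R)+o(1))}$. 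Combining the two inequalities gives
\begin{equation*}
\frac{1}{n}\min_{m\ne m'} \log \frac{1}{\Tr\sqrt{\tilde{\bm{S}}_m}\sqrt{\tilde{\bm{S}}_{m'}}} \le \tilde{E}_{\text{sp}}(R) + R + o(1).
\end{equation*}

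Second, I translate this back to the original channel. Using the admissibility constraint $\Tr\sqrt{\tilde{S}_\myin}\sqrt{\tilde{S}_\mysin} \le (\Tr|\sqrt{S_\myin}\sqrt{S_\mysin}|)^{2/\rho}$ together with multiplicativity of the relevant quantities under tensor products (both for the auxiliary Bhattacharyya-like term and for the fidelity on the original side), I get the bound \eqref{eq:fidelity_vs_Esp(R)} on the minimum fidelity distance between codewords of the original code:
\begin{equation*}
\frac{1}{n}\min_{m\ne m'}\log \frac{1}{\Tr|\sqrt{\bm{S}_m}\sqrt{\bm{S}_{m'}}|} \le \frac{\rho}{2}\bigl(\tilde{E}_{\text{sp}}(R)+R\bigr)+o(1).
\end{equation*}

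Third, I invoke the codeword-pair hypothesis testing machinery of Section \ref{sec:Zero-rate} (the quantum version of the Shannon--Gallager--Berlekamp bound, Theorem \ref{th:quantum-codewords}, resting on Theorem \ref{th:QSGBB}) together with the inequality $d_{\text{C}}(m,m')\le 2 d_{\text{F}}(m,m')$ from \eqref{eq:codewords_divergences}. This yields $E(R)\le \rho(\tilde{E}_{\text{sp}}(R)+R)$ for each admissible auxiliary channel and each $\rho\ge 1$. Taking the infimum over $\rho\ge 1$ and $\tilde{\mathcal{C}}\in\Gamma(\rho)$ gives the claimed bound \eqref{bound:umbrella_2}.

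The only subtle step is the last one, where one lower-bounds the pairwise error probability in a binary test between two codewords by the fidelity-based quantity appearing on the left-hand side above. In the classical case this would be the Shannon--Gallager--Berlekamp generalization of the Chernoff bound; in the quantum case one must pass through fidelity (via $d_{\text{C}}\le 2d_{\text{F}}$) and rely on the general form of Theorem \ref{th:QSGBB} proved in Section \ref{sec:Quantum-H-test}, since codeword-pair states are tensor products of non-identical density operators and the simpler Chernoff-type limit theorem does not apply. Everything else is a direct concatenation of Holevo's upper bound, the sphere packing bound for $\tilde{\mathcal{C}}$, and the admissibility condition defining $\Gamma(\rho)$.
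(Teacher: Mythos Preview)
Your proposal is correct and follows essentially the same approach as the paper: Holevo's union-type upper bound on $\tilde{\mathsf{P}}_{\text{e,max}}$ for the auxiliary channel, combined with the sphere packing lower bound on the same quantity, yields a bound on the minimum Bhattacharyya-type distance in $\tilde{\mathcal{C}}$; the admissibility constraint converts this to a fidelity bound in $\mathcal{C}$, and Theorem~\ref{th:quantum-codewords} together with $d_{\text{C}}\le 2d_{\text{F}}$ closes the argument before taking the infimum. Your identification of the only delicate step (the pairwise lower bound via Theorem~\ref{th:QSGBB} for tensor products of non-identical states) matches the paper's reliance on that same machinery.
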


A precise evaluation of this bound is not trivial. For a given $R$, one should find the optimal pair $(\rho, \tilde{C})$ and this is in general a complex task, which gives rise to interesting optimization problems. A complete treatment of this topic is still under investigation and will hopefully be detailed in a future work. It is important, however, to consider here the particular case of the bound used for classical channels by means of pure-state channels, in order to complete our interpretation of the umbrella bound given in Section \ref{sec:Umbrella} as a special case of this one. 

Suppose the states $S_\myin$ commute, which means that the original channel is a classical one, and assume that we restrict the set of possible admissible auxiliary channels to the pure-state ones. First observe that for commuting states $S_\myin$ we have $\Tr|\sqrt{S_\myin}\sqrt{S_\mysin}|=\Tr\sqrt{S_\myin}\sqrt{S_\mysin}$ while, for pure states $\tilde{S}_\myin=\ket{\tilde{\psi}_\myin}\bra{\tilde{\psi}_\myin}$, we have $\Tr \sqrt{\tilde{S}_\myin}\sqrt{\tilde{S}_\mysin}=|\braket{\tilde{\psi}_\myin}{\tilde{\psi}_\mysin}|^2$. Hence, for a classical channel, the restriction of $\Gamma(\rho)$ to pure state channels precisely corresponds to the set of admissible representations of degree $\rho$ in the sense of Section \ref{sec:Umbrella-umbrella}.
Then, for a fixed $\rho$ and for a fixed auxiliary channel, it is interesting to study the values of $R$ for which the bound is finite. This happens for rates $R$ larger than the rate $\tilde{R}_\infty$ where $\tilde{E}_{\text{sp}}(R)$ diverges and, from our previous analysis, we have
\begin{align*}
\tilde{R}_\infty &  = \min_{F}\max_{\myin} \log\frac{1}{\bra{\tilde{\psi}_\myin}F\ket{\tilde{\psi}_\myin}}\\
& =  - \log \min_P\lambda_{\max}\left(\bar{\tilde{S}}_P\right).
\end{align*}
We see that this is almost the same expression of the value of the representation $V(\{\tilde{\psi}_\myin\})$, and it is precisely the same value if $\braket{\tilde{\psi}_\myin}{\tilde{\psi}_\mysin}\geq 0$, $\forall \myin,\mysin$, as proved in Theorem \ref{th:RinftyV}.

It would be interesting to evaluate $\tilde{E}_{\text{sp}}(\tilde{R}_\infty)$ but, as mentioned in the previous section, this is not easy in general. By Theorem \ref{th:Esp(Rinfty)}, however, we know that $\tilde{E}(\tilde{R}_\infty)\leq \tilde{R}_\infty$.
This implies that for $R=\tilde{R}_\infty$, the right hand side of equation \eqref{eq:sph-pack-umbrella1} is not larger than $2\rho \tilde{R}_\infty$. Exploiting the fact that $E(R)$ is surely non-increasing, we then deduce the bound 
\begin{equation*}
E(R)\leq 2\rho \tilde{R}_\infty, \quad R>\tilde{R}_\infty,
\end{equation*}
which is to be compared with \eqref{eq:bound_theta2_1} with $\tilde{R}_\infty$ in place of $\vartheta(\rho)$. Considering the expression \eqref{eq:rinfty_ps_density} for $R_\infty$, and optimizing over the pure-state auxiliary channels, we thus see that the umbrella bound  derived in Section \ref{sec:Umbrella} is included as a particular case of the more general one derived in this section. In general, however, for a given representation, $\tilde{E}_{\text{sp}}(\tilde{R}_\infty)$ can be strictly smaller than $\tilde{R}_\infty$. Furthermore, the function $\Esp(R)$ has in many cases slope $-\infty$ in $R_\infty$ and thus the bound of equation \eqref{bound:umbrella_2}, for a given $R$, is in those cases not optimized for the value of $\rho$ which leads to $\tilde{R}_\infty=R$, but for a larger one, which leads to $\tilde{R}_\infty<R$, whenever possible. This implies that the bound derived here is in general strictly better than that derived in Section \ref{sec:Umbrella}.

We close this section by pointing out that, as a consequence of Theorem \ref{th:schrijver}, the smallest value of $R$ for which $E_{\text{spu}}(R)$ is finite is precisely the Lov\'asz theta function.

\section{Extension of  Low Rate Bounds to Classical-Quantum Channels}
\label{sec:Zero-rate}

As anticipated in the previous sections, the sphere packing bound is in general not tight at low rates.
For example, it is infinite over a non-empty range of positive small rates for all non trivial pure-state channels, even if there is no pair of orthogonal input states, which implies that the zero-error capacity of the channel is zero.
In this section, we deal precisely with channels with no zero-error capacity. For these channels, in the classical case some bounds that greatly improve the sphere packing bound were derived. The main objective of this section is to consider some possible extensions of these results to the classical-quantum case. 

A first interesting result that extends a low rate bound from the classical to the classical-quantum setting was already obtained in \cite{holevo-1998} for the case of pure-state channels. There, the authors proved the equivalent of the zero-rate upper bound to $E(R)$ derived in \cite{shannon-gallager-berlekamp-1967-2} for the case of pure-state channels with no zero-error capacity, thus proving that even in this case the expurgated bound is tight at zero rate. For general classical-quantum channels, a similar result was attempted in \cite{holevo-2000}, but the obtained upper bound to $E(R)$ at zero-rate does not coincide with the limiting value of the expurgated bound in this case.

In this section, we first present the extension of the zero-rate upper bound of \cite{berlekamp-thesis}, \cite{shannon-gallager-berlekamp-1967-2} to the case of mixed-state channels, which leads to the determination of the exact value of the reliability function at zero-rate. Then, we also discuss some other bounds. In particular, we consider the application of Blahut's bound \cite{blahut-1977} to the case of pure- and mixed-state channels.

A recurring theme in the study of the reliability of classical and classical-quantum channels is the fact that, at low rates, the probability of error is dominated by the worst pair of codewords in the code. At high rates, it is important to bound the probability of a message to be incorrectly decoded due to a bulk of competitors. The auxiliary state $\bm{F}$ used in the proof of the sphere packing bound serves precisely to this scope and represents this bulk of competitors. At low rates, instead,  there are essentially only few competitors (we may conjecture just one) which are responsible for almost all the probability of error. Thus, in the low rate region, it is important to bound the probability of error in a binary decision between any pair of codewords.  For this reason, we need to specialize Theorem 
\ref{th:QSGBB} to the case of binary hypothesis testing between two codewords, so as to obtain the quantum generalization of \cite[Th. 1]{shannon-gallager-berlekamp-1967-2}.

In the context of Theorem 
\ref{th:QSGBB}, thus, let $\bm{A}=\bm{S}_{\bm{x}_m}$ and $\bm{B}=\bm{S}_{\bm{x}_{m'}}$, call for simplicity $\mu(s)=\mu_{\bm{S}_{\bm{x}_m},\bm{S}_{\bm{x}_{m'}}}(s)$ and let $s^*$ minimize $\mu(s)$. Then we have $\mu'(s^*)=0$, and thus either 
\begin{equation*}
\mathsf{P}_{\text{e}| m}>\frac{1}{8}\exp\left[\mu(s^*)-s^*\sqrt{2\mu''(s^*)}\right]
\end{equation*}
or
\begin{equation*}
\mathsf{P}_{\text{e}|m'}>\frac{1}{8}\exp\left[\mu(s^*)-(1-s^*)\sqrt{2\mu''(s^*)}\right].
\end{equation*}
The key point is now to show that the second derivative term is unimportant for large $n$, so that the exponential behaviour is determined by $\mu(s^*)$.
If $P_{m,m'}(\myin,\mysin)$ is the joint composition between codewords $\bm{x}_m$ and $\bm{x}_{m'}$,we find 
\begin{equation*}
\mu(s)=n\sum_{\myin, \mysin} P_{m,m'}(\myin,\mysin) \mu_{\myin,\mysin}(s),
\end{equation*}
where, for ease of notation,
\begin{equation*}
\mu_{\myin,\mysin}(s)=\log \Tr S_\myin^{1-s} S_\mysin^s.
\end{equation*}
By definition of $s^*$ we can thus introduce the Chernoff distance between the two messages $m$ and $m'$ (this corresponds to the \emph{discrepancy} $D(m,m')$ in \cite{shannon-gallager-berlekamp-1967-2})
\begin{align}
d_{\text{C}}(m,m') & =  \frac{1}{n}d_{\text{C}}(\bm{S}_{\bm{x}_m},\bm{S}_{\bm{x}_{m'}})\\
& =  -\min_{0\leq s \leq 1}\sum_{\myin,\mysin}P_{m,m'}(\myin,\mysin) \mu_{\myin,\mysin}(s).
\label{def:codewords_d_C}
\end{align} 
We then have the follwing generalization of \cite[Th. 1]{shannon-gallager-berlekamp-1967-2}.

\begin{theorem}
\label{th:quantum-codewords}
If $\bm{\myin}_m$ and $\bm{\myin}_{m'}$ are two codewords in a code of block length $n$ for a classical-quantum channel with symbol states $S_\myin$, $\myin\in\myIn$, then 
\begin{equation}
\mathsf{P}_{\text{e}| m}>\frac{1}{8}\exp\left[ -n \left( d_{\text{C}}(m,m')+\sqrt{\frac{2}{n}} \log \lambda_{\text{min}}^{-1}\right)\right]
\label{eq:pem}
\end{equation}
or
\begin{equation}
\mathsf{P}_{\text{e}|	m'}>\frac{1}{8}\exp \left[- n\left(d_{\text{C}}(m,m')+\sqrt{\frac{2}{n}}\log\lambda_{\text{min}}^{-1}\right)\right],
\label{eq:pem'}
\end{equation}
where $\lambda_{\text{min}}$ is the smallest non-zero eigenvalue of all the states $S_\myin$, $\myin\in\myIn$.
\end{theorem}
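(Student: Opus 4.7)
The plan is to derive Theorem \ref{th:quantum-codewords} directly by specializing Theorem \ref{th:QSGBB} to the pair of states associated with the two codewords, and then controlling the second-derivative correction term uniformly in $n$. Concretely, I will apply Theorem \ref{th:QSGBB} with $\bm{A}=\bm{S}_{\bm{\myin}_m}$, $\bm{B}=\bm{S}_{\bm{\myin}_{m'}}$, and $\Pi = \mathds{1}-\Pi_m$ (so that $\mathsf{P}_{\text{e}|m}=\Tr[(\mathds{1}-\Pi_m)\bm{S}_{\bm{\myin}_m}]$ and the complementary role is played by the decision event on codeword $m'$, using $\Pi_{m'}\leq \mathds{1}-\Pi_m$ to absorb $\mathsf{P}_{\text{e}|m'}$). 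The natural choice of parameter is $s^{*}=\arg\min_{s\in[0,1]}\mu(s)$, which gives $\mu'(s^{*})=0$ and $\mu(s^{*})=-n\, d_{\text{C}}(m,m')$ by the definition of the codeword Chernoff distance. With these simplifications the two exponents produced by Theorem \ref{th:QSGBB} become $\mu(s^{*}) - s^{*}\sqrt{2\mu''(s^{*})}$ and $\mu(s^{*}) - (1-s^{*})\sqrt{2\mu''(s^{*})}$; since $s^{*},1-s^{*}\in[0,1]$, both are lower bounded by $\mu(s^{*})-\sqrt{2\mu''(s^{*})}$.

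The core task then reduces to a uniform upper bound on $\mu''(s^{*})$. Using the memoryless tensor-product structure and the composition $P_{m,m'}$, we have $\mu''(s)=n\sum_{\myin,\mysin}P_{m,m'}(\myin,\mysin)\mu''_{\myin,\mysin}(s)$, so it suffices to bound each $\mu''_{\myin,\mysin}(s)$ by a constant depending only on the channel. For this, I will invoke the Nussbaum--Szko{\l}a mapping used in the proof of Theorem \ref{th:QSGBB}: for each pair $(\myin,\mysin)$ there exist classical distributions $Q_0,Q_1$ on index pairs $(i,j)$ with $Q_0(i,j)=\alpha_i|\braket{a_i}{b_j}|^2$ and $Q_1(i,j)=\beta_j|\braket{a_i}{b_j}|^2$, where $\alpha_i,\beta_j$ are the eigenvalues of $S_\myin,S_\mysin$, such that $\mu_{\myin,\mysin}(s)=\mu_{Q_0,Q_1}(s)$. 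By the identity \eqref{eq:mu''}, $\mu''_{\myin,\mysin}(s)$ equals the variance of $\log(Q_1/Q_0)=\log(\beta_j/\alpha_i)$ under the tilted distribution $\tilde Q_s$. On the support of $\tilde Q_s$ we have $\alpha_i,\beta_j>0$, hence $\alpha_i,\beta_j\in[\lambda_{\text{min}},1]$, so $|\log(\beta_j/\alpha_i)|\leq \log\lambda_{\text{min}}^{-1}$, whence
\begin{equation*}
\mu''_{\myin,\mysin}(s) \;\leq\; \bigl(\log\lambda_{\text{min}}^{-1}\bigr)^{2}.
\end{equation*}

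Averaging over $P_{m,m'}$ gives $\mu''(s^{*})\leq n(\log\lambda_{\text{min}}^{-1})^{2}$, and therefore
\begin{equation*}
\sqrt{2\mu''(s^{*})}\;\leq\;\sqrt{2n}\,\log\lambda_{\text{min}}^{-1}.
\end{equation*}
Substituting this into both exponents and recalling $\mu(s^{*})=-n\, d_{\text{C}}(m,m')$ yields the claimed lower bounds \eqref{eq:pem} and \eqref{eq:pem'}. The main obstacle, and the only step that requires care beyond a mechanical specialization of Theorem \ref{th:QSGBB}, is this uniform control of $\mu''(s^{*})$: in the classical Shannon--Gallager--Berlekamp argument one uses the smallest positive transition probability directly, whereas here the Nussbaum--Szko{\l}a reduction is what lets us translate the quantum second moment into a classical variance that depends only on the spectral gap $\lambda_{\text{min}}$ of the signal states, and it is crucial that this bound is independent of the particular auxiliary state tensor structure or of $s^{*}$ itself.
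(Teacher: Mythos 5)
Your proof is correct and follows essentially the same route as the paper. The paper's proof also begins by applying Theorem \ref{th:QSGBB} with $\bm{A}=\bm{S}_{\bm{\myin}_m}$, $\bm{B}=\bm{S}_{\bm{\myin}_{m'}}$ at the minimizing $s^*$ (so $\mu'(s^*)=0$), and then observes that the only new ingredient versus the classical argument of Shannon--Gallager--Berlekamp is a uniform bound on $\mu''_{\myin,\mysin}(s)$, obtained exactly as you propose: via the Nussbaum--Szko{\l}a mapping, the log-likelihood ratio becomes $\log(\lambda_{\myin,i}/\lambda_{\mysin,j})$, which is bounded in absolute value by $\log\lambda_{\min}^{-1}$ on the support, so the variance is at most $(\log\lambda_{\min}^{-1})^2$. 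The only cosmetic difference is that the paper phrases the finish as ``apply \cite[Th.~1]{shannon-gallager-berlekamp-1967-2} to the Nussbaum--Szko{\l}a distributions'' whereas you re-derive the conclusion from Theorem \ref{th:QSGBB} directly, but these are the same calculation.
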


\begin{IEEEproof}
The proof is essentially the same as in \cite[Th. 1]{shannon-gallager-berlekamp-1967-2}, with the only difference that we have to bound $\mu_{\myin,\mysin}''(s)$ for $\mu_{\myin,\mysin}(s)$ computed between density operators rather than probability distributions. Using the spectral decomposition of the density operators
\begin{align*}
S_\myin & =\sum_i\lambda_{\myin,i} \ket{\psi_{\myin,i}}\bra{\psi_{\myin,i}} \\S_\mysin & =\sum_j \lambda_{\mysin,j} \ket{\psi_{\mysin,j}}\bra{\psi_{\mysin,j}},
\end{align*}
where  $\{\psi_{\myin,i}\}$ and $\{\psi_{\mysin,j}\}$ are orthonormal bases, we can however use again the Nussbaum-Szko{\l}a mapping to define two probability distributions $Q_\myin(i,j)$ and $Q_\mysin(i,j)$ as
\begin{align}
Q_\myin(i,j)& =\lambda_{\myin,i} |\braket{\psi_{\myin,i}}{\psi_{\mysin,j}}|^2, \\ Q_\mysin(i,j)& =\lambda_{\mysin,j} |\braket{\psi_{\myin,i}}{\psi_{\mysin,j}}|^2,
\label{eq:Pdef}
\end{align}
so that
\begin{equation*}
\mu_{\myin,\mysin}(s)  =  \log  \sum_{i,j}Q_\myin(i,j)^{1-s}Q_\mysin(i,j)^s.
\end{equation*}
The proof in \cite[Th. 1]{shannon-gallager-berlekamp-1967-2} can then be applied using our distributions $Q_\myin(\cdot,\cdot)$ and $Q_\mysin(\cdot,\cdot)$ for $P(\cdot|i)$ and $P(\cdot|k)$ there, and noticing that in \cite[eq. (1.10)]{shannon-gallager-berlekamp-1967-2} we can use in our case the bound
\begin{align*}
\left|\log\frac{Q_\myin(i,j)}{Q_\mysin(i,j)}\right| & =   \left|\log\frac{\lambda_{\myin,i}}{\lambda_{\mysin,j}}\right|\\
& \leq  \log \lambda_{\text{min}}^{-1}.
\end{align*}
\end{IEEEproof}

By considering all possible pairs of codewords, it is clear that the optimal exponential behaviour of $\Pemax$ can be bounded in terms of the minimum discrepancy between codewords of an optimal code with $M$ codewords and block length $n$. 
Theorem \ref{th:quantum-codewords} implies that the results on the zero-rate reliability of \cite[Th. 3-4]{shannon-gallager-berlekamp-1967-2} apply straightforwardly to the classical-quantum case. These results are related to an upper bound to the reliability function in the low rate region derived by Blahut \cite{blahut-1977} and to upper bounds derived for the classical-quantum case in \cite{burnashev-holevo-1998} and \cite{holevo-2000}. A clarification of these relations is the objective of the next part of this section. 

In view of that, it is useful to remind here the relation between the Chernoff, the Bhattacharyya and the fidelity distances. If we define the Bhattacharyya and the fidelity distances between two messages $m$ and $m'$ as
\begin{align}
d_{\text{B}}(m,m') & =  \sum_{\myin,\mysin}P_{m,m'}(\myin,\mysin) d_{\text{B}}(S_\myin,S_\mysin)\\
d_{\text{F}}(m,m') & =  \sum_{\myin,\mysin}P_{m,m'}(\myin,\mysin) d_{\text{F}}(S_\myin,S_\mysin),
\label{eq:codewords_d_F}
\end{align}
then we see that the following inequalities hold
\begin{equation}
d_{\text{F}}(m,m')\leq d_{\text{B}}(m,m')\leq d_{\text{C}}(m,m')\\ \leq 2d_{\text{F}}(m,m')\leq 2d_{\text{B}}(m,m').
\label{eq:codewords_divergences}
\end{equation}

It is useful to investigate conditions under which $d_{\text{C}}(m,m')$ can be expressed exactly in terms of $d_{\text{B}}(m,m')$ or $d_{\text{F}}(m,m')$. One case is that of pairwise reversible channels. We observe that in the classical-quantum context, the condition for pairwise reversibility holds for example for all pure-state channels, for which $\mu_{\myin,\mysin}(s)$ is constant for all $\myin$ and $\mysin$. Another important case is the case where codewords $m$ and $m'$ have a symmetric joint composition, that is $P_{m,m'}(\myin,\mysin)=P_{m,m'}(\mysin,\myin)$ for all $\myin,\mysin$, for in that case the function $\mu(s)$ is symmetric around $s=1/2$, due to the fact that $\mu_{\myin,\mysin}(s)=\mu_{\mysin,\myin}(1-s)$.
In those cases, the Chernoff distance $d_{\text{C}}(m,m')$ can be replaced by the closed form expression of $d_{\text{B}}(m,m')$. In the general case, however, for a single pair of codewords, it is not possible to use $d_{\text{B}}(m,m')$ in place of $d_{\text{C}}(m,m')$ for lower bounding the probability of error, and it can be proved that in some cases $d_{\text{C}}(m,m')=2 d_{\text{B}}(m,m')$.

We are now in a position to consider the low rate upper bounds to the reliability function derived in \cite{shannon-gallager-berlekamp-1967-2} and in \cite{blahut-1977} discussing their applicability for classical and classical-quantum channels.
For classical channels, a low rate improvement of the sphere packing bound for channels with no zero error capacity was obtained in \cite{shannon-gallager-berlekamp-1967-2}. This bound is based on two important results:
\begin{enumerate}
\item A \emph{zero rate bound} \cite[Th 4]{shannon-gallager-berlekamp-1967-2}, first derived in \cite[Ch. 2]{berlekamp-thesis}, which asserts that, for a discrete memoryless channel with transition probabilities $\myCh{\myin}{\myout}$,
\begin{equation}
E(0^+)\leq \max_{P}-\sum_{\myin,\mysin} P(\myin)P(\mysin)\log \sum_\myout \sqrt{\myCh{\myin}{\myout}\myCh{\mysin}{\myout}}.
\label{eq:zero-rate}
\end{equation}
The right hand side of the above equation is also the value of the expurgated bound of Gallager as $R\to 0$. This implies that the bound is tight, and that the expression determines the reliability function at zero rate.

\item A \emph{straight line bound} \cite[Th. 6]{shannon-gallager-berlekamp-1967-2}, attributed to Shannon and Gallager in Berlekamp's thesis \cite[pag. 6]{berlekamp-thesis} which asserts that, given an upper bound $E_{\text{lr}}(R)$ to the reliability function which is tighter than $\Esp(R)$ at low rates, it is possible to combine $E_{\text{lr}}(R)$ ad $\Esp(R)$ to obtain an improved upper bound on $E(R)$ by drawing a straight line from any two points on the curves $E_{\text{lr}}(R)$ and $\Esp(R)$.
\end{enumerate}
Combining these two results, the authors obtain an upper bound to $E(R)$ which is strictly better than the sphere packing bound for low rates and is tight at rate $R=0$. 

Of the above two results, we only consider here the extension of the zero-rate bound, a possible generalization of the straight line bound being still under investigation at the moment. 
For quantum channels, a zero rate bound has been obtained by Burnashev and Holevo \cite{burnashev-holevo-1998} for the case of pure-state channels which essentially parallel the classical result. That is, they proved that in the case of pure-states $S_\myin=\ket{\psi_\myin}\bra{\psi_\myin}$, if there is no pair of orthogonal states, then 
\begin{equation}
E(0^+)\leq \max_{P}\left[-\sum_{\myin,\mysin}	 P(\myin)P(\mysin)\log |\braket{\psi_\myin}{\psi_\mysin}|^2\right].
\label{eq:Qzerorate-pure}
\end{equation}
As for the classical case, this bound coincides with a lower bound given by the expurgated bound as $R\to 0 $, thus providing the exact expression.

For general mixed-state channels, the reliability at zero rate was considered by Holevo \cite{holevo-2000} who obtained the bound
\begin{equation}
 \max_{P}\left[-\sum_{\myin,\mysin}P(\myin)P(\mysin)\log \Tr \sqrt{S_\myin}\sqrt{S_\mysin} \right]
 \leq E(0^+) \\  \leq 2\max_{P}\left[-\sum_{\myin,\mysin}P(\myin)P(\mysin) \log\Tr  |\sqrt{S_\myin}\sqrt{S_\mysin}|\right].
 \label{eq:HolevoUpLowBound}
\end{equation}
Note that, in light of the now clear parallel role of the R{\'e}nyi divergence in the classical and quantum case, the expression of the lower bound in \eqref{eq:HolevoUpLowBound} is the generalization of the right hand side of \eqref{eq:zero-rate}, while the upper bound is in the general case a larger quantity. Thus, we are inclined to believe that the correct expression for $E(0^+)$ should be the first one. 
This is actually the case, as stated in the following theorem.
\begin{theorem}
For a general classical-quantum channel with states $S_\myin$, $\myin\in\myIn$, no two of which are orthogonal, the reliability function at zero rate is given by the expression
\begin{equation}
E(0^+)= \max_{P}\left[-\sum_{\myin,\mysin}P(\myin)P(\mysin) \log \Tr \sqrt{S_\myin}\sqrt{S_\mysin}\right].
\label{eq:berlek-general}
\end{equation}
\end{theorem}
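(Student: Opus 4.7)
The lower bound on $E(0^+)$ already follows from Holevo's expurgated bound \cite{holevo-2000}: letting $\rho\to\infty$ in \eqref{eq:class_expurgated1}--\eqref{eq:class_expurgated2} with the quantum $\Ex(\rho,P)$ and evaluating at $R=0$ recovers the right-hand side of \eqref{eq:berlek-general}. The task is therefore to prove the matching upper bound, sharpening the factor of two in Holevo's estimate \eqref{eq:HolevoUpLowBound}. My plan is to follow the Berlekamp--Shannon--Gallager scheme of \cite[Th. 4]{shannon-gallager-berlekamp-1967-2}, replacing its classical hypothesis-testing lemma with Theorem \ref{th:quantum-codewords}.

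Theorem \ref{th:quantum-codewords} gives, for any classical-quantum code of block length $n$,
\begin{equation*}
\Pemax\geq\tfrac{1}{8}\exp\Bigl[-n\bigl(\min_{m\neq m'}d_{\text{C}}(m,m')+O(n^{-1/2})\bigr)\Bigr],
\end{equation*}
so it suffices to upper bound $\min_{m\neq m'}d_{\text{C}}(m,m')$ along an optimal sequence of codes with rate $R_n\to 0^+$ and $M_n\to\infty$. After reducing to a constant-composition subcode with composition $P$ (losing only $o(1)$ in rate, exactly as in the proof of Theorem \ref{th:sphere-packing}), I would show that
\begin{equation*}
\min_{m\neq m'}d_{\text{C}}(m,m')\leq -\sum_{\myin,\mysin}P(\myin)P(\mysin)\log\Tr\sqrt{S_\myin}\sqrt{S_\mysin}+o(1)
\end{equation*}
as $M_n\to\infty$, and then maximize over $P$.

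The crux is the identity $\mu_{\myin,\mysin}(s)=\mu_{\mysin,\myin}(1-s)$, which is immediate from \eqref{eq:mu_quantum_def_1} and the cyclicity of the trace. It implies that for any pair of codewords whose joint composition $P_{m,m'}$ is \emph{symmetric} in its two arguments, the function $s\mapsto-\sum_{\myin,\mysin}P_{m,m'}(\myin,\mysin)\mu_{\myin,\mysin}(s)$ is symmetric in $s$ about $1/2$, so by convexity of $\mu_{\myin,\mysin}(s)$ its supremum over $s\in[0,1]$ is attained at $s=1/2$, giving $d_{\text{C}}(m,m')=d_{\text{B}}(m,m')$. For the product composition $P\otimes P$ this supremum equals exactly the target right-hand side of \eqref{eq:berlek-general}. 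The plan is therefore to produce, by a Plotkin-style averaging over ordered pairs, at least one pair whose joint composition is $o(1)$-close in total variation to $P\otimes P$; continuity of $d_{\text{C}}$ in $P_{m,m'}$ (inherited from continuity of $\mu_{\myin,\mysin}$ on $[0,1]$) then yields the displayed bound. Technically, this requires a further expurgation step so that the per-position symbol frequencies in the retained subcode are close to $P$, ensuring that the empirical joint distribution averaged over all $M_n(M_n-1)$ ordered codeword pairs approaches $P\otimes P$; a type-counting pigeonhole then exhibits a specific pair close to this average.

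The principal obstacle is that $d_{\text{C}}$ is convex, not concave, in $P_{m,m'}$ (being the supremum of linear functionals in the composition), so the naive inequality ``$\min\leq\text{average}$'' applied to $d_{\text{C}}$ itself is useless: the average of $d_{\text{C}}$ over pairs \emph{exceeds}, rather than approaches, the target value at $P\otimes P$. The resolution is to average the compositions and then invoke the symmetric-composition identity at the single pair realizing that average, rather than to average $d_{\text{C}}$. Noncommutativity of the states $S_\myin$ does not create new difficulties: it is handled, exactly as in the proof of Theorem \ref{th:quantum-codewords}, via the Nussbaum--Szko{\l}a mapping, which reduces all manipulations of $\mu_{\myin,\mysin}(s)$ to classical Bhattacharyya-type computations and lets the Berlekamp combinatorics proceed essentially as in the classical case.
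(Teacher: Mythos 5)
Your overall strategy matches the paper's: reduce via Theorem~\ref{th:quantum-codewords} to upper bounding $\min_{m\neq m'}d_{\text{C}}(m,m')$, then import the Berlekamp combinatorics of \cite[Th. 4]{shannon-gallager-berlekamp-1967-2}, which goes through verbatim because it touches the functions $\mu_{x,x'}(s)$ only through their additivity and the symmetry $\mu_{x,x'}(s)=\mu_{x',x}(1-s)$. Your lower-bound half, your observation that a symmetric joint composition forces $d_{\text{C}}=d_{\text{B}}$, and your diagnosis that $d_{\text{C}}$ is convex (not concave) in the joint composition are all correct.

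However, the combinatorial core of your plan is wrong: it is not true that a constant-composition code of diverging size with near-uniform per-position symbol frequencies must contain a pair whose joint composition is $o(1)$-close to $P\otimes P$, and a type-counting pigeonhole cannot produce such a pair. That the average of the joint compositions over ordered pairs is near $P\otimes P$ does not yield a pair near that average. Concretely, take a binary alphabet, $P=(\tfrac12,\tfrac12)$, $n=2k$, and codewords in $\{ab,ba\}^k$: let the code be the union of two small Hamming balls (in the block coordinates $\{0,1\}^k$, identifying $ab\mapsto 0$, $ba\mapsto 1$) of radius $\epsilon k$ around the complementary centres $0^k$ and $1^k$. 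Every column is balanced, $M$ can be $2^{\Omega(k)}$, yet every pair's joint composition concentrates either near the diagonal $\{(a,a),(b,b)\}$ (pairs within a ball) or near the anti-diagonal $\{(a,b),(b,a)\}$ (pairs across balls), never near $P\otimes P$, even though the pair-averaged composition is $\approx P\otimes P$ because the two populations balance. The theorem is not threatened by this code (the within-ball pairs already have $d_{\text{C}}\approx 0$), but your proposed route collapses on it, since the ``single pair realizing that average'' you want to invoke does not exist.

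What \cite[Th. 4]{shannon-gallager-berlekamp-1967-2} actually proves is materially harder: a characterization of the joint compositions achievable in large codes that lets the antisymmetric derivative contributions $\mu_{x,x'}'(1/2)=-\mu_{x',x}'(1/2)$ cancel upon averaging over a carefully chosen ensemble of pairs, \emph{without} ever exhibiting a single pair at the product composition. That characterization is precisely what the paper flags as the ``truly remarkable'' ingredient, and the paper's own proof of this theorem deliberately treats it as a black box — it consists of the single observation that Theorem~\ref{th:quantum-codewords} supplies the quantum replacement for \cite[Th. 1]{shannon-gallager-berlekamp-1967-2} and that the remaining classical combinatorics uses only the additivity and the $s\leftrightarrow 1-s$ symmetry of $\mu$. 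Repair your sketch by doing the same: cite the Berlekamp lemma rather than re-deriving it with a pigeonhole that cannot deliver.
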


\begin{IEEEproof}
This theorem is the quantum equivalent of \cite[Th. 4]{shannon-gallager-berlekamp-1967-2}, and it is a direct consequence of Theorem \ref{th:quantum-codewords}. It can be noticed, in fact, that the proof of \cite[Th. 4]{shannon-gallager-berlekamp-1967-2} holds exactly unchanged in this new setting since it only depends on \cite[Th. 1]{shannon-gallager-berlekamp-1967-2} and on the definition and additivity property of the function $\mu(s)$. We do not go through the proof here since it is very long and it does not need any change.

\end{IEEEproof}

It is interesting to briefly discuss the Holevo upper bound for $E(0^+)$ for mixed-state channels, that is, the right hand side of \eqref{eq:HolevoUpLowBound}. First observe that, in the classical case, that is when all states commute, the expression reduces to 
\begin{equation}
E(0^+)\leq 2 \max_{P}\left[-\sum_{\myin,\mysin}P(\myin)P(\mysin) \log \sum_\myout \sqrt{\myCh{\myin}{\myout}\myCh{\mysin}{\myout}}\right].
\label{eq:2timesrate-zero}
\end{equation}
This bound (in the classical case) is much easier to prove than Berlekamp's bound \eqref{eq:zero-rate}. First note that Berlekamp's bound is relatively simple to prove for the case of pairwise reversible channels, exploiting the fact that $d_{\text{C}}(m,m')=d_{\text{B}}(m,m')$ in that case  (see \cite[Cor. 3.1]{shannon-gallager-berlekamp-1967-2}). Essentially the same proof allows to derive the bound \eqref{eq:Qzerorate-pure}, since $d_{\text{C}}(m,m')=d_{\text{B}}(m,m')$ also holds for pure-state channels as discussed in the previous section. For general classical channels, the same proof can be used by bounding $d_{\text{C}}(m,m')$ with $2d_{\text{B}}(m,m')$ as explained before, and this leads to the bound \eqref{eq:2timesrate-zero}. For general classical-quantum channels, finally, Holevo's bound on the right hand side of \eqref{eq:HolevoUpLowBound} is obtained with the same procedure by using the bound $d_{\text{C}}(m,m')\leq 2d_{\text{F}}(m,m')$. 

The proof of Berlekamp's bound \eqref{eq:zero-rate}  and \eqref{eq:berlek-general} for general classical and classical-quantum channels, respectively, is instead more complicated (see the proof of \cite[Th. 4]{shannon-gallager-berlekamp-1967-2}) and it relies heavily on the fact that the number of codewords $M$ can be made as large as desired. Roughly speaking, it is a combinatorial result on possible joint compositions of pairs of codewords extracted from arbitrarily large sets. The interested reader can check that the truly remarkable result in the proof of \cite[Th. 4]{shannon-gallager-berlekamp-1967-2} is a characterization of the joint compositions between codewords, which implies that the asymmetries of the functions $\mu_{\myin,\mysin}(s)$ can be somehow ``averaged'' due to the many possible pairs of codewords that can be compared (see \cite[eqs. (1.40)-(1.45), (1.54)]{shannon-gallager-berlekamp-1967-2} and observe that only the additivity of $\mu(s)$ and the fact that $\mu_{\myin,\mysin}'(1/2)=-\mu_{\mysin,\myin}'(1/2)$ are used).

We can now consider the low rate upper bound to the reliability function derived in \cite{blahut-1977}. Blahut considers the class of non-negative definite channels mentioned in Section \ref{sec:Umbrella-classical}.
For these channels, Blahut  \cite[Th. 8]{blahut-1977} shows that it is possible to relate the smallest  Bhattacharyya distance between codewords of a constant composition code for a given positive rate $R$ to a function $E_{\text{U}}(R)$ which he defines as
\begin{equation*}
E_{\text{U}}(R)=\max_{P} \min_{\hat{P}\in \mathcal{P}_R(P)}\\
\left[ -\sum_{\myin,\myin_1,\myin_2} P(\myin)\hat{P}_{\myin}(\myin_1) \hat{P}_{\myin}(\myin_2) \log 
\sum_\myout \sqrt{\myCh{\myin_1}{\myout}\myCh{\myin_2}{\myout}} 
\right],
\end{equation*}
where
\begin{equation*}
\mathcal{P}_R(P)=\left\{  \hat{P} | I(P;\hat{P})\leq R, \sum_\myin P(\myin) \hat{P}_{\myin}(\mysin)=P(\mysin) \right\}
\end{equation*}
and $I(P;\hat{P})$ is the mutual information between a variable with marginal $P$ and another variable with conditional distribution $\hat{P}_\myin$ given that the first variable is $\myin$. Blahut then derives an upper bound \cite[Sec. VI]{blahut-1977} on $E(R)$ by bounding the probability of error between codewords using the Bhattacharyya distance. 
For pairwise exchangeable channels, the bound states that $E(R)\leq E_{\text{U}}(R)$. For general non pairwise exchangeable channels, in this author's opinion\footnote{The application of the theorem to general channels is, in this author's opinion, not correct. This point would bring us too far, and it will thus
 be discussed in a separate note.}, the bound would need to be modified in the form $E(R)\leq 2 E_{\text{U}}(R)$.

Bringing this into the classical-quantum setting, redefining 
$E_{\text{U}}(R)$ using the expression $\Tr\sqrt{S_\myin}\sqrt{S_\mysin}$ in place of $\sum_\myout\sqrt{\myCh{\myin}{\myout}\myCh{\mysin}{\myout}}, $
 Blahut's bound $E(R)\leq E_{\text{U}}(R)$ can be applied for example to all non-negative definite pure-state channels and it can be applied to all non-negative channels with the correcting coefficient 2 in the form $E(R)\leq 2E_{\text{U}}(R)$. If the coefficient $2$ makes the bound much looser than the straight line bound in the classical case, in the quantum case it results in the best known upper bound to $E(R)$, since no straight line bound has been obtained yet.

Another important observation is that, for the particular case of binary symmetric channels, the Chernoff distance between messages $d_{\text{C}}(m,m')$ is proportional to the Hamming distance between the codewords. Thus, for a given $R$, any upper bound on the minimum Hamming distance between codewords is also an upper bound on the minimum Chernoff distance between messages, which easily implies a bound on $E(R)$. In particular, the reliability function $E(R)$ can be bounded using the JPL bound \cite{mceliece-et-al-1977}. The resulting bounds are tighter than $E_{\text{U}}(R)$ in the classical case as well as in the classical-quantum case.

Finally, it is worth pointing out that, as opposed to the zero-rate bound, deriving a quantum version of the straight line bound seems to be a more complicated task, if even possible. The straight line bound in the classical case, indeed, is proved by exploiting the fact that a decoding decision can always be implemented in two steps by splitting the output sequence in two blocks, applying a list decoding on the first block and a low rate decoding on the second one. In the case of quantum channels, this procedure does not apply directly since the optimal measurements are in general entangled and are not equivalent to separable measurements.

\section{conclusions and future work}
In this paper, we have considered the problem of lower bounding the probability of error in coding for discrete memoryless classical and classical-quantum channels. A sphere packing bound has been derived for the latter, and it was shown that this bound provides the natural framework for including Lov\'asz' work into the picture of probabilistic bounds to the reliability function.  An umbrella bound has been derived as a first example of use of the sphere packing bound applied to auxiliary channels for bounding the reliability of channels with a zero-error capacity. Additional side results have been obtained showing that interesting connections exist between classical channels and pure-state channels. We believe that there is much room for improvements, over the bounds derived in this paper, by means of known techniques already used with success in related works. There are (at least) three important questions that should be addressed by next works in this direction. The first is the possibility of finding a smooth connection between the sphere packing bound and $\vartheta$, as indicated in Fig. \ref{fig:umbrella}. The second is the possibility of including Haemers' bound to $C_0$ into the same picture, in order to obtain a bound to $C_0$ that is more general than both Lov\'asz' and Haemers' ones. The third important question to address is whether it is possible to extend the straight line bound to classical-quantum channels. This would give very good bounds to $E(R)$ in the low rate region for all channels without a zero-error capacity.

\section{Acknowledgements}
I would like to thank Alexander Schrijver for allowing me to include Theorem \ref{th:schrijver} and Alexander Holevo and the anonymous reviewers for useful comments which have greatly improved the quality of the paper.
I am indebted with Algo Car\`e for useful comments on Section \ref{sec:classical-pure}, and with Alexander Barg for his comments on Section \ref{sec:Umbrella}.
I thank Robert Gallager, Elwyin Berlekamp, Richard Blahut,  Andreas Winter and Imre Csisz\'ar for helping me to understand the historical context of the topic. 
Finally, I give special thanks to Pierangelo Migliorati for stimulating my interest on the cutoff rate and on the reliability function. I thank him, Riccardo Leonardi and all my colleagues at the University of Brescia for their support.
\appendices
\section{Continuity of $R_n(s,P,F_s)$ }
\label{appendix:cont_R}

For any $s\in(0,1)$, and probability distribution $P$ on $\myIn$, let
\begin{equation}
A(s,P) = \sum_\myin P(\myin) S_\myin^{1-s},
\label{def:alphasq}
\end{equation}
and call $\mathcal{A}(s)$ the convex set of all such operators when $P$ varies over the simplex of probability distributions. Recall that
\begin{align}
E_0\left(\frac{s}{1-s},P\right) & =   -\log \Tr\left(\sum_\myin P(\myin) S_\myin ^{1-s} \right)^{1/(1-s)}\label{Eminq}\\
& =  -\log \Tr A(s,P)^{1/(1-s)}\label{Eminalpha}.
\end{align}
Let then $P_s$ be a choice of $P$ that maximizes $E_0(s/(1-s),P)$, so that
\begin{equation*}
E_0\left( \frac{s} {1-s},P_{s}\right)=-\log \min_{A\in \mathcal{A}(s)} \Tr A^{1/(1-s)}.
\end{equation*}
Such a maximizing $P_s$ must exist, since $E_0(s/(1-s),P)$ is continuous on the compact set of probability distributions, but need not be unique.

Note now that the state $F_s$ as defined in \eqref{def:fs} is given by
\begin{equation*}
F_s=\frac{A(s,P_s)^{1/(1-s)}}{\Tr A(s,P_s)^{1/(1-s)}}.
\end{equation*}
Our aim is to prove that the function
\begin{equation*}
R_n(s,P,F_s)=- \sum_\myin P(\myin) \left[ \mu_{S_\myin,F_s}(s)+ (1-s) \mu_{S_\myin,F_s}'(s)\right]\\
+\frac{1}{\sqrt{n}}(1-s)\sqrt{2\sum_\myin P(\myin) \mu_{S_\myin,F_s}''(s)} + \frac{1}{n} \log 8.
\end{equation*}
is continuous in $s$. We do this by proving that the density operator $A(s,P_s)$ is a continuous function of $s$, from which the continuity of $F_s$ follows, implying the continuity of $\mu_{S_\myin,F_s}(s)$ and its first two derivatives, by means of the Nussbaum-Szko{\l}a mapping and the relations \eqref{eq:mu'} and \eqref{eq:mu''} when applied to states $S_\myin$ and $F_s$.

First observe that, for fixed $P$, both $A(s,P)$ and $E_0(s/(1-s),P)$ are continuous functions of $s$. 
Assume that $A(s,P_s)$ is not continuous at the point $s=\bar{s}$, where $0<\bar{s}<1$. By definition, there exists a $\delta_1>0$ and a sequence $s_n$ such that $s_n\to \bar{s}$ but $\|A(\bar{s},P_{\bar{s}})-A(s_n,P_{s_n})\|_1>\delta_1$.  By picking an appropriate subsequence if necessary, assume without loss of generality that $P_{s_n}\to \tilde{P}$ so that $A(s_n,P_{s_n})=A(s_n,\tilde{P})+\epsilon_1(n)=A(\bar{s},\tilde{P})+\epsilon_2(n)$ for some vanishing operators $\epsilon_1(n)$ and $\epsilon_2(n)$.
Note that $A(\bar{s},\tilde{P})\in \mathcal{A}(\bar{s})$ and, by construction, $\|A(\bar{s},P_{\bar{s}})-A(\bar{s},\tilde{P})\|_1\geq \delta_1$. Since $A(\bar{s},P_{\bar{s}})$ is the choice of $A$ that minimizes the function $\Tr A^{1/(1-\bar{s})}$ over the convex domain $\mathcal{A}(\bar{s})$, since this function is strictly convex in that domain, and since $A(\bar{s},\tilde{P})\in \mathcal{A}(\bar{s})$ is bounded away from $A(\bar{s},P_{\bar{s}})$ by a constant $\delta_1$, there exists a fixed positive $\delta_2>0$ such that
\begin{equation*}
\Tr A(\bar{s},\tilde{P})^{1/(1-\bar{s})}\geq \Tr A(\bar{s},P_{\bar{s}})^{1/(1-\bar{s})}+\delta_2.
\end{equation*}
But then, 
\begin{align*}
\Tr A(s_n,P_{s_n})^{1/(1-s_n)} & =  \Tr\left( A(\bar{s},\tilde{P})+\epsilon_2(n)\right)^{1/(1-s_n)}\\
& \geq  \Tr A(\bar{s},\tilde{P})^{1/(1-s_n)} - \varepsilon_1(n)\\
& \geq  \Tr A(\bar{s},\tilde{P})^{1/(1-\bar{s})} - \varepsilon_2(n)\\
& \geq   \Tr A(\bar{s},P_{\bar{s}})^{1/(1-\bar{s})}+\delta_2- \varepsilon_{2}(n)\\
& \geq   \Tr A(s_n,P_{\bar{s}})^{1/(1-s_n)}+\delta_2- \varepsilon_{3}(n),
\end{align*}
where $\varepsilon_1(n),\varepsilon_2(n),\varepsilon_{3}(n)$ are vanishing positive functions of $n$. This implies that $P_{s_n}$ is not optimal for $n$ large enough, contrarily to the assumed hypothesis.

\section{Historical note}
\label{sec:App_history}

\begin{quotation}
\emph{``Romanticism aside, however, the history
of science - like Orwell's Big Brother
state - usually writes and rewrites
history to remove inconvenient facts,
mistakes and idiosyncrasies, leaving only
a rationalized path to our present knowledge,
or what historians sometimes call
``whig'' history. In so doing, it not only
distorts the actual course of historical
events but also gives a misleadingly simplistic
picture of the richness of scientific activity ... ''}
\flushright -- Jeff Hughes
\end{quotation}

Since the results of this paper are essentially based on the original proof of the sphere packing bound as given in \cite{shannon-gallager-berlekamp-1967-1}, we believe that some historical comments on that proof may be of particular interest to the reader. It is important to point out, in fact, that even if \cite{shannon-gallager-berlekamp-1967-1} contains the first formal proof of this result, the bound itself had already been accepted before, at least among information theorists at the MIT.

The main idea behind the sphere packing bound was Shannon's \cite{gallager-personal-2012}. Elias proved the bound for the binary symmetric channel in 1955 \cite{elias-1955}, and the bound for general DMC was first stated by Fano \cite[Ch. 9]{fano-book} as an attempt to generalize Elias' ideas to the non-binary case. Fano's proof, however, was not competely rigorous, although it was correct with respect to the elaboration of the many complicated and ``subtle'' equations that allowed him to obtain the resulting expression for the first time. Fano's approach already contained the main idea of considering a binary hypothesis test between some appropriately chosen codewords and a dummy output distribution, and his procedure allowed him to solve the resulting minmax optimization problem with a direct approach which, in this author's opinion, could be defined ``tedious'' but not ``unenlightening'' \cite[pag. 91]{shannon-gallager-berlekamp-1967-1}, and which opened in any case the way that allowed to subsequently obtain the formal proof later on. 

It is not easy to precisely understand, from the published papers, when the formal proof was subsequently obtained and by whom. In fact, even if it was first published in the mentioned 1967 paper \cite{shannon-gallager-berlekamp-1967-1}, the result must have been somehow accepted before, at least at the MIT, since Berlekamp mentions the bound and the main properties of the function $\Esp(R)$ in an overview of the known results on the reliability function in his 1964 PhD thesis \cite[Ch. 1: \emph{Historical Backgound}]{berlekamp-thesis}, with references only to \cite{shannon-1959-gaussian}, \cite{fano-book}, \cite{gallager-1965} ``and others''. Gallager, as well, mentions the bound in his 1965 paper \cite{gallager-1965} attributing the ``statement'' of the bound to Fano (see Section I and eq. (44) therein). Note also that Fano and Gallager do \emph{not} call it ``sphere packing bound'' while Berlekamp does in his thesis.
It is worth pointing out that many results were not published by their authors at that time, as happened for example with the Elias bound for the minimum distance of binary codes, which is described in \cite{shannon-gallager-berlekamp-1967-2}. Analogously, in the introduction to \cite[Ch. 9]{fano-book}, Fano credits Shannon for previous derivation, in some unpublished notes, of parts of the results therein. It is known that Shannon was still very productive during the '60s \cite{berlekamp-newsletter-1998} and had many unpublished results on discrete memoryless channels \cite{berlekamp-personal-2012}. It is therefore not immediately clear which parts of the ideas used in the proof of the sphere packing bound were already known among MIT's information theorists. 
Furthermore even if the resulting expression was stated first by Fano, finding the rigorous proof required a reconsideration of Elias' original work for the binary case \cite{gallager-personal-2012}.

Finally, an important comment concerns the bound for constant composition codes with non-optimal composition. It is worth pointing out that, while Fano's version of the sphere packing bound includes the correct tight expression for the case of fixed composition codes with general non-optimal composition \cite{csiszar-korner-book}, the version given in \cite{shannon-gallager-berlekamp-1967-1} does not consider this case, and the bound is tight only for the optimal composition. The reader may also note that it is not even possible to simply remove the optimization over $P$ in that bound, using $E_0(\rho,P)$ in place of $E_0(\rho)$, since it can be proved that constant composition codes with  non optimal composition $P$ achieve an exponent strictly larger than $E_0(\rho,P)$ at those rates where the maximizing  $\rho$ in the definition of $\Esp(R)$ is less than one \cite{csiszar-korner-book}.

\end{document}